\providecommand{\algorithmname}{Algorithm}
\theoremstyle{plain}
\theoremstyle{definition}
\theoremstyle{plain}
\theoremstyle{definition}
\theoremstyle{plain}
\newtheorem{define}{Definition}
\newtheorem{assumption}{Assumption}
\newtheorem{theo}{Theorem}
\newtheorem{proposition}{Proposition}
\newtheorem{lemma}{Lemma}
\begin{document}
	\captionsetup[figure]{font={small}, name={Fig.}, labelsep=period}
	
	\title{\huge{Decentralized Federated Learning Over Imperfect Communication Channels}}
		\author{Weicai~Li,~Tiejun~Lv, \textit{Senior Member, IEEE},~Wei~Ni, \textit{Fellow, IEEE},  
 Jingbo~Zhao,\\
 Ekram~Hossain, \textit{Fellow, IEEE}, and H. Vincent Poor, \textit{Life Fellow, IEEE}

\thanks{W. Li, T. Lv and J. Zhao are with the School of Information and Communication Engineering, Beijing University of Posts and Telecommunications (BUPT), Beijing 100876, China (e-mail: \{liweicai, lvtiejun, zhjb\}@bupt.edu.cn). 

W.~Ni is with Data61, Commonwealth Science and Industrial Research Organisation (CSIRO), Sydney, New South Wales, 2122, Australia (e-mail: wei.ni@data61.csiro.au). 

E. Hossain is with the Department of Electrical and Computer Engineering, University of Manitoba, Canada (email: ekram.hossain@umanitoba.ca). 

H. V. Poor is with the Department of Electrical and Computer Engineering, Princeton University, Princeton, NJ 08544, USA (email: poor@princeton.edu).}	
	}

	\maketitle
\begin{abstract}
 This paper analyzes the impact of imperfect communication channels on decentralized federated learning (D-FL) and subsequently determines the optimal number of local aggregations per training round, adapting to the network topology and imperfect channels. We start by deriving the bias of locally aggregated D-FL models under imperfect channels from the ideal global models requiring perfect channels and aggregations. The bias reveals that excessive local aggregations can accumulate communication errors and degrade convergence. Another important aspect is that we analyze a convergence upper bound of D-FL based on the bias. By minimizing the bound, the optimal number of local aggregations is identified to balance a trade-off with accumulation of communication errors in the absence of knowledge of the channels. With this knowledge, the impact of communication errors can be alleviated, allowing the convergence upper bound to decrease throughout aggregations.
Experiments validate our convergence analysis and also identify the optimal number of local aggregations on two widely considered image classification tasks. It is seen that D-FL, with an optimal number of local aggregations, can outperform its potential alternatives
by over 10\% in training accuracy. 	
\end{abstract}
	
\begin{keywords}
Decentralized federated learning, imperfect communication channel, convergence analysis.
\end{keywords}

	\section{Introduction}
Different federated learning (FL) architectures, i.e., centralized FL (C-FL)~\cite{9562559}, decentralized FL (D-FL)~\cite{1638541,10038786}, and semi-D-FL~\cite{9838880,9772065} have been proposed to cope with different application scenarios. It is generally difficult for C-FL to scale due to the limited communication capacity and range of a central aggregator~\cite{9562559}. C-FL is also susceptible to a single point of failure. Semi-D-FL integrates hybrid FL, a device-to-server communication paradigm, and device-to-device communication within local clusters, e.g., using the gossip protocol~\cite{1238221} for intra/inter-cluster communications~\cite{9838880} or hierarchical C-FL for democratized machine learning \cite{9772065}. These architectures~\cite{9838880,9772065} may still face a single point of failure at the central server. On the other hand, many distributed systems, e.g., mobile ad-hoc networks, can have dynamic topology and do not have a center capable of communicating with all devices. D-FL can enable local model aggregations at each device and eliminate the need for a central server. It is flexible and scalable for large-scale or disconnected networks with low bandwidth or low computing powers.

A critical challenge arising from D-FL is the impact of local model inconsistency resulting from imperfect/unreliable communication channels and aggregation strategies among the agents on the convergence of D-FL has not been accounted for in the literature. While the convergence and conformity of D-FL could benefit from more aggregations~\cite{XIAO200465}, existing works, e.g.,~\cite{9563232,9772390}, often considered only one aggregation per communication round and could produce models differing substantially among the clients. On the other hand, errors resulting from model transmissions in imperfect channels can accumulate and hinder model convergence. It is important to specify the number of local aggregations with a balanced consideration between model aggregation and error accumulation. 

The quantification of the impact of local model inconsistency on the convergence of D-FL is non-trivial. Specifically, the inputs to the local training of different nodes can differ substantially, resulting from their different positions in the network topology and imperfect channels. As a consequence, the local models of D-FL can further differ, not only hindering the model convergence but making it challenging to analyze the convergence. Existing studies have not considered the joint impact of imperfect channels and aggregation strategies on D-FL. Several studies have only considered imperfect channels on C-FL~\cite{9726793,9435350,9515709}. However, the analysis of C-FL cannot apply to D-FL due to their different network protocols and aggregation methods. Other studies \cite{9772390,10032555,9563232} considered D-FL protocols under assumptions of imperfect channels, but none analyzed the impact of the accumulation of communication errors resulting from multiple local aggregations.

This paper investigates the impact of local model inconsistency resulting from imperfect/unreliable channels and local model aggregation strategies on the convergence of D-FL, where parts of models can be corrupted due to transmission errors and precluded from local model aggregations. The convergence upper bound of D-FL is rigorously analyzed. Insights are drawn to help optimally decide the number of local aggregations per local training round, adapting to the topology and channel conditions of D-FL.

The key contributions of the paper are as follows.
\begin{itemize}
    \item We consider a new scenario of D-FL under imperfect channels, where the initialization of local training is designed to allow different devices' models to converge after training.
    
    \item We derive the bias of locally aggregated D-FL models under imperfect channels from the ideal global model, and analyze the convergence upper bound of D-FL under imperfect channels based on the bias.
    
    \item  We reveal that increasing local iterations facilitates convergence in imperfect channels, while increasing local aggregations entails a trade-off with the accumulation of communication errors within the convergence bound.
    
    \item 
    We minimize the convergence upper bound by optimizing the number of local aggregations to restrain the impact of communication errors when the channel conditions are unknown \textit{a-priori}. On the other hand, the \textit{a-priori} knowledge of the channels can help mitigate the impact so that more local aggregations improve convergence.
    \end{itemize}
Extensive experiments validate the new convergence analysis and accordingly identify the optimal number of local aggregations. Two image classification tasks, using a convolutional neural network (CNN) on the Federated Extended MNIST (F-MNIST) dataset and using the 18-layer residential network (ResNet-18) on the Federated CIFAR100 (F-CIFAR100) dataset, are performed. When the channel conditions are unknown \textit{a-priori}, with the optimal number of local aggregations, D-FL can outperform its potential alternatives, i.e., C-FL and D-FL without optimizing the local aggregation number, by 12.5\% and 10\% in training accuracy, respectively. 

The remainder of this paper is organized as follows. The related works are discussed in Section \ref{section: related work}, followed by the system model in Section \ref{section:system}. In Section \ref{section:convergence analysis}, we analyze the convergence upper bound of D-FL under imperfect channels. In Section \ref{section:problem formulation}, we optimize the number of local aggregations for effective convergence. Experimental results are provided in Section \ref{section:results}, followed by conclusions in Section~\ref{section:con}. 

\textit{Notation:}  $(\cdot)^\dagger$ denotes transpose; $\Vert\cdot\Vert$ is the 2-norm of a vector or a matrix; $\lceil\cdot\rceil$ takes ceiling;  $|\cdot|$ takes cardinality; $\circ$ takes the element-wise multiplication of two vectors or matrices; $\mathbb{E}_{\mathbf{e}}(\cdot)$ and $\mathbb{D}_{\mathbf{e}}(\cdot)$ take element-wise expectation and variance over communication errors, respectively; $\mathbb{E}_{\xi}(\cdot)$ and $\mathbb{E}_{\{\xi_1,\cdots,\xi_K\}}(\cdot)$ take expectations over a mini-batch $\xi$ and $K$ mini-batches $\xi_1,\cdots,\xi_K$, respectively; $\mathbb{E}_{\xi,\mathbf{e}}(\cdot)$ takes expectation over a mini-batch and communication errors; $\mathbb{E}_{\{\xi_1,\cdots,\xi_K\},\mathbf{e}}(\cdot)$ takes expectation over $K$ mini-batches and communication errors.

\section{Related Work}\label{section: related work}

Some studies have focused on the model aggregation of D-FL, typically under ideal communication channels. A fast-linear iteration approach for decentralized averaging under the gossip-based protocol is the most widely used synchronous D-FL scheme to help the locally aggregated parameters among all users approach the global model of C-FL~\cite{XIAO200465}. Several D-FL studies~\cite{9709678,bornstein2022swift,8950073,9838891} allowed clients to train and aggregate models asynchronously. In general, considerably more iterations are required for D-FL to converge.

Some other studies, e.g.,~\cite{9726793,9435350,9515709}, have investigated the adverse effects of unreliable communications on C-FL, but no consideration has been given to D-FL. In \cite{9726793}, the aggregation coefficients were designed for C-FL to reduce the model bias resulting from transmission errors. In~\cite{9435350}, the authors proved that the user transmission success probability, affected by unreliable and resource-constrained wireless channels, is a vital factor for the convergence of C-FL. An optimal resource allocation strategy was proposed to improve the transmission success probability and accelerate convergence.

Other studies have designed communication strategies for D-FL systems. But the impact of {accumulated communication errors resulting from multiple local aggregations} on the convergence of D-FL has been overlooked. In~\cite{9772390}, the transmission between two devices is considered unreliable when their communication rate is lower than a certain threshold. The bandwidth is allocated to minimize a weighted sum of the local model biases caused by the imperfect channels. However, the convergence bound of D-FL is analyzed under the assumption that the transmission errors were negligible. In~\cite{10032555}, a user datagram protocol with erasure coding and retransmission was employed to recover content lost during transmissions at the cost of delays and signaling overhead. The authors of~\cite{9660377} designed a model compression method to lower the requirements for bandwidth and storage resources of edge devices, and hence improve the system communication efficiency. However, the model compression method would distort the local models and compromise training accuracy.

\section{System Model and Assumptions}\label{section:system}
We consider a D-FL system, where individual devices aggregate their models in a fully decentralized manner. In what follows, we describe its workflow and channel model. 

\subsection{D-FL Model}

\begin{algorithm} 
\caption{The proposed D-FL protocol}
\textbf{Input:} $\{\mathcal{D}_n, \forall n\}$, $\eta$, $I$, $J$.

\textbf{Initialization:} $\{\boldsymbol{\omega}_{n,0}^1, \forall n\}$, $t=1$.

\begin{algorithmic}[1]
 \FOR{$t=1,2,\cdots$}
    \STATE Update $\mathcal{G}^t$ and $\mathbf{C}^t$.
    \FOR{each device $n\in\mathcal{N}$} 
    \STATE \% \textit{Execute $I$ local training iterations}
    \FOR{$i=1,\cdots,I$}
    \STATE Randomly select a mini-batch $\xi_{n,i}^t$  to train and update its local model by $\boldsymbol{\omega}^{t}_{n,i}=\boldsymbol{\omega}^{t}_{n,i-1}-\eta\mathbf{g}_{\xi,n,i-1}^t$.
    \ENDFOR
    \STATE \% \textit{Execute $J$ local model aggregations}
    \STATE Set $\mathbf{w}_{n,0}^t=Np_n\boldsymbol{\omega}_{n,I}^t$.   
    \FOR{$j=1,\cdots,J$}
    \STATE Transmit ${\mathbf{w}}_{n,j-1}^{t}$ and receive ${\mathbf{w}}_{m,j-1}^{t}$ from its neighboring devices $m$ with $(n,m)\in\mathcal{E}^t$. 
    \STATE  Conduct an imperfect local aggregation to obtain $ {\mathbf{w}}_{n,j}^{t}=\sum_{\forall m}c^{t}_{n,m}(\mathbf{e}^{t}_{n,m,j}\circ{\mathbf{w}}_{m,j-1}^{t}).$
    \ENDFOR
    \STATE Set $\boldsymbol{\omega}_{n,0}^{t+1}=\mathbf{w}_{n,J}^t$ for the next training round.
     \ENDFOR
      \ENDFOR
\end{algorithmic}
\end{algorithm}

\begin{figure}[t]
		\centering{}
		\includegraphics[scale=0.39]{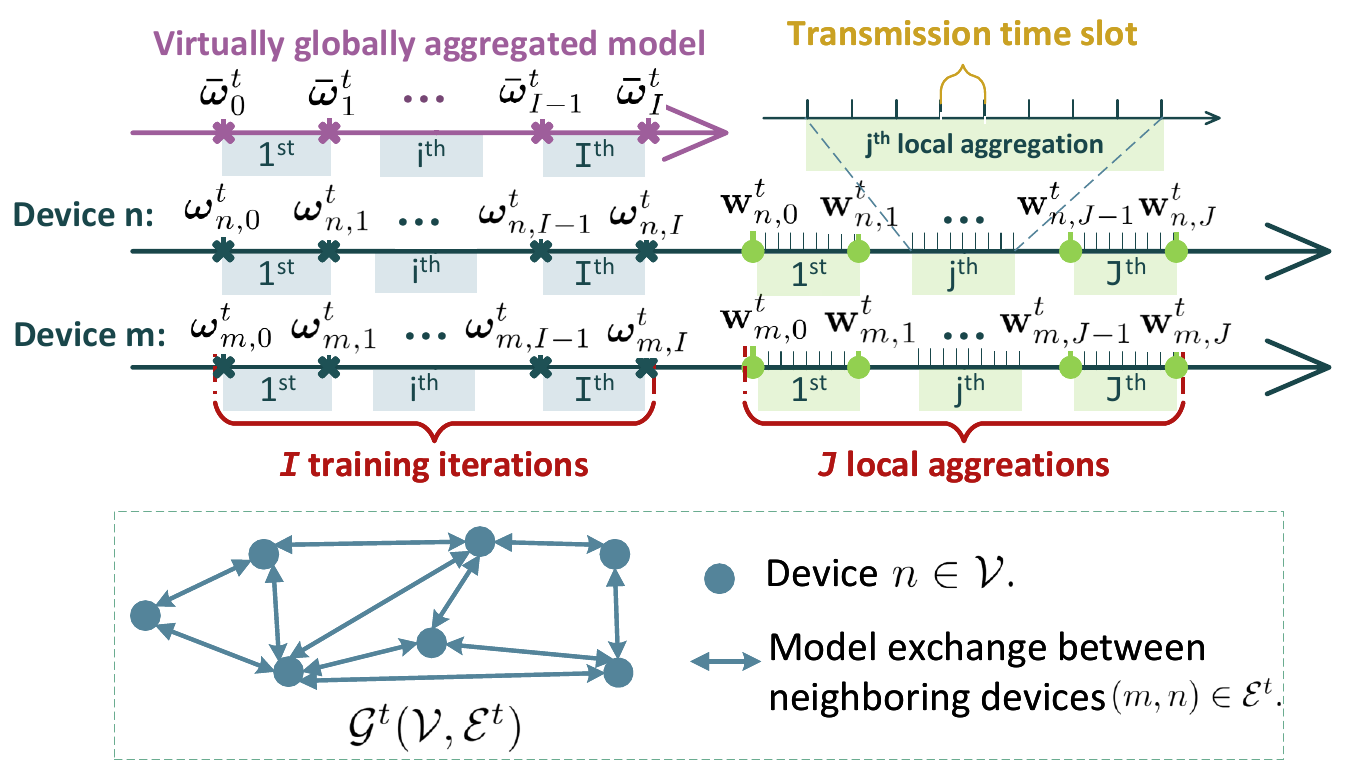}
		\caption{The timeline of the $t$-th training round of D-FL.}
		\label{fig:timeline1}
	\end{figure}
The considered D-FL system has $N$ edge devices and no central point. The topology of the network may change between training rounds, e.g., due to the mobility of the devices, but remains unchanged in a round. The topology can be given by an undirected graph $\mathcal{G}^t(\mathcal{V}, \mathcal{E}^t)$ with $\mathcal{V} = \{1,\cdots,N\}$ being the set of devices and  $\mathcal{E}^t= \{(n, m),\forall n,m \in\mathcal{V}\}$ being the set of edges if devices $n$ and $m$ are directly connected~\cite{10044204}. Assume that there is no self-loop, i.e., $(n, n) \notin \mathcal{E}^t,\, \forall n\in \mathcal{V}$. Device $ n$ owns a dataset ${{\cal D}}_n$ with $D_n=|{{\cal D}}_n|$ data points. The local loss function of the device is  
\begin{align}
 F_n(\boldsymbol{\omega})=\frac{1}{D_n}{\sum}_{d\in{\cal D}_{n}} {f}(d,\boldsymbol{\omega}),
\end{align}
where $\boldsymbol{\omega}\in {\mathbb{R}^{M\times1}}$ denotes the model parameter, $d \in {{\cal D}}_n$ is a data sample, and ${f}(d,\boldsymbol{\omega})$ is the loss function of $d$ and $\boldsymbol{\omega}$. 

The objective of the D-FL system is to  minimize the global loss function $\underset{\boldsymbol{\omega}\in \mathbb{R}^{M\times1}}{\min} F(\boldsymbol{\omega})$, and find the optimal model parameter $\boldsymbol{\omega}^*=\underset {\boldsymbol{\omega}\in \mathbb{R}^{M\times1}}{ \arg\, \min}\,F(\boldsymbol{\omega})$, where the global loss function $F(\boldsymbol{\omega})$ is the weighted loss of all devices, as given by 
 \begin{align}
      F(\boldsymbol{\omega})={\sum}_{n\in \mathcal{V}} p_nF_n(\boldsymbol{\omega}),\label{FW}
 \end{align} 
where $p_n=\frac{D_n}{{\sum}_{\forall n} D_n}$ is the weighting coefficient of device $n$. Let $F_n^*$ denote the minimum value of $F_n(\boldsymbol{\omega})$ and $F^*$ denote the minimum value of $F(\boldsymbol{\omega})$.

The D-FL system operates on the basis of synchronized training rounds. Each device conducts local training, model transmission, and model aggregation in a round. This is different from C-FL, which requires a central point to aggregate the local models and distribute the global models~\cite{pmlr-v54-mcmahan17a,DBLP}. The workflow in a round of D-FL is summarized in \textbf{Algorithm~1} and depicted in Fig. \ref{fig:timeline1}.

\subsubsection{Local Model Training}
In the $t$-th ($t = 1,2,\cdots$) round, each device $n$ trains $I$ iterations locally based on its local dataset ${\cal D}_n$ using {stochastic gradient descent (SGD)}. The local training starts with its locally aggregated model of the $(t-1)$-th training round. During the local training, the local model of the $i$-th iteration, denoted by $\boldsymbol{\omega}^{t}_{n,i}$, is updated by 
\begin{align}\label{epoch_imperfect}
\boldsymbol{\omega}^{t}_{n,i}=\boldsymbol{\omega}^{t}_{n,i-1}-\eta\mathbf{g}_{\xi,n,i-1}^t,\forall i=1,\cdots,I,  
\end{align}
where $\eta$ is the learning rate, $\mathbf{g}_{\xi,n,i}^t=\nabla F_n(\boldsymbol{\omega}^{t}_{n,i},\xi_{n,i}^{t})$ denotes the stochastic gradient of device $n$ with $\xi_{n,i}^t$ being a mini-batch sampled from the local dataset of the device in the $i$-th iteration of the $t$-th round, and $\nabla F_n(\boldsymbol{\omega}^{t}_{n,i},\xi^{t}_{n,i})=\frac{1}{\left|\xi^{t}_{n,i}\right|} \sum_{d\in \xi^{t}_{n,i}} \nabla f\left(d, \boldsymbol{\omega}^{t}_{n,i}\right),\,\forall  \xi^{t}_{n,i}\subseteq\mathcal{D}_n$. $\boldsymbol{\omega}^{t}_{n,0}$ is the starting local training model for the $t$-th round. The output of the local training in the round is $\boldsymbol{\omega}^{t}_{n,I}$. $\boldsymbol{\omega}^{t}_{n,0}=\mathbf{w}_{n,J}^{t-1}$, where $\mathbf{w}_{n,J}^{t-1}$ is the output of the local aggregations in the $(t-1)$-th round. 

Let $\boldsymbol{\Omega}_i^{t}\in \mathbb{R}^{N\times M}$ collect the training results of all devices at the $i$-th iteration of the $t$-th round:
\begin{align}
\boldsymbol{\Omega}_i^{t}=\left[\boldsymbol{\omega}^{t}_{1,i},\cdots,\boldsymbol{\omega}^{t}_{N,i}\right]^{\dagger}.\label{aver_modeL_parameter}
\end{align}
If these results could be globally aggregated like C-FL, the (virtually aggregated) ``global'' model of D-FL is written as
\begin{align}\left(\boldsymbol{\bar{\omega}}_i^{t}\right)^{\dagger}={\sum}_{\forall n}p_{n}\left(\boldsymbol{\omega}_{n,i}^{t}\right)^{\dagger}=\mathbf{p}\boldsymbol{\Omega}_i^{t},
\label{imprecise x}
\end{align}
where $\mathbf{p}=[p_1,\cdots,p_N]$. With reference to C-FL, we wish the aggregated model of D-FL to be consistent across all devices at the end of any round, i.e., $\boldsymbol{\bar{\omega}}_I^{t}=\mathbf{p}\boldsymbol{\Omega}_I^{t}$. This is difficult due to the distributed nature of D-FL and imperfect channels.

\subsubsection{Local Model Aggregation}\label{section3}
After local training, each device conducts $J$ local aggregations in the round. Every device is allocated $J$ time slots for broadcasting its local (or locally aggregated) models without collision; see Fig. \ref{fig:timeline1}. After every local aggregation, each device aggregates its own model and models received in the slots. The device proceeds to broadcast its locally aggregated model if there are more slots in the round, or resumes to train its model based on the aggregated model in the next round. The impact of models learned at a device multiple hops away can be captured through successive aggregations at the devices hop-by-hop from the distant device. Instead of directly aggregating models from distant devices, each device along the path performs its own aggregation with its neighboring devices.

Let $\mathbf{x}^{t}_{n,j}$ denote the locally aggregated model of device $n$ after the $j$-th local aggregation in the $t$-th training round, and 
\begin{align}\label{eq: consensus aggregation}
    \mathbf{x}_{n,j}^{t}={\sum}_{\forall m}c^{t}_{n,m}\mathbf{x}_{m,j-1}^{t}, \, j=1,\cdots,J,
\end{align}
where $c^{t}_{n,m}$ denotes the coefficient that device $n$ uses to combine the model received from device $m$ in the $t$-th round. $\mathbf{x}^{t}_{n,0}$ initializes the local aggregation of device $n$ in the round. The output of device $n$ in the round is $\mathbf{x}^{t}_{n,J}$.

Define $\mathbf{X}^{t}_{j}=\left[\mathbf{x}^{t}_{1,j},\cdots,\mathbf{x}^{t}_{N,j}\right]^{\dagger}\in\mathbb{R}^{N\times M},\, j=0,1,\cdots,J$, and define $\mathbf{C}^{t}=\{c_{n,m}^t\}_{N\times N}$ as the consensus matrix that collects all coefficients for local model aggregations in the $t$-th round. It is important to achieve consistent local models across all devices upon convergence. The consensus matrix $\mathbf{C}^t$ is designed to achieve the consistency~\cite{XIAO200465}. In the case where the channel conditions of the network are unknown \textit{a-priori}, according to~\cite{XIAO200465}, the $(n,m)$-th element of $\mathbf{C}^{t}$ is given by
\begin{equation}\label{C_r_definition}
\! c_{n,m}^t\!\! = \!\!  \begin{cases}
     \alpha^t,&\text{if $(n,m)\in\mathcal{E}^t$;}\\
    1\! -\! d^t_n\alpha^t,\!\!&\text{if } n=m;\\
    0,&\text{otherwise},
\end{cases}
\end{equation}%
where $d_n^t$ is the number of neighbors that device $n$ has in the $t$-th round. Given the topology in the round, and $\alpha^t$ is a constant and can be found by minimizing the spectral norm of $\mathbf{C}^t-\frac{{\mathbf{1}_N^{\dagger}\mathbf{1}_N}}{N}$; please refer to~\cite{XIAO200465}. Then, 
\begin{align}		\mathbf{X}_{j}^{t}=\mathbf{C}^{t}\mathbf{X}_{j-1}^{t}=\cdots=(\mathbf{C}^{t})^{j}\mathbf{X}_{0}^{t}.\label{local_c_parameter_i}
\end{align}
The matrix $\mathbf{C}^{t}$ depends on the network topology and satisfies the two constraints~\cite{XIAO200465}:
\begin{align}
\mathbf{1}_N\mathbf{C}^{t}=\mathbf{1}_N ,\;\;\mathbf{C}^{t}\mathbf{1}_N^{\dagger}=\mathbf{1}_N^{\dagger},\label{Cr_constraints}
\end{align}
which indicates the sum of any row or column of $\mathbf{C}^{t}$ is~$1$; i.e., $\mathbf{C}^{t}$ is doubly stochastic~\cite{mirsky1963results}. Hence, the $j$-th power of $\mathbf{C}^{t}$, i.e., $\left(\mathbf{C}^{t}\right)^{j}, j\in \mathbb{N}_{+}$, can converge as $j\rightarrow \infty$~\cite{XIAO200465}:
  \begin{align}
       \underset{j\rightarrow\infty,\,j\in \mathbb{N}_{+}}{\lim}&\left(\mathbf{C}^{t}\right)^{j}=\frac{\mathbf{1}_N^{\dagger}{\mathbf{1}_N}}{N}.\label{consensus_matrix:a}
  \end{align}

The initialization of D-FL is also critical for the convergence of each round and, subsequently, the convergence of D-FL.  

\begin{proposition}
To allow the locally aggregated models of D-FL to converge to $ \boldsymbol{\bar{\omega}}_I^{t}$, the local model of device $n$ used for local aggregations in the $t$-th round is initialized by $\mathbf{x}_{n,0}^{t}=Np_{n}\boldsymbol{\omega}_{n,I}^{t} $; or in other words, 
\begin{align}
\notag\mathbf{X}_{0}^{t}&=\left[\mathbf{x}_{1,0}^{t},\cdots,\mathbf{x}_{N,0}^{t}\right]^{\dagger}=\left[Np_{1}\boldsymbol{\omega}_{1,I}^{t},\cdots,Np_{N}\boldsymbol{\omega}_{N,I}^{t}\right]^{\dagger}\\&=N\mathrm{diag}(\mathbf{p})\boldsymbol{\Omega}_I^{t},\label{wr0}
\end{align}
where $\mathrm{diag}(\mathbf{p})$ is the diagonal matrix with $\mathbf{p}$ along diagonal. 
\end{proposition}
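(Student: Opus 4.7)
The plan is to exploit the linear recursion \eqref{local_c_parameter_i} together with the limiting behaviour \eqref{consensus_matrix:a} of powers of the doubly stochastic matrix $\mathbf{C}^t$, and then verify that the specific row scaling $N p_n$ produces the correct average across devices.

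First, I would recall from \eqref{local_c_parameter_i} that $\mathbf{X}_j^t=(\mathbf{C}^t)^j\mathbf{X}_0^t$, so that
\begin{align}
\underset{j\rightarrow\infty}{\lim}\mathbf{X}_{j}^{t}
=\Bigl(\underset{j\rightarrow\infty}{\lim}(\mathbf{C}^{t})^{j}\Bigr)\mathbf{X}_{0}^{t}
=\frac{\mathbf{1}_N^{\dagger}\mathbf{1}_N}{N}\mathbf{X}_{0}^{t},
\end{align}
which follows from \eqref{consensus_matrix:a} and the constraints \eqref{Cr_constraints}. The operator $\tfrac{1}{N}\mathbf{1}_N^{\dagger}\mathbf{1}_N$ acts on an $N\times M$ matrix by replacing every row with the arithmetic mean of its rows; hence, regardless of the initialization, all devices converge to a common model equal to the column-wise average of $\mathbf{X}_0^t$.

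Next, I would substitute the proposed initialization $\mathbf{X}_0^{t}=N\,\mathrm{diag}(\mathbf{p})\boldsymbol{\Omega}_I^{t}$ and compute the limit directly:
\begin{align}
\underset{j\rightarrow\infty}{\lim}\mathbf{X}_{j}^{t}
&=\frac{\mathbf{1}_N^{\dagger}\mathbf{1}_N}{N}\cdot N\,\mathrm{diag}(\mathbf{p})\boldsymbol{\Omega}_I^{t}\notag\\
&=\mathbf{1}_N^{\dagger}\bigl(\mathbf{1}_N\mathrm{diag}(\mathbf{p})\bigr)\boldsymbol{\Omega}_I^{t}
=\mathbf{1}_N^{\dagger}\mathbf{p}\,\boldsymbol{\Omega}_I^{t}.
\end{align}
By the definition in \eqref{imprecise x}, $\mathbf{p}\boldsymbol{\Omega}_I^{t}=(\bar{\boldsymbol{\omega}}_I^{t})^{\dagger}$, so every row of $\lim_{j\to\infty}\mathbf{X}_j^t$ equals $(\bar{\boldsymbol{\omega}}_I^{t})^{\dagger}$; equivalently, $\mathbf{x}_{n,j}^{t}\to\bar{\boldsymbol{\omega}}_I^{t}$ for every device $n$, which is precisely the desired consensus.

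There is no serious obstacle here; the only subtle point, which I would highlight for intuition, is why the scaling factor must be exactly $Np_n$ (and not, say, $p_n$). Because the limit operator $\tfrac{1}{N}\mathbf{1}_N^{\dagger}\mathbf{1}_N$ computes an \emph{unweighted} mean over the $N$ rows, but we want the consensus value to be the $\mathbf{p}$-weighted mean of the local models $\boldsymbol{\omega}_{n,I}^{t}$, each row must be pre-scaled by $N p_n$ so that the unweighted average $\tfrac{1}{N}\sum_{n}Np_n\boldsymbol{\omega}_{n,I}^{t}=\sum_{n}p_n\boldsymbol{\omega}_{n,I}^{t}=\bar{\boldsymbol{\omega}}_I^{t}$ is recovered. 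This compensating scaling is what makes the proposed initialization both sufficient and, up to a common factor, necessary for the aggregated models to converge to $\bar{\boldsymbol{\omega}}_I^{t}$.
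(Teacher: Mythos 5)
Your proposal is correct and follows essentially the same route as the paper: multiply the limit $\lim_{j\to\infty}(\mathbf{C}^t)^j=\frac{\mathbf{1}_N^{\dagger}\mathbf{1}_N}{N}$ by the initialization $\mathbf{X}_0^t=N\mathrm{diag}(\mathbf{p})\boldsymbol{\Omega}_I^t$ and observe that every row of the result equals $(\bar{\boldsymbol{\omega}}_I^t)^{\dagger}$ via $\mathbf{p}\boldsymbol{\Omega}_I^t=(\bar{\boldsymbol{\omega}}_I^t)^{\dagger}$. Your added remark on why the factor must be $Np_n$ (to convert the unweighted row average into the $\mathbf{p}$-weighted mean) is a useful clarification but does not change the argument.
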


\begin{proof} 
By multiplying $\mathbf{X}_{0}^{t}$ to both sides of \eqref{consensus_matrix:a}, we have
\begin{align}
     \underset{j\rightarrow\infty}{\lim}\left(\mathbf{C}^{t}\right)^{j}\!\mathbf{X}^{t}_{0}\!\!=\!\!\frac{\mathbf{1}_N^{\dagger}{\mathbf{1}_N}}{N}\mathbf{X}^{t}_{0}\!\!=\!\!\mathbf{1}_N^{\dagger}\mathbf{p}\boldsymbol{\Omega}_I^{t}\!\!=\!\!\left[\boldsymbol{\bar{\omega}}_I^{t},\cdots,\boldsymbol{\bar{\omega}}_I^{t}\right]^{\dagger}.\end{align}
\end{proof}

\subsection{Communication Model} 
The devices send their model parameters, which are discretized and stored in bits, and packetized and modulated for transmissions. The size of a model comprising $M$ real-valued elements is $32 M$ bits (i.e., using ``float32''). We segment the model into $\left\lceil \frac{ M}{L_\mathrm{p}} \right\rceil $ packets with $L_\mathrm{p}$ elements per packet. Given the bit error rate (BER) $\epsilon_{\mathrm{B},n,m}^{t}$ over the direct link $(n,m)\in\mathcal{E}^t$, the packet error rate (PER), $\epsilon^t_{\mathrm{P},n,m}=1-(1-\epsilon_{\mathrm{B},n,m}^{t})^{32 L_{\mathrm{p}}}$, is the probability of losing a packet over the link. This is due to the fact that a cyclic redundancy check (CRC) is typically performed at a packet level. When a packet fails to pass CRC, it is often dropped in its entirety. The $j$-th locally aggregated model of device $n$ in the $t$-th round in \eqref{eq: consensus aggregation} is updated as 
 \begin{align}
    {\mathbf{w}}_{n,j}^{t}={\sum}_{\forall m}c^{t}_{n,m}(\mathbf{e}^{t}_{n,m,j}\circ{\mathbf{w}}_{m,j-1}^{t}), \, j=1,\cdots,J,\label{imperfect_wic}
 \end{align}
where $\mathbf{e}^{t}_{n,m,j}\in \mathbb{R}^{M\times1} $ collects the reception qualities of device $m$'s packets at device $n$, and is defined as
\begin{align}\label{error1}
    \mathbf{e}^{t}_{n,m,j}&\!\big[(l\!-\!1)L_{\mathrm{p}}\!+\!1:lL_{\mathrm{p}}\big]\! \!=\! \!
    \begin{cases}
       \! \multirow{2}{*}{$\mathbf{1}_{ L_{\mathrm{p}}}$,} \!&\text{if the $l$-th packet}\\
      & \text{is error-free;}\\
       \!  \mathbf{0}_{ L_{\mathrm{p}}},\!&\text{otherwise.}
    \end{cases}%
\end{align}%
Here, the probability of device $n$ correctly receiving the $l$-th packet of device $m$'s model in the $t$-th round is $1-\epsilon_{\mathrm{P},n,m}^t$. $\mathbf{e}^{t}_{n,m,j}$ and ${\mathbf{w}}_{m,j-1}^{t}$ have the same dimension. $\mathbf{e}^{t}_{n,m,j}\circ{\mathbf{w}}_{m,j-1}^{t}$ gives the correctly received model parameters from device $m$ to device $n$. Define $\mathbf{W}^{t}_{j}=\left[\mathbf{w}^{t}_{1,j},\cdots,\mathbf{w}^{t}_{N,j}\right]^{\dagger}\in\mathbb{R}^{N\times M},\, j=0,1,\cdots,J$. The initial local model for aggregations is ${\mathbf{w}}_{n,0}^{t}={\mathbf{x}}_{n,0}^{t}$ in the $t$-th round. $\mathbf{W}_{0}^{t}=\mathbf{X}_{0}^{t}$. 

When the channel conditions of the network in the $t$-th round, denoted by $\mathbf{T}^{t}$, are known \textit{a-priori}, the consensus matrix $\mathbf{C}^{t}$ can be adjusted, so that the models prone to errors are aggregated with larger aggregation coefficients, while the mean of the sum of the aggregation coefficients remains one at each device. The impact of imperfect channels on the local aggregation is compensated for. Here, $\mathbf{T}^{t}$ is defined as
\begin{small}
 \begin{align}\label{T^r}
\mathbf{T}^{t} =\left[\begin{array}{ccc}
1-\epsilon_{1,1}^{t} & \cdots & 1-\epsilon_{1,N}^{t}\\
\vdots & \ddots & \vdots\\
1-\epsilon_{N,1}^{t} & \cdots & 1-\epsilon_{N,N}^{t}
\end{array}\right].%
\end{align}%
\end{small}%
This can be achieved by making $\mathbf{C}^{t}\circ\mathbf{T}^{t}$ doubly stochastic with the $(n,m)$-th element of $\mathbf{C}^{t}$ specified by 
\begin{equation}\label{CoT_r_definition}
\! c_{n,m}^{t}\!\! = \!\!  \begin{cases}
\frac{\alpha^{t}}{1-\epsilon_{\mathrm{P},n,m}^{t}},&\text{if $(n,m)\in\mathcal{E}^{t}$;}\\
{  1\! -\! d_n^{t}\alpha^{t}},\!\!&\text{if $n=m$};\\
0,&\text{otherwise},
\end{cases}
\end{equation}%
where $\alpha^{t}$ can be found by minimizing the spectral norm of $\big(\mathbf{C}^{t}\circ\mathbf{T}^{t}-\frac{{\mathbf{1}_N^{\dagger}\mathbf{1}_N}}{N}\big)$.

  \section{Convergence analysis of D-FL under imperfect channel}\label{section:convergence analysis}
In this section, we analyze the convergence of D-FL under imperfect channels, starting with the following assumptions and definitions.

\begin{assumption}\label{assumption}
Widely considered assumptions~\cite{9264742,boyd2004convex,9770266,pmlr-v130-ruan21a,stich2018local,pmlr-v119-koloskova20a} for analyzing the bounds of FL are also considered here for analyzing the convergence bound of D-FL:
\begin{enumerate}
\item $\exists \,L>0$, the local FL objective $F_n(\boldsymbol{\omega}),\forall n$ is $L$-smooth, i.e., $\left\Vert \nabla F_{n}\left(\boldsymbol{\omega}_{1}\right)\!-\!\nabla F_{n}\left(\boldsymbol{\omega}_{2}\right)\right\Vert\! \leq\! L\left\Vert \boldsymbol{\omega}_{1}\!-\!\boldsymbol{\omega}_{2}\right\Vert $, $\forall \boldsymbol{\omega}_{1},\boldsymbol{\omega}_{2}\in \mathbb{R}^{M\times1}$. The global FL objective $F(\boldsymbol{\omega})$ is also $L$-smooth.

\item $\exists \,\mu>0$, the local FL objective $F_n(\boldsymbol{\omega}),\forall n$ is $\mu$-strongly convex, i.e., $\left\Vert \nabla F_{n}\left(\boldsymbol{\omega}_{1}\right)\!-\!\nabla F_{n}\left(\boldsymbol{\omega}_{2}\right)\right\Vert\! \geq\! \mu\left\Vert \boldsymbol{\omega}_{1}\!-\!\boldsymbol{\omega}_{2}\right\Vert $, $\forall \boldsymbol{\omega}_{1},\boldsymbol{\omega}_{2}\in \mathbb{R}^{M\times1}$. The global FL objective $F(\boldsymbol{\omega})$ is $\mu$-strongly convex.

\item $\exists \, G>0$, $\mathbb{E}_{\xi_{n,i}^{t}}\big(\Vert\mathbf{g}_{\xi,n,i}^t\Vert^2\big)\leq G^2,\,\forall i, n, t$~\cite{stich2018local}.

\item The variances of stochastic gradients are bounded, i.e., ${\mathbb{E}_{\xi_{n,i}^{t}}}\big(\Vert \mathbf{g}_{\xi,n,i}^t- \mathbf{g}_{n,i}^t\Vert^2\big)\leq \sigma^2_n, \forall  n,i,t$,~\cite{9770266,pmlr-v130-ruan21a}, where $\mathbf{g}^{t}_{n,i}=\nabla F_{n}(\boldsymbol{\omega}^{t}_{n,i})$ is the full-batch gradient of device $n$ in the $i$-th iteration of round $t$. $\mathbf{g}^{t}_{n,i}={\mathbb{E}_{\xi_{n,i}^{t}}}( \mathbf{g}_{\xi,n,i}^t)$.
\end{enumerate}
In practice, we can estimate the minimum and maximum of $\frac{\left\Vert \nabla F_{n}\left(\boldsymbol{\omega}_{1}\right)\!-\!\nabla F_{n}\left(\boldsymbol{\omega}_{2}\right)\right\Vert }{\left\Vert \boldsymbol{\omega}_{1}\!-\!\boldsymbol{\omega}_{2}\right\Vert }$ empirically, given a dataset and ML model~\cite{8664630}. $L$ is the maximum. $\mu$ is the minimum.
\end{assumption}

\begin{define}\label{def-divergence}
Under imperfect channels:
\begin{enumerate}
\item Define the distance between the ``global'' model of D-FL at the $i$-th iteration of the $t$-th round, i.e., $ \boldsymbol{\bar{\omega}}_i^{t}$, $i=1,\cdots,I$, and the global optimum, $\boldsymbol{\omega}^{*}$, as
\begin{align}
\Delta \boldsymbol{{\omega}}_i^{t}=\left\Vert  \boldsymbol{\bar{\omega}}_i^{t}-\boldsymbol{\omega}^{*}\right\Vert ^{2}.\label{delta_r}
\end{align}

\item Define the bias between device $n$'s locally aggregated model, $\mathbf{w}_{n,J}^{t}$, and $\boldsymbol{\bar{\omega}}_I^{t}$ in the $t$-th training round as
\begin{align}	\boldsymbol{\varpi}^{t}_{n}=\boldsymbol{\bar{\omega}}_I^{t}-	\mathbf{w}_{n,J}^{t}.\label{Eq:consensus error noise}
\end{align}
\end{enumerate}
Let $\boldsymbol{\Pi}^{t}\in\mathbb{R}^{N\times M}$ collect the biases of all devices in the $t$-th round. From \eqref{wr0},
\begin{align}
\boldsymbol{\Pi}^{t}&=\left[\boldsymbol{\varpi}^{t}_{1},\cdots,\boldsymbol{\varpi}^{t}_{N}\right]^{\dagger}=\left[\boldsymbol{\bar{\omega}}_I^{t},\cdots,\boldsymbol{\bar{\omega}}_I^{t}\right]^{\dagger}-{\mathbf{W}}_{J}^{t}\notag\\&=\frac{\mathbf{1}_N^{\dagger}\mathbf{1}_N}{N}N\mathrm{diag}(\mathbf{p})\boldsymbol{\Omega}_I^{t}-{\mathbf{W}}_{J}^{t}.\label{noise_consensus:c}
\end{align} %

\end{define}

Note that the virtually aggregated global model of D-FL, $ \boldsymbol{\bar{\omega}}_I^{t}$, is different from the global model of C-FL because each round of D-FL starts with the locally aggregated models of individual devices in D-FL, i.e., $\mathbf{w}_{n,J}^{t-1}$ (By contrast, C-FL would start with $\boldsymbol{\bar{\omega}}_I^{t-1}$ across all devices). To prove the convergence of D-FL is, in essence, to prove that $\Delta \boldsymbol{{\omega}}_I^{t}$ (i.e., the distance between $\bar{\boldsymbol{\omega}}_I^t$ and $\boldsymbol{\omega}^*$) decreases over $t$. This is non-trivial under imperfect local aggregations arising from the different initial model parameters $\mathbf{w}_{n,J}^{t-1}, \forall n$ of the devices per round $t$ and communication errors. 
Under \textbf{Assumption \ref{assumption}}, the following one-round convergence upper bound of D-FL is established in imperfect channels.       
\begin{lemma}\label{theo1}
 With random mini-batch sampling and communication errors, the expectation of the distance between the global model of D-FL in the $t$-th round, i.e., $ \boldsymbol{\bar{\omega}}_I^{t}$, and the global optimum of D-FL, i.e., $\boldsymbol{\omega}^*$, is given by
   \begin{subequations}\small
       \begin{align}
    &\mathbb{E}_{\{\xi_{n,i}^{t},{\forall n,i}\},\mathbf{e}}(\Delta\boldsymbol{\omega}_I^{t})\!\!\leq\!\!  \zeta_1\Delta\boldsymbol{{\omega}}_I^{t\!-\!1}\!\!\!+\!\!\zeta_{2}G^2\!\!+\!\!\zeta_3\mathbb{E}_{\mathbf{e}}\!\Big(\!\big\Vert\!{\sum}_{\forall n}\!p_{n}\!\boldsymbol{\varpi}^{t\!-\!1}_{n}\!\big\Vert^{2}\!\Big)\!\!\notag\\&\;+\!\!
\zeta_4\!{\sum}_{\forall n}\!\!p_{n}\!\mathbb{E}_{\mathbf{e}}\!\big(\!\big\Vert\!\boldsymbol{\varpi}_{n}^{t\!-\!1}\!\!\!-\!\!\!{\sum}_{\forall m}\!p_{m}\!\boldsymbol{\varpi}_{m}^{t-\!1}\!\big\Vert^{2}\!\big)\!\!+\!\!\zeta_5\!{\sum}_{\forall n}\!p_{n}\!{\mathbb{E}_{\mathbf{e}}}\big\Vert\mathbf{w}_{n,J}^{t\!-\!1}\!\!-\!\!\mathbf{x}_{n,J}^{t\!-\!1}\!\big\Vert^{2}\notag\\&\;+\!\zeta_6{\sum}_{\forall n}\!\!\eta p_{n}^{2}\!\sigma^2_n\!+\!\zeta_7{\sum}_{\forall n}p_{n}\left\Vert \bar{\boldsymbol{\omega}}_{I}^{t\!-\!1}\!-\!\mathbf{x}_{n,J}^{t\!-\!1}\right\Vert ^{2}\!\!\label{eq:new_expectation}\\
 &\notag\leq\!\!\zeta_{1}\Delta\boldsymbol{{\omega}}_I^{t\!-\!1}\!\!\!+\!\!\zeta_{2}G^{2}\!\!+\!\zeta_{3}\left(\big\Vert\mathbf{p}\mathbb{E}_{\mathbf{e}}(\boldsymbol{\Pi}^{t-1})\big\Vert^{2}\!\!+\!\!\mathbf{p}\mathrm{diag}(\mathbf{p})\mathbb{D}_{\mathbf{e}}(\mathbf{W}_{J}^{t-1})\mathbf{1}_M^{\dagger}\right)\\
 &\notag+\!\zeta_{4}\!\left(\left\Vert \mathrm{diag}(\sqrt{\mathbf{p}}\!\!-\!\!\mathbf{p})\mathbb{E}_{\mathbf{e}}(\boldsymbol{\Pi}^{t\!-\!1})\right\Vert ^{2}\!\!+\!\!(\mathbf{p}\!\!-\!\!\mathbf{p}\mathrm{diag}(\mathbf{p}))\mathbb{D}_{\mathbf{e}}(\mathbf{W}_{J}^{t\!-\!1})\mathbf{1}_M^{\dagger}\right)\\
 &\notag+\zeta_{5}\!\left(\left\Vert \mathrm{diag}(\sqrt{\mathbf{p}})\mathbb{E}_{\mathbf{e}}\!\!\left(\mathbf{W}_{J}^{t\!-\!1}-\mathbf{X}_{J}^{t\!-\!1}\right)\!\right\Vert ^{2}\!\!\!+\!\!\mathbf{p}\mathbb{D}_{\mathbf{e}}(\mathbf{W}_{J}^{t\!-\!1})\mathbf{1}_M^{\dagger}\right)\\
 &+\!\!\zeta_{6}\!\!\sum_{\forall n}\!\!\eta p_{n}^{2}\!\sigma_{n}^{2}\!\!+\!\!\zeta_{7}\Big\Vert \mathrm{diag}(\!\sqrt{\mathbf{p}})\!\Big(\!\frac{\mathbf{1}_N^{\dagger}\mathbf{1}_N}{N}N\mathrm{diag}(\mathbf{p})\boldsymbol{\Omega}_I^{t\!-\!1}\!\!-\!\!{\mathbf{X}}_{J}^{t\!-\!1}\Big)\!\Big\Vert ^{2}\!\!,\!\!\!\label{eq:new_expectationb}
\end{align}
   \end{subequations}%
where $\mathbb{D}_{\mathbf{e}}(\mathbf{W}_{J}^{t\!-\!1}) \in \mathbb{R}^{N\times M}$  collects the element-wise variances of $\mathbf{W}_{n,J}^{t\!-\!1}$, 
$\zeta_1\!=\!(1+\tau_{\epsilon})\left(1-\frac{\mu\eta}{2}\right)^{I}$,
$\zeta_{2}\!=\!(\!2\eta L^{2}\!\!+\!\!L\!\!+\!\!\mu)\!(1\!\!+\!\!\eta)\!\big(\!\frac{(1\!+\!\eta)^{I\!+\!1}\!-\!(1\!+\!\eta)^{2}(1\!-\!\frac{\mu\eta}{2})^{I\!-\!1}}{1\!\!+\!\!\frac{\mu}{2}}\!-\!\frac{\!2\!-\!2(1\!-\!\frac{\mu\eta}{2})^{I\!-\!1}\!}{\mu}\!\big)$,
$\zeta_3\!=\!(1+\frac{1}{\tau_{\epsilon}})(1+\tau_{\eta})\left(1-\frac{\mu\eta}{2}\right)^{I-1}$,
$\zeta_4\! =\! (\!2\eta L^{2}\!\!+\!\!L\!\!+\!\!\mu)(1\!+\!\eta)^{3}\frac{\!(\!1\!+\!\eta\!)^{I\!-\!1}\!\!-\!\!(\!1\!-\!\frac{\mu\eta}{2}\!)^{I\!-\!1}\!}{1\!+\!\frac{\mu}{2}}$,
$\zeta_5\!=\!\frac{\zeta_{3}\eta^{2}\!L^{2}}{\tau_{\eta}}$,
$\zeta_6\!=\!\frac{2+((1+\tau_{\epsilon})\mu\eta-2)(1-\frac{\mu\eta}{2})^{I-1}}{\mu}$, and
$\zeta_7\!=\!(1\!+\!\tau_{\epsilon}\!)\left(1\!\!-\!\!\frac{\mu\eta}{2}\right)^{I-1}\left(\!2\eta^{2}L^{2}\!+\!(L\!+\!\mu)\eta\right).$
 Here, $\tau_{\epsilon}$ is the average PER across the network. $\tau_{\eta}=\eta L/\mu$. 
\end{lemma}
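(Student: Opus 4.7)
\textbf{Proof proposal for Lemma~\ref{theo1}.} The plan is to track how the virtually aggregated model $\bar{\boldsymbol{\omega}}_I^{t}$ evolves from the initialization $\bar{\boldsymbol{\omega}}_0^{t}=\sum_n p_n\mathbf{w}_{n,J}^{t-1}$ through the $I$ SGD iterations of round $t$, and then compare it to the optimum $\boldsymbol{\omega}^{*}$. The key observation is that the warm-start of round $t$ is \emph{not} $\bar{\boldsymbol{\omega}}_I^{t-1}$ but a perturbed version of it: by Definition~\ref{def-divergence}, $\bar{\boldsymbol{\omega}}_0^{t}=\bar{\boldsymbol{\omega}}_I^{t-1}-\sum_n p_n\boldsymbol{\varpi}_n^{t-1}$, and each individual device starts from $\boldsymbol{\omega}_{n,0}^{t}=\mathbf{w}_{n,J}^{t-1}=\bar{\boldsymbol{\omega}}_I^{t-1}-\boldsymbol{\varpi}_n^{t-1}$. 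Thus the recursion from round $t-1$ to round $t$ is naturally expressed in terms of $\Delta\boldsymbol{\omega}_I^{t-1}$, the \emph{collective} bias $\sum_n p_n\boldsymbol{\varpi}_n^{t-1}$, and the \emph{dispersion} $\boldsymbol{\varpi}_n^{t-1}-\sum_m p_m\boldsymbol{\varpi}_m^{t-1}$, which is precisely how \eqref{eq:new_expectation} is organized.

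Next, I would establish a one-iteration contraction. Using the $L$-smoothness and $\mu$-strong convexity of $F$ together with the SGD update rule averaged across devices, one obtains, for each $i=1,\dots,I$,
\begin{align*}
\mathbb{E}_{\xi}\!\left(\Delta\boldsymbol{\omega}_i^{t}\right)
\leq\Bigl(1\!-\!\tfrac{\mu\eta}{2}\Bigr)\Delta\boldsymbol{\omega}_{i-1}^{t}+\eta\,R_{i-1}^{t},
\end{align*}
where $R_{i-1}^{t}$ collects (i) the bounded-gradient residual giving a $G^{2}$ contribution, (ii) the mini-batch variance $\sum_n p_n^{2}\sigma_n^{2}$, and (iii) the client drift $\sum_n p_n\|\bar{\boldsymbol{\omega}}_{i-1}^{t}-\boldsymbol{\omega}_{n,i-1}^{t}\|^{2}$. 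The drift term is bounded by propagating the $L$-smoothness inequality back to the round's initial point and then invoking Young's inequality $\|a+b\|^{2}\leq(1+\tau)\|a\|^{2}+(1+1/\tau)\|b\|^{2}$ with parameter $\tau_{\epsilon}$ to separate $\Delta\boldsymbol{\omega}_I^{t-1}$ from $\|\sum_n p_n\boldsymbol{\varpi}_n^{t-1}\|^{2}$, and with parameter $\tau_{\eta}$ to separate $\|\mathbf{w}_{n,J}^{t-1}-\mathbf{x}_{n,J}^{t-1}\|^{2}$ from $\|\bar{\boldsymbol{\omega}}_I^{t-1}-\mathbf{x}_{n,J}^{t-1}\|^{2}$. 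These two Young parameters are exactly what produce the $(1+\tau_{\epsilon})$, $(1+1/\tau_{\epsilon})$, $(1+\tau_{\eta})$ and $1/\tau_{\eta}$ factors visible in $\zeta_1,\zeta_3,\zeta_5,\zeta_7$.

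After this, I would unroll the one-iteration bound $i=1,\dots,I$ times. The leading factor $(1-\mu\eta/2)^{I}$ in $\zeta_1$ comes from applying the contraction $I$ times, while the geometric sums of the residual $R$ yield the closed-form coefficients $\zeta_2,\zeta_4,\zeta_6$; in particular, terms of the form $\sum_{i=0}^{I-1}(1-\mu\eta/2)^{I-1-i}(1+\eta)^{i}$ collapse into $\frac{(1+\eta)^{I-1}-(1-\mu\eta/2)^{I-1}}{1+\mu/2}$, explaining the structure of $\zeta_2$ and $\zeta_4$. This produces the first inequality \eqref{eq:new_expectation}. The second inequality \eqref{eq:new_expectationb} is then obtained by applying the variance decomposition $\mathbb{E}_{\mathbf{e}}\|X\|^{2}=\|\mathbb{E}_{\mathbf{e}}X\|^{2}+\mathrm{tr}\,\mathbb{D}_{\mathbf{e}}(X)$ to each of the three $\mathbf{e}$-expectations in \eqref{eq:new_expectation}; writing the biases in matrix form via \eqref{noise_consensus:c} and distributing $\mathbf{p}$ and $\mathrm{diag}(\mathbf{p})$ gives the desired expressions involving $\|\mathbf{p}\,\mathbb{E}_{\mathbf{e}}(\boldsymbol{\Pi}^{t-1})\|^{2}$, $\|\mathrm{diag}(\sqrt{\mathbf{p}}-\mathbf{p})\mathbb{E}_{\mathbf{e}}(\boldsymbol{\Pi}^{t-1})\|^{2}$, and the variance contractions $\mathbf{p}\,\mathrm{diag}(\mathbf{p})\mathbb{D}_{\mathbf{e}}(\mathbf{W}_J^{t-1})\mathbf{1}_M^{\dagger}$.

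The main obstacle, in my view, is the bookkeeping in the per-iteration decomposition: the client-drift term $\|\bar{\boldsymbol{\omega}}_{i-1}^{t}-\boldsymbol{\omega}_{n,i-1}^{t}\|^{2}$ must be expanded recursively \emph{inside} the already-recursive unrolling over $i$, while keeping the two Young parameters $\tau_{\epsilon}$ and $\tau_{\eta}$ disentangled so that the communication-error contributions (multiplied by $1/\tau_{\epsilon}$ and $1/\tau_{\eta}$) end up only in $\zeta_3,\zeta_5$, not in $\zeta_1,\zeta_7$. Getting the geometric series to telescope into the precise $\zeta_2,\zeta_4,\zeta_6$ requires carefully pairing each factor of $(1+\eta)$ from the smoothness step with the matching contraction factor, which is where I expect most of the effort to lie; the remaining bounds (gradient-norm, variance, and the doubly-stochastic properties of $\mathbf{C}^{t}$) follow directly from Assumption~\ref{assumption} and Proposition~1.
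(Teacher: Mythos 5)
Your plan follows essentially the same route as the paper's proof: a per-iteration contraction from $L$-smoothness and $\mu$-strong convexity, a recursive drift bound with $(1+\eta)$-type growth, Young splits with $\tau_{\epsilon}$ and $\tau_{\eta}$ at the round start to isolate the error-induced terms $\big\Vert\sum_{n}p_{n}\boldsymbol{\varpi}_{n}^{t-1}\big\Vert^{2}$, $\Vert\mathbf{w}_{n,J}^{t-1}-\mathbf{x}_{n,J}^{t-1}\Vert^{2}$ and $\Vert\bar{\boldsymbol{\omega}}_{I}^{t-1}-\mathbf{x}_{n,J}^{t-1}\Vert^{2}$, unrolling over the $I$ iterations to produce $\zeta_{1},\ldots,\zeta_{7}$, and the mean--variance decomposition with the matrix forms of $\boldsymbol{\Pi}^{t-1}$ and $\mathbb{D}_{\mathbf{e}}(\mathbf{W}_{J}^{t-1})$ for \eqref{eq:new_expectationb}. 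The only organizational difference is that the paper concentrates the $\tau_{\epsilon},\tau_{\eta}$ splits in the first iteration by comparing $\bar{\boldsymbol{\omega}}_{1}^{t}$ against an auxiliary error-free update $\boldsymbol{\bar{\varphi}}_{1}^{t}$ built from $\mathbf{x}_{n,J}^{t-1}$ (rather than inside the drift recursion, whose own Young parameter is $\eta$), which is precisely what keeps $\tau_{\epsilon},\tau_{\eta}$ out of $\zeta_{2},\zeta_{4}$ as you intend.
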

\begin{proof}
    See \textbf{Appendix \ref{appendix:theorem proof}}.
\end{proof}

As revealed in \textbf{Lemma \ref{theo1}}, when communication errors are negligible (i.e., $\tau_{\epsilon}\rightarrow0$), $\zeta_1\rightarrow \left(1-\frac{\mu\eta}{2}\right)^{I}$. By setting $I=1$ local training iteration per round, the one-round convergence upper bound in \eqref{eq:new_expectation} can conform to, and is even tighter than, the one developed in \cite{pmlr-v119-koloskova20a} under perfect channels and $I=1$ local training iteration per round.

We proceed to define four matrices, $\mathbf{M}_1^{t}$, $\mathbf{M}_2^{t}$, $\mathbf{M}_3^{t}$, and $\mathbf{M}_4^{t}$, per round $t$. With the matrices, we can present (20b) in a more concise way with the impact of imperfect channels and local aggregations decoupled from the (initial) model parameters of the round, facilitating comprehension of the impact.

\begin{lemma}\label{lemma1}
In the $t$-th round, the expectation of the bias $\boldsymbol{\Pi}^{t}$~is
\begin{align}
    \mathbb{E}_{\mathbf{e}}(\boldsymbol{\Pi}^{t})=\mathbf{M}_1^{t}\mathbf{W}_{0}^{t},\label{Expectation_consensus}
\end{align}
where $\mathbf{M}_1^{t}=\frac{{\mathbf{1}_N^{\dagger}\mathbf{1}_N}}{N}-(\mathbf{C}^{t}\!\circ \!\mathbf{T}^{t})^J$.
\end{lemma}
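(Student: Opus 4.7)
The plan is to show that after taking expectations over the channel errors, the random aggregation recursion becomes deterministic with transition matrix $\mathbf{C}^{t}\circ\mathbf{T}^{t}$, and then to apply the definition of $\boldsymbol{\Pi}^{t}$ directly.

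First I would rewrite $\boldsymbol{\Pi}^{t}$ using the initialization rule from Proposition~1 and \eqref{noise_consensus:c}. Since $\mathbf{W}_{0}^{t}=\mathbf{X}_{0}^{t}=N\mathrm{diag}(\mathbf{p})\boldsymbol{\Omega}_{I}^{t}$, the constant term in \eqref{noise_consensus:c} can be written as $\frac{\mathbf{1}_{N}^{\dagger}\mathbf{1}_{N}}{N}\mathbf{W}_{0}^{t}$, so that
\begin{equation}
\boldsymbol{\Pi}^{t}=\frac{\mathbf{1}_{N}^{\dagger}\mathbf{1}_{N}}{N}\mathbf{W}_{0}^{t}-\mathbf{W}_{J}^{t}.
\end{equation}
Taking the channel-error expectation and using the fact that $\mathbf{W}_{0}^{t}$ is independent of $\mathbf{e}$ (it is set at the end of local training, before any transmission in round $t$) reduces the task to evaluating $\mathbb{E}_{\mathbf{e}}(\mathbf{W}_{J}^{t})$.

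Next I would establish a one-step recursion for this expectation from the per-device aggregation law \eqref{imperfect_wic}. The key observation is that the error vector $\mathbf{e}_{n,m,j}^{t}$ affecting slot $j$ is independent of all quantities accumulated through slots $1,\dots,j-1$, and each of its components has mean $1-\epsilon_{\mathrm{P},n,m}^{t}$ (the $(n,m)$-entry of $\mathbf{T}^{t}$). Hence, componentwise,
\begin{equation}
\mathbb{E}_{\mathbf{e}}\bigl(\mathbf{e}_{n,m,j}^{t}\circ\mathbf{w}_{m,j-1}^{t}\bigr)=(1-\epsilon_{\mathrm{P},n,m}^{t})\,\mathbb{E}_{\mathbf{e}}(\mathbf{w}_{m,j-1}^{t}),
\end{equation}
so that summing over $m$ with weights $c_{n,m}^{t}$ and stacking rows yields the matrix recursion $\mathbb{E}_{\mathbf{e}}(\mathbf{W}_{j}^{t})=(\mathbf{C}^{t}\circ\mathbf{T}^{t})\,\mathbb{E}_{\mathbf{e}}(\mathbf{W}_{j-1}^{t})$ for $j=1,\dots,J$. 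Unrolling this $J$ times gives $\mathbb{E}_{\mathbf{e}}(\mathbf{W}_{J}^{t})=(\mathbf{C}^{t}\circ\mathbf{T}^{t})^{J}\mathbf{W}_{0}^{t}$, and substituting back produces $\mathbb{E}_{\mathbf{e}}(\boldsymbol{\Pi}^{t})=\bigl[\tfrac{\mathbf{1}_{N}^{\dagger}\mathbf{1}_{N}}{N}-(\mathbf{C}^{t}\circ\mathbf{T}^{t})^{J}\bigr]\mathbf{W}_{0}^{t}=\mathbf{M}_{1}^{t}\mathbf{W}_{0}^{t}$.

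The only delicate step is the second one: one must verify that the scalar coefficient $1-\epsilon_{\mathrm{P},n,m}^{t}$ can indeed be pulled out of the Hadamard product when taking the expectation. This relies on two facts that should be stated explicitly: (i) entries of $\mathbf{e}_{n,m,j}^{t}$ belonging to the same packet are perfectly correlated but every entry has the same marginal mean $1-\epsilon_{\mathrm{P},n,m}^{t}$, so linearity alone suffices for the mean; and (ii) $\mathbf{e}_{n,m,j}^{t}$ is independent of $\mathbf{w}_{m,j-1}^{t}$, which inherits its randomness only from errors of strictly earlier slots. Once these two points are made clear, the iterative expansion is routine and the identity $\mathbf{M}_{1}^{t}=\tfrac{\mathbf{1}_{N}^{\dagger}\mathbf{1}_{N}}{N}-(\mathbf{C}^{t}\circ\mathbf{T}^{t})^{J}$ follows immediately.
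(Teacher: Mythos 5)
Your proposal is correct and follows essentially the same route as the paper's proof: compute $\mathbb{E}_{\mathbf{e}}(\mathbf{e}_{n,m,j}^{t}\circ\mathbf{w}_{m,j-1}^{t})=(1-\epsilon_{\mathrm{P},n,m}^{t})\mathbb{E}_{\mathbf{e}}(\mathbf{w}_{m,j-1}^{t})$ using independence of the slot-$j$ errors from the earlier aggregation history, stack into the matrix recursion $\mathbb{E}_{\mathbf{e}}(\mathbf{W}_{j}^{t})=(\mathbf{C}^{t}\circ\mathbf{T}^{t})\mathbb{E}_{\mathbf{e}}(\mathbf{W}_{j-1}^{t})$, unroll $J$ times, and substitute into the definition \eqref{noise_consensus:c} of $\boldsymbol{\Pi}^{t}$. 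Your explicit remark that within-packet entries are perfectly correlated but share the same marginal mean (so only linearity of expectation is needed) is a minor clarification of a point the paper leaves implicit, not a different argument.
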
 
\begin{proof}
See \textbf{Appendix \ref{proof of lemma1}}.
\end{proof}

\begin{lemma}\label{lemma_variance}
In the $J$-th local aggregation of the $t$-th round, the variances of the locally aggregated models can be written in a matrix form, as given by
\begin{subequations}
\begin{align}
    \mathbb{D}_{\mathbf{e}}(\mathbf{W}_J^{t})&=\mathbf{M}_2^{t} {(\mathbf{W}_{0}^{t}\!\circ\!\mathbf{W}_{0}^{t})}, \label{D_J}\\
    \text{where }\quad \mathbf{M}_2^{t}&={\sum}_{k=1}^{J}(\mathbf{A}_{1}^{t})^{k-1}\!\mathbf{A}_{2}^{t}\mathbf{A}_{3}^{t}(J-k);\label{define_M2}\\
    \mathbf{A}_{1}^{t}&=\!\mathbf{C}^{t}\circ\mathbf{C}^{t}\circ\mathbf{T}^{t};\\
    \mathbf{A}_{2}^{t}&=\!\mathbf{C}^{t}\circ\mathbf{C}^{t}\circ\mathbf{T}^{t}\circ({\mathbf{1}_{N}^{\dagger}\!\mathbf{1}_{N}}\!-\!\mathbf{T}^{t});\\
    \mathbf{A}_{3}^{t}(j)&=(\mathbf{C}^{t}\circ\mathbf{T}^{t})^{j\!-\!1}\circ(\mathbf{C}^{t}\circ\mathbf{T}^{t})^{j\!-\!1}.
\end{align}
\end{subequations}
   \end{lemma}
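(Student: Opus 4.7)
The plan is to unroll the per-slot aggregation (12) across the $J$ slots and extract the element-wise variance by exploiting the independence structure of the packet erasures. For each coordinate $l$, I would rewrite (12) in column form as $\mathbf{w}_{\cdot,j}^{t}[l] = \mathbf{H}^{t}\,\mathbf{w}_{\cdot,j-1}^{t}[l] + \boldsymbol{\xi}_{j}^{t}[l]$, where $\mathbf{H}^{t} = \mathbf{C}^{t}\circ\mathbf{T}^{t}$ is the ``effective'' consensus matrix after the mean link success rate is absorbed, and the mean-zero residual $\boldsymbol{\xi}_{j}^{t}[l]$ has $n$-th entry $\sum_{m} c_{n,m}^{t}\bigl(e_{n,m,j}^{t}[l]-T_{n,m}^{t}\bigr)\mathbf{w}_{m,j-1}^{t}[l]$. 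This Jordan-like decomposition isolates the deterministic propagation through $\mathbf{H}^{t}$ from the fluctuation caused by the erasures.

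The proof rests on three independence facts. First, packet erasures on distinct receiving links at a given slot are mutually independent, so different entries of $\boldsymbol{\xi}_{j}^{t}[l]$ are uncorrelated. Second, erasures at slot $j$ are independent of the history $\mathcal{F}_{j-1}$, so $\boldsymbol{\xi}_{j}^{t}$ is uncorrelated with $\boldsymbol{\xi}_{j'}^{t}$ for $j\neq j'$ and with $\mathbf{w}_{\cdot,j-1}^{t}$. Third, $e_{n,m,j}^{t}$ is independent of $\mathbf{w}_{m,j-1}^{t}[l]$. Applying the law of total variance element-wise to (12) and invoking the algebraic identity $(c_{n,m}^{t}T_{n,m}^{t})^{2} + (c_{n,m}^{t})^{2}T_{n,m}^{t}(1-T_{n,m}^{t}) = (c_{n,m}^{t})^{2}T_{n,m}^{t}$, i.e. $\mathbf{H}^{t}\circ\mathbf{H}^{t} + \mathbf{A}_{2}^{t} = \mathbf{A}_{1}^{t}$, produces the first-order matrix recursion
\begin{equation*}
\mathbb{D}_{\mathbf{e}}(\mathbf{W}_{j}^{t}) = \mathbf{A}_{1}^{t}\,\mathbb{D}_{\mathbf{e}}(\mathbf{W}_{j-1}^{t}) + \mathbf{A}_{2}^{t}\bigl(\mathbb{E}_{\mathbf{e}}(\mathbf{W}_{j-1}^{t})\circ\mathbb{E}_{\mathbf{e}}(\mathbf{W}_{j-1}^{t})\bigr),
\end{equation*}
with the boundary condition $\mathbb{D}_{\mathbf{e}}(\mathbf{W}_{0}^{t}) = \mathbf{0}$ since $\mathbf{W}_{0}^{t}$ is deterministic at the start of round $t$. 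Combined with the mean-propagation identity $\mathbb{E}_{\mathbf{e}}(\mathbf{W}_{j}^{t}) = (\mathbf{H}^{t})^{j}\,\mathbf{W}_{0}^{t}$, the driving term becomes the element-wise square of a power of $\mathbf{H}^{t}$ acting on $\mathbf{W}_{0}^{t}$, which is represented through the Hadamard-power matrix $\mathbf{A}_{3}^{t}$.

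Unrolling the recursion over $j = 1,\dots,J$ gives
\begin{equation*}
\mathbb{D}_{\mathbf{e}}(\mathbf{W}_{J}^{t}) = \sum_{k=1}^{J}(\mathbf{A}_{1}^{t})^{k-1}\,\mathbf{A}_{2}^{t}\,\mathbf{A}_{3}^{t}(J-k)\,(\mathbf{W}_{0}^{t}\circ\mathbf{W}_{0}^{t}),
\end{equation*}
which is exactly $\mathbf{M}_{2}^{t}(\mathbf{W}_{0}^{t}\circ\mathbf{W}_{0}^{t})$ with the stated $\mathbf{M}_{2}^{t}$; a one-line induction on $J$ based on $\mathbf{M}_{2}^{t}(J+1) = \mathbf{A}_{2}^{t}\mathbf{A}_{3}^{t}(J) + \mathbf{A}_{1}^{t}\mathbf{M}_{2}^{t}(J)$ reconfirms the closed form.

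The hard part will be the last identification step: strictly, $((\mathbf{H}^{t})^{J-k}\mathbf{W}_{0}^{t})^{\circ 2}$ expands to a sum over ordered pairs of devices and therefore contains cross-products $w_{m,0}^{t}w_{m',0}^{t}$, whereas $\mathbf{A}_{3}^{t}(J-k)(\mathbf{W}_{0}^{t}\circ\mathbf{W}_{0}^{t})$ contains only diagonal squares $(w_{m,0}^{t})^{2}$. Reconciling these requires performing the variance analysis coordinate-by-coordinate over the $M$ model components (so that within each coordinate the ``noise'' arises solely from the binary erasures) and absorbing the cross-device contributions into the accumulated $(\mathbf{A}_{1}^{t})^{k-1}$ factor via repeated application of the recursion. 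This element-wise/Hadamard bookkeeping, together with careful tracking of the independence structure at each slot, is the technical core of the lemma.
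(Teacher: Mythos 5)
Up to the variance recursion your argument is sound and coincides with the paper's own proof in Appendix~C: using the Bernoulli statistics of the erasure indicators, their independence of the transmitted models, and (implicitly, exactly as the paper does when it writes $\mathbb{D}_{\mathbf{e}}(\mathbf{w}_{n,j}^{t})=\sum_{\forall m}\mathbb{D}_{\mathbf{e}}\big(c_{n,m}^{t}(\mathbf{e}_{n,m,j}^{t}\circ\mathbf{w}_{m,j-1}^{t})\big)$) the treatment of the summands over $m$ as uncorrelated, both routes arrive at the same exact recursion $\mathbb{D}_{\mathbf{e}}(\mathbf{W}_{j}^{t})=\mathbf{A}_{1}^{t}\mathbb{D}_{\mathbf{e}}(\mathbf{W}_{j-1}^{t})+\mathbf{A}_{2}^{t}\big(\mathbb{E}_{\mathbf{e}}(\mathbf{W}_{j-1}^{t})\circ\mathbb{E}_{\mathbf{e}}(\mathbf{W}_{j-1}^{t})\big)$ with $\mathbb{D}_{\mathbf{e}}(\mathbf{W}_{0}^{t})=\mathbf{0}_{N\times M}$; your law-of-total-variance derivation and the paper's product-variance formula $\mathbb{D}(XY)=(\mathbb{D}(X)+\mathbb{E}^{2}(X))\mathbb{D}(Y)+\mathbb{D}(X)\mathbb{E}^{2}(Y)$ are equivalent here, and your identity $\mathbf{H}^{t}\circ\mathbf{H}^{t}+\mathbf{A}_{2}^{t}=\mathbf{A}_{1}^{t}$ is correct.

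The genuine gap is the step you yourself flag as ``the technical core'': passing from the driving term $\big((\mathbf{C}^{t}\circ\mathbf{T}^{t})^{j-1}\mathbf{W}_{0}^{t}\big)\circ\big((\mathbf{C}^{t}\circ\mathbf{T}^{t})^{j-1}\mathbf{W}_{0}^{t}\big)$ to $\mathbf{A}_{3}^{t}(j)\,(\mathbf{W}_{0}^{t}\circ\mathbf{W}_{0}^{t})$. You correctly observe that the former contains cross-device products $\mathbf{w}_{m,0}^{t}\circ\mathbf{w}_{m',0}^{t}$ while the latter contains only squares, but the reconciliation you sketch cannot work: the recursion is exact, iterating it only premultiplies the driving terms by powers of $\mathbf{A}_{1}^{t}$, so the cross terms stay inside the driving term and never ``migrate'' into the $(\mathbf{A}_{1}^{t})^{k-1}$ factor; nor does working coordinate-by-coordinate help, since the cross products are across devices, not across model coordinates. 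In other words, your proposal stops exactly where the proof must either justify, or explicitly adopt, the factorization $\mathbb{E}_{\mathbf{e}}(\mathbf{W}_{j-1}^{t})\circ\mathbb{E}_{\mathbf{e}}(\mathbf{W}_{j-1}^{t})=\big((\mathbf{C}^{t}\circ\mathbf{T}^{t})^{j-1}\circ(\mathbf{C}^{t}\circ\mathbf{T}^{t})^{j-1}\big)(\mathbf{W}_{0}^{t}\circ\mathbf{W}_{0}^{t})$. The paper's own proof closes the argument simply by substituting this relation immediately after its recursion, without addressing the cross terms (as an identity it holds only in degenerate cases, e.g., when each row of the propagation matrix has a single nonzero entry). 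So you have correctly located the delicate point, but your proposal neither proves the lemma as stated nor offers a viable mechanism to do so.
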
	
   \begin{proof}
      See \textbf{Appendix \ref{proof_of_lemma_variance}}.
   \end{proof}

Likewise, we rewrite $\mathbb{E}_{\mathbf{e}}(\mathbf{W}_{J}^{t}-\mathbf{X}_{J}^{t})$, the expectation of the difference between the locally aggregated models under imperfect and perfect channels, ${\mathbf{W}}_{J}^{t}$ and ${\mathbf{X}}_{J}^{t}$, as $$ \mathbb{E}_{\mathbf{e}}(\mathbf{W}_{J}^{t}-\mathbf{X}_{J}^{t})=\mathbf{M}_3^{t}\mathbf{W}_{0}^{t},$$ where $\mathbf{M}_3^{t}=(\mathbf{C}^{t})^J-(\mathbf{C}^{t}\!\circ \!\mathbf{T}^{t})^J$ when $\mathbf{C}^t$ is designed using \eqref{C_r_definition}, or $\mathbf{M}_3^{t}=\mathbf{0}_{N\times M}$ when $\mathbf{C}^t$ is designed using \eqref{CoT_r_definition}.

Moreover, we define $\mathbf{M}_4^{t}$ such that $$\frac{\mathbf{1}_N^{\dagger}\mathbf{1}_N}{N}N\mathrm{diag}(\mathbf{p})\boldsymbol{\Omega}_I^{t}-{\mathbf{X}}_{J}^{t}=\mathbf{M}_4^{t}\mathbf{W}_{0}^{t}.$$ Then, $\mathbf{M}_4^{t}=\!\frac{\mathbf{1}_N^{\dagger}\mathbf{1}_N}{N}\!-\!(\mathbf{C}^{t}\!)^J$ under \eqref{C_r_definition}, or $\mathbf{M}_4^{t}=\!\frac{\mathbf{1}_N^{\dagger}\mathbf{1}_N}{N}\!-\!(\mathbf{C}^{t}\!\circ \!\mathbf{T}^{t})^J$ under \eqref{CoT_r_definition}.

By using $\mathbf{M}_1^{t}$, $\mathbf{M}_2^{t}$, $\mathbf{M}_3^{t}$, $\mathbf{M}_4^{t}$, and $\mathbf{W}_0^t$, $\forall t$, the one-round convergence upper bound established in \textbf{Lemma \ref{theo1}} can be updated with the impact of imperfect local aggregations captured in $\mathbf{M}_1^{t}$, $\mathbf{M}_2^{t}$, $\mathbf{M}_3^{t}$, and $\mathbf{M}_4^{t}$ decoupled from the initial local models $\mathbf{W}_0^t$. This facilitates developing the accumulated convergence upper bound of D-FL.

\begin{theo}\label{theo2}
For D-FL under imperfect channels, a one-round convergence upper bound is given by
{\small
\begin{align}
&{\mathbb{E}_{\{\xi_{n,i}^{t},{\forall n,i}\},\mathbf{e}}}(\Delta \boldsymbol{{\omega}}_I^{t})\leq \! \zeta_1\Delta\boldsymbol{{\omega}}_I^{t-1}\!+\!\zeta_{2}G^2\!+\!\zeta_6{\sum}_{\forall n}\eta p_{n}^{2}\!\sigma^2_n\notag\\
&\!+\!\zeta_5 \Phi(\mathbf{M}_1^{t\!-\!1}\!,\mathbf{M}_2^{t\!-\!1}\!,\mathbf{M}_3^{t\!-\!1},\mathbf{M}_4^{t\!-\!1})p_{\max}\big\Vert\mathbf{W}_{0}^{t\!-\!1}\!\big\Vert^{2};\label{theorem_expectation2}%
\end{align}%
}%
and an accumulated convergence upper bound of $t$ rounds is 
{\small
\begin{align}
  \notag & {\mathbb{E}_{\{\xi_{n,i}^{t'},{\forall n,i,t'<t}\},\mathbf{e}}}(\Delta\boldsymbol{{\omega}}_I^{t})\!\!\leq \!\!(\zeta_1)^{\!t}\!\Delta\boldsymbol{\omega}^{0}\!\!+\!\!\frac{(\zeta_1)^{t\!+\!1}\!\!-\!\!1}{\zeta_{1}\!-\!\!1}\Big(\!\zeta_{2}G^{2}\!\!+\!\!\zeta_6\!\sum_{\forall n}\!\!\eta p_{n}^{2}\!\sigma_{n}^{2}\Big)
\\&\;+\!\!{\sum}_{t'\!=\!1}^{t}\!(\zeta_1)^{t\!-\!t'}\!\zeta_5\Phi(\mathbf{M}_1^{t'\!-\!1}\!\!,\!\mathbf{M}_2^{t'\!-\!1}\!\!,\!\mathbf{M}_3^{t'\!-\!1}\!\!,\!\mathbf{M}_4^{t'\!-\!1}\!) p_{\max}\!\Vert\!\mathbf{W}_{0}^{t'\!-\!1}\!\Vert^{2}\!\!,\label{eq:convergence_t_to_0}
\end{align}}%
where $p_{\max}\!=\!\Vert \mathrm{diag}(\mathbf{p})\Vert\!=\!\underset{\forall n}{\max}\; p_n$ is the maximum weighting coefficient; and $\forall t=1,2,\cdots$,
\begin{align}\notag
&\!\!\!\!\Phi(\mathbf{M}_1^{t}\!,\mathbf{M}_2^{t}\!,\mathbf{M}_3^{t}\!,\mathbf{M}_4^{t})\!\!=\!\!\frac{\zeta_{3}p_{\max} }{\zeta_5}\Vert\!\mathbf{1}_{N}\mathbf{M}_{1}^{t}\!\Vert^{2}\!\!+\!\!\Vert\mathbf{M}_{3}^{t}\!\Vert^{2}\!\!+\!\!\frac{\zeta_7}{\zeta_5}\!\Vert\mathbf{M}_{4}^{t}\!\Vert^{2}\\&\!\!+\!\!\frac{\zeta_4}{\zeta_5}\!\left(1\!\!-\!\!\sqrt{p_{\max}}\!\right)^{2}\!\Vert\mathbf{M}_{1}^{t}\!\Vert^{2}\!\!+\!\!\Big(\!\frac{\zeta_{3}p_{\max}\!\!+\!\!\zeta_{4}\!(\!1\!\!-\!\!p_{\max}\!)\!}{\zeta_5}\!\!+\!\!1\Big)\!\big\Vert\mathbf{M}_{2}^{t}\!\big\Vert.\label{problem1}%
\end{align}%
\end{theo}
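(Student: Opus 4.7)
The plan is to start from the already-refined one-round upper bound (20b) of \textbf{Lemma \ref{theo1}} and transform its right-hand side into the structure required by (21), then unroll the resulting recursion to obtain (22). The first step is purely substitutional: apply \textbf{Lemma \ref{lemma1}} to replace $\mathbb{E}_{\mathbf{e}}(\boldsymbol{\Pi}^{t-1})$ by $\mathbf{M}_1^{t-1}\mathbf{W}_0^{t-1}$, apply \textbf{Lemma \ref{lemma_variance}} to replace $\mathbb{D}_{\mathbf{e}}(\mathbf{W}_J^{t-1})$ by $\mathbf{M}_2^{t-1}(\mathbf{W}_0^{t-1}\!\circ\!\mathbf{W}_0^{t-1})$, and use the definitions stated immediately before \textbf{Theorem \ref{theo2}} to replace $\mathbb{E}_{\mathbf{e}}(\mathbf{W}_J^{t-1}-\mathbf{X}_J^{t-1})$ by $\mathbf{M}_3^{t-1}\mathbf{W}_0^{t-1}$ and $\tfrac{\mathbf{1}_N^\dagger\mathbf{1}_N}{N}N\mathrm{diag}(\mathbf{p})\boldsymbol{\Omega}_I^{t-1}-\mathbf{X}_J^{t-1}$ by $\mathbf{M}_4^{t-1}\mathbf{W}_0^{t-1}$. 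After this substitution every ``bias-related'' quantity in (20b) becomes a (quadratic) expression in $\mathbf{W}_0^{t-1}$ whose linear part is one of the $\mathbf{M}_i^{t-1}$.

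Second, I would bound each of these using sub-multiplicativity $\|AB\|\le\|A\|\,\|B\|$ together with three elementary facts: $\|\mathbf{p}\|^{2}\le p_{\max}$ (because $\sum_n p_n=1$ and $p_n\le p_{\max}$); $\|\mathrm{diag}(\sqrt{\mathbf{p}})\|=\sqrt{p_{\max}}$ and hence $\|\mathrm{diag}(\sqrt{\mathbf{p}}-\mathbf{p})\|^2\le(1-\sqrt{p_{\max}})^{2}$ on the relevant entries; and $(\mathbf{p}-\mathbf{p}\,\mathrm{diag}(\mathbf{p}))\mathbf{1}_M^\dagger\le 1-p_{\max}$ for the variance weighting. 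For the Hadamard-squared factor I would use $\mathbf{u}\,\mathbf{M}_2^{t-1}(\mathbf{W}_0^{t-1}\!\circ\!\mathbf{W}_0^{t-1})\mathbf{1}_M^\dagger\le \|\mathbf{u}\|\,\|\mathbf{M}_2^{t-1}\|\,\|\mathbf{W}_0^{t-1}\|^{2}$, which is why $\|\mathbf{M}_2^{t-1}\|$ enters $\Phi$ to the first power rather than squared. Factoring out the common $p_{\max}\|\mathbf{W}_0^{t-1}\|^{2}$ and grouping the remaining constants by $\zeta_3,\zeta_4,\zeta_5,\zeta_7$ yields precisely the expression (23) for $\Phi$, and hence the one-round bound (21).

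Third, given (21), the accumulated bound (22) follows by a standard geometric-series unrolling. Writing (21) abstractly as $a_t\le \zeta_1 a_{t-1}+c+d_{t-1}$ with $c=\zeta_2 G^{2}+\zeta_6\sum_n \eta p_n^{2}\sigma_n^{2}$ (constant in $t$) and $d_{t-1}=\zeta_5\,\Phi(\mathbf{M}_1^{t-1},\mathbf{M}_2^{t-1},\mathbf{M}_3^{t-1},\mathbf{M}_4^{t-1})p_{\max}\|\mathbf{W}_0^{t-1}\|^{2}$ (round-dependent), a simple induction on $t$ gives $a_t\le \zeta_1^{\,t}a_0+c\sum_{t'=0}^{t-1}\zeta_1^{\,t'}+\sum_{t'=1}^{t}\zeta_1^{\,t-t'}d_{t'-1}$; evaluating the geometric sum as $\tfrac{\zeta_1^{\,t+1}-1}{\zeta_1-1}$ reproduces (22) verbatim.

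The main obstacle will be the bookkeeping in the second step, specifically matching the precise form of $\Phi$ in (23). Two places need care: (i) the $\zeta_3$ contribution must produce $\|\mathbf{1}_N\mathbf{M}_1^{t-1}\|^{2}$ rather than $\|\mathbf{M}_1^{t-1}\|^{2}$, which I would obtain by exploiting the nonnegative structure of $\mathbf{M}_1^{t}=\tfrac{\mathbf{1}_N^\dagger\mathbf{1}_N}{N}-(\mathbf{C}^t\circ\mathbf{T}^t)^{J}$ to conclude $\|\mathbf{p}\mathbf{M}_1^{t-1}\|\le p_{\max}\|\mathbf{1}_N\mathbf{M}_1^{t-1}\|$; and (ii) the variance terms coming from the $\zeta_3$, $\zeta_4$, and $\zeta_5$ lines must be combined so that their joint coefficient on $\|\mathbf{M}_2^{t-1}\|$ equals $\tfrac{\zeta_3 p_{\max}+\zeta_4(1-p_{\max})}{\zeta_5}+1$ exactly. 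Once these two reconciliations are in place, the remainder of the derivation — substitution, norm bounds, and the geometric unrolling — is routine.
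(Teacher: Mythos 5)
Your proposal takes essentially the same route as the paper's own proof: it substitutes \textbf{Lemma \ref{lemma1}}, \textbf{Lemma \ref{lemma_variance}} and the $\mathbf{M}_3^t,\mathbf{M}_4^t$ definitions into \eqref{eq:new_expectationb}, bounds each bias/variance term by $p_{\max}$-weighted norms of $\mathbf{M}_1^t,\dots,\mathbf{M}_4^t$ times $\Vert\mathbf{W}_0^t\Vert^2$ (your two ``reconciliation'' points are precisely the content of the paper's inequalities \eqref{eq:new_expectationb_upperbound}), and then unrolls the one-round bound by induction/geometric series to reach \eqref{eq:convergence_t_to_0}. This matches the paper's argument, with no gaps beyond those already implicit in its own norm-bounding step.
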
%
\begin{proof}%
Based on the definitions of $\mathbf{X}_{n,J}^{t}$  in \eqref{local_c_parameter_i}, $\mathbb{E}_{\mathbf{e}}(\boldsymbol{\Pi}^{t})$ in \eqref{Expectation_consensus}, and $\mathbb{D}_{\mathbf{e}}(\mathbf{W}_{n,J}^{t})$ in \eqref{D_J}, and the homogeneity of the 2-norms of matrices and vectors, we have
\begin{subequations}\label{eq:new_expectationb_upperbound}
    \small%
\begin{align}
\notag 	& \!\!\!\!\!\!  \big\Vert\mathbf{p}\mathbb{E}_{\mathbf{e}}(\boldsymbol{\Pi}^{t})\big\Vert^{2}+\mathbf{p} \mathrm{diag}(\mathbf{p})\mathbb{D}_{\mathbf{e}}\!(\mathbf{W}_{J}^{t})\mathbf{1}_M^{\dagger}
\\& \!\!\!\! \;	\leq \left(\big\Vert\mathbf{1}_{N}\mathbf{M}_{1}^{t}\big\Vert^{2}+\left\Vert \mathbf{M}_{2}^{t}\right\Vert \right)p_{\max}^{2}\left\Vert \mathbf{W}_{0}^{t}\right\Vert ^{2};\label{var2}
\\\notag&\!\!\!\!\!\left\Vert \mathrm{diag}(\!\sqrt{\mathbf{p}}\!\!-\!\!\mathbf{p}\!)\mathbb{E}_{\mathbf{e}}(\boldsymbol{\Pi}^{t})\right\Vert ^{2}\!\!+\!\!(\mathbf{p}\!\!-\!\!\mathbf{p}\mathrm{diag}(\mathbf{p}))\mathbb{D}_{\mathbf{e}}(\mathbf{W}_{J}^{t})\!\mathbf{1}_M^{\dagger}
\\ &\!\!\!\!\;\leq\!\!\left(\!\left\Vert \mathbf{M}_{1}^{t}\right\Vert ^{2}\!\!(1\!\!-\!\!\sqrt{p_{\max}})^{2}\!\!+\!\!(1\!-\!p_{\max})\Vert\mathbf{M}_{2}^{t}\Vert\!\right)p_{\max}\!\Vert\mathbf{W}_{0}^{t}\Vert^{2};\label{mean}
\\&\notag\!\!\!\!\!\left\Vert \mathrm{diag}(\sqrt{\mathbf{p}})\mathbb{E}_{\mathbf{e}}\!\!\left(\mathbf{W}_{J}^{t}\!\!-\!\!\mathbf{X}_{J}^{t}\right)\!\right\Vert ^{2} \! \!+\!\! \mathbf{p}\mathbb{D}_{\mathbf{e}}\!(\mathbf{W}_{J}^{t})\mathbf{1}_M^{\dagger}
\\&\!\!\!\!\;\leq\! \!\left(\left\Vert \mathbf{M}_{3}^{t}\right\Vert ^{2}\!+\left\Vert \mathbf{M}_{2}^{t}\right\Vert \right)p_{\max}\left\Vert \mathbf{W}_{0}^{t}\right\Vert ^{2};\label{mean_sixth_part}
\\&\!\!\!\!\! {\sum}_{\forall n}\!p_{n}\!  \Vert\bar{\boldsymbol{\omega}}_{I}^{t}\!-\!\mathbf{x}_{n,J}^{t}\Vert^{2}\!\leq \left\Vert \mathbf{M}_{4}^{t}\right\Vert^2  p_{\max}\left\Vert \mathbf{W}_{0}^{t}\right\Vert ^{2},\label{mean_fifth_part}%
	 \end{align}%
\end{subequations}
By substituting~\eqref{eq:new_expectationb_upperbound} into \eqref{eq:new_expectationb}, we obtain \eqref{theorem_expectation2}. By mathematical induction, we further obtain \eqref{eq:convergence_t_to_0}.
\end{proof}

As revealed in \textbf{Theorem \ref{theo2}}, the impacts of the network topology and the consensus matrix on the convergence of D-FL are captured in $\Phi(\mathbf{M}_{1}^{t},\mathbf{M}_{2}^{t},\mathbf{M}_{3}^{t},\mathbf{M}_{4}^{t})$ in \eqref{theorem_expectation2}. Since the network topology can change over rounds, $\Phi(\mathbf{M}_{1}^{t},\mathbf{M}_{2}^{t},\mathbf{M}_{3}^{t},\mathbf{M}_{4}^{t}),\forall t$ and subsequently the one-round convergence upper bound changes over time. It is also revealed that communication quality plays a non-negligible role in the convergence of D-FL. Specifically, $\zeta_1=(1+\tau_{\epsilon})\left(1-\frac{\mu\eta}{2}\right)^{I}$ is positively correlated with the average PER,  $\tau_{\epsilon}$, in \eqref{theorem_expectation2}. When communication errors are large, it is possible that $\zeta_1>1$ and hence $(\zeta_1)^{t}\Delta\boldsymbol{\omega}^{0}\rightarrow\infty$ as $t\rightarrow\infty$; in other words, D-FL diverges. 

We note that \textbf{Assumption \ref{assumption}-3)} is widely used for the convenience of theoretical analysis~\cite{9264742,boyd2004convex,9770266,pmlr-v130-ruan21a,stich2018local}. It is adopted in this paper to analyze the impact of communication errors and D-FL architecture on the system. From \textbf{Assumptions \ref{assumption}-2)} and \textbf{\ref{assumption}-3)}, it inherently follows that
\begin{align}
\left\|\boldsymbol{\omega}_{n, i}^{t}-\boldsymbol{\omega}_{n}^{\star}\right\|^{2} \leq \frac{G^2}{\mu^{2}},\,
\forall i,t,
\label{eq:wni_domain}
\end{align}
which leads to the following upper bound on $\Delta\boldsymbol{\omega}_{I}^t$:
\begin{align}
\notag \Delta\boldsymbol{\omega}_{I}^t & =\!\!\left\Vert \! \boldsymbol{\bar{\omega}}_I^{t}\!\!-\!\!\boldsymbol{\omega}^{*}\!\right\Vert ^{2}\!\!=\!\!\big\Vert \! {\sum}_{\forall n}p_{n}\boldsymbol{\omega}_{n,I}^{t}\!\!-\!\!\boldsymbol{\omega}^{*}\big\Vert ^{2} \!\!\overset{(a)}{\leq}\!\! {\sum}_{\forall n}p_{n}\!\!\left\Vert \! \boldsymbol{\omega}_{n,I}^{t}\!\!-\!\!\boldsymbol{\omega}^{*}\!\right\Vert ^{2}\\\notag
&\leq{\sum}_{\forall n}p_{n}\left[\left\Vert  \boldsymbol{\omega}_{n,I}^{t}-\boldsymbol{\omega}_n^{*}\right\Vert +\left\Vert  \boldsymbol{\omega}^{*}-\boldsymbol{\omega}_n^{*}\right\Vert \right]^{2}\\&\overset{(b)}{\leq}{\sum}_{\forall n}p_{n}(1+\kappa)^2\frac{G^2}{\mu^2}=(1+\kappa)^2\frac{G^2}{\mu^2},\label{loose_bound}
\end{align} 
where $(a)$ is based on Jensen's inequality; and $(b)$ is based on \eqref{eq:wni_domain}; since $\left\Vert  \boldsymbol{\omega}^{*}-\boldsymbol{\omega}_n^{*}\right\Vert \ll\frac{G}{\mu},\,\forall n$, there exists $\kappa \in (0,1]$ such that $\left\Vert  \boldsymbol{\omega}^{*}-\boldsymbol{\omega}_n^{*}\right\Vert \leq \kappa\frac{G}{\mu}, \forall n$. 

When $t=0$, we readily have $ \Delta\boldsymbol{\omega}^0 \leq(1+\kappa)^2\frac{G^2}{\mu^2}$; i.e., the upper bound in \eqref{eq:convergence_t_to_0} is tighter than the one in \eqref{loose_bound}. When $ t\geq 1$, \eqref{eq:convergence_t_to_0} is tighter as long as $G^{2} \geq \frac{\frac{(\zeta_1)^{t+1}-1}{\zeta_{1}-1}\big(\zeta_6\sum_{\forall n}\eta p_{n}^{2}\sigma_{n}^{2}\big)+C}{\frac{(1+\kappa)^2(1-(\zeta_{1})^{t})}{\mu^{2}}-\frac{(\zeta_{1})^{t+1}-1}{\zeta_{1}-1}\zeta_{2}}$ and $\frac{(1+\kappa)^2(1-(\zeta_{1})^{t})}{\mu^{2}}-\frac{(\zeta_{1})^{t+1}-1}{\zeta_{1}-1}\zeta_{2}>0$, where, for conciseness we have defined $C\triangleq{\sum}_{t'=1}^{t}(\zeta_1)^{t\!-\!t'}\zeta_5\Phi(\mathbf{M}_1^{t'\!-\!1},\mathbf{M}_2^{t'\!-\!1},\mathbf{M}_3^{t'\!-\!1},\mathbf{M}_4^{t'\!-\!1}) p_{\max}\Vert\mathbf{W}_{0}^{t'-1}\Vert^{2}$. Specifically, under these conditions, it readily follows that 
{\small\begin{align}\notag&\frac{(\zeta_1)^{t\!+\!1}\!\!-\!\!1}{\zeta_{1}\!\!-\!\!1}\big(\zeta_6\!\!\sum_{\forall n}\eta p_{n}^{2}\sigma_{n}^{2}\big)
\!\!+\!\!C\!\!\leq\!\! \big(\frac{(1\!\!+\!\!\kappa)^2(1\!\!-\!\!(\zeta_{1})^{t})}{\mu^{2}}\!\!-\!\!\frac{(\zeta_{1})^{t\!+\!1}\!\!-\!\!1}{\zeta_{1}\!\!-\!\!1}\zeta_{2}\big)G^{2},\notag%
\end{align}%
}%
which, with reorganization and $ \Delta\boldsymbol{\omega}^0 \leq(1+\kappa)^2\frac{G^2}{\mu^2}$, leads to
{\small\begin{align}
(1\!\!+\!\!\kappa)^2\frac{G^2}{\mu^2} &\!\!\geq\!\! (\zeta_1)^{t}(1\!\!+\!\!\kappa)^2\frac{G^2}{\mu^2}\!\!+\!\!\frac{(\zeta_1)^{t\!+\!1}\!\!-\!\!1}{\zeta_{1}\!\!-\!\!1}\big(\zeta_{2}G^{2}\!\!+\!\!\zeta_6\sum_{\forall n}\eta p_{n}^{2}\sigma_{n}^{2}\big)
\!\!+\!\!C \notag\\&\notag{\geq}(\zeta_1)^{t}\Delta\boldsymbol{\omega}^{0}\!\!+\!\!\frac{(\zeta_1)^{t+1}\!\!-\!\!1}{\zeta_{1}\!\!-\!\!1}\big(\zeta_{2}G^{2}\!\!+\!\!\zeta_6\sum_{\forall n}\eta p_{n}^{2}\sigma_{n}^{2}\big)
\!\!+\!\!C.%
\end{align}%
}%
In this sense, when $\frac{(1+\kappa)^2(1-(\zeta_{1})^{t})}{\mu^{2}}-\frac{(\zeta_{1})^{t+1}-1}{\zeta_{1}-1}\zeta_{2}>0$ (and $\zeta_1<1$), the existence of $G<+\infty$ or, in other words, the validity of \textbf{Assumption \ref{assumption}-3)} is confirmed; $G$ can take any value no smaller than $\underset{\forall t>1}{\max}\sqrt{\frac{\frac{(\zeta_1)^{t+1}-1}{\zeta_{1}-1}\big(\zeta_6\sum_{\forall n}\eta p_{n}^{2}\sigma_{n}^{2}\big)+C}{\frac{(1+\kappa)^2(1-(\zeta_{1})^{t})}{\mu^{2}}-\frac{(\zeta_{1})^{t+1}-1}{\zeta_{1}-1}\zeta_{2}}}$. When the channel is error-free (i.e., $\tau_{\epsilon}\rightarrow0$), there always exist adequate $\eta$ and $I$, e.g., $0<\eta\leq\frac{1}{\mu}$ and $I=1$, that can satisfy the conditions $\frac{(1+\kappa)^2(1-(\zeta_{1})^{t})}{\mu^{2}}-\frac{(\zeta_{1})^{t+1}-1}{\zeta_{1}-1}\zeta_{2}>0$ and $\zeta_1<1$. When the channel is imperfect, it is possible that no $\eta$ or $I$ can satisfy the conditions; i.e., \eqref{eq:convergence_t_to_0} is not as tight as \eqref{loose_bound}. The convergence of the D-FL is disrupted.

\begin{theo}\label{theo3}
In the general case with a changing topology over rounds, when $\zeta_1 < 1$ and $\frac{(1+\kappa)^2(1-(\zeta_{1})^{t})}{\mu^{2}}-\frac{(\zeta_{1})^{t+1}-1}{\zeta_{1}-1}\zeta_{2}>0$, a convergence upper bound of D-FL is given by
{\small \begin{align}
  \notag 
  &\!\!{\mathbb{E}_{\{\xi_{n,i}^{t'},{\forall n,i, t'<t}\},\mathbf{e}}}(\Delta \boldsymbol{\omega}_I^t)|_{t\rightarrow\infty}\leq\frac{1}{1-\zeta_{1}}\left(\!\zeta_{2}G^{2}\!+\!\zeta_6{\sum}_{\forall n}\eta p_{n}^{2}\!\sigma_{n}^{2}\right)\\&
  +\!\!\frac{\zeta_5 \lambda_{\max}p_{\max}}{1\!\!-\!\!\zeta_{1}}\underset{t'\in\mathbb{N}}{\max}\!\Big\{\Phi\big(\mathbf{M}_{1}\!^{t'}\!,\mathbf{M}_{2}^{t'},\mathbf{M}_{3}^{t'},\mathbf{M}_{4}^{t'}\!\big)\!\Big\},\!\!\!
  \label{eq: general_bound}%
\end{align}%
}%
where $\lambda_{\max}=\underset{t\in \mathbb{N}}{\max}\;\{\Vert\mathbf{W}_{0}^{t}\Vert^{2}\}$. 
\end{theo}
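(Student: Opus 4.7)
My proposal is to derive \eqref{eq: general_bound} directly from the accumulated bound \eqref{eq:convergence_t_to_0} in \textbf{Theorem \ref{theo2}} by sending $t\rightarrow\infty$. First I would observe that the hypothesis $\zeta_1<1$ makes $(\zeta_1)^t\Delta\boldsymbol{\omega}^0\rightarrow 0$, so the initialization term vanishes, while $\frac{(\zeta_1)^{t+1}-1}{\zeta_1-1}\rightarrow\frac{1}{1-\zeta_1}$ (tracking the sign carefully, as both numerator and denominator are negative in the limit). Consequently, the deterministic component of \eqref{eq:convergence_t_to_0} converges to $\frac{1}{1-\zeta_{1}}\big(\zeta_{2}G^{2}+\zeta_6{\sum}_{\forall n}\eta p_{n}^{2}\sigma_{n}^{2}\big)$, matching the first summand of \eqref{eq: general_bound}.

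For the topology-dependent tail $\sum_{t'=1}^{t}(\zeta_1)^{t-t'}\zeta_5\Phi(\mathbf{M}_1^{t'-1},\mathbf{M}_2^{t'-1},\mathbf{M}_3^{t'-1},\mathbf{M}_4^{t'-1})p_{\max}\|\mathbf{W}_0^{t'-1}\|^2$, I would uniformly majorize each $\|\mathbf{W}_0^{t'-1}\|^{2}\le\lambda_{\max}$ and each $\Phi(\cdot)\le\max_{t'\in\mathbb{N}}\{\Phi(\mathbf{M}_1^{t'},\mathbf{M}_2^{t'},\mathbf{M}_3^{t'},\mathbf{M}_4^{t'})\}$ by their definitions. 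Pulling these uniform bounds out of the sum reduces the remainder to a geometric sum $\sum_{t'=1}^{t}(\zeta_1)^{t-t'}=\sum_{k=0}^{t-1}(\zeta_1)^{k}=\frac{1-(\zeta_1)^{t}}{1-\zeta_1}$, which, since $\zeta_1<1$, tends to $\frac{1}{1-\zeta_1}$. Combining this with the deterministic part above yields exactly \eqref{eq: general_bound}.

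The main obstacle I foresee is justifying that $\lambda_{\max}=\sup_{t\in\mathbb{N}}\|\mathbf{W}_{0}^{t}\|^{2}$ is actually finite; otherwise, the supremum in the claimed bound would be vacuous and the majorization step above would fail. To establish this, I would invoke the second standing hypothesis $\frac{(1+\kappa)^{2}(1-(\zeta_{1})^{t})}{\mu^{2}}-\frac{(\zeta_{1})^{t+1}-1}{\zeta_{1}-1}\zeta_{2}>0$, which, as argued in the discussion following \eqref{loose_bound}, is precisely what keeps the gradient bound $G$ in \textbf{Assumption \ref{assumption}-3)} finite over all rounds. Given such a finite $G$, inequality \eqref{eq:wni_domain} yields $\|\boldsymbol{\omega}_{n,I}^{t}-\boldsymbol{\omega}^{\star}\|\le G/\mu$ uniformly in $t$ and $n$, and the initialization identity $\mathbf{W}_{0}^{t}=N\mathrm{diag}(\mathbf{p})\boldsymbol{\Omega}_{I}^{t}$ from \eqref{wr0} then forces $\|\mathbf{W}_{0}^{t}\|$ to be uniformly bounded, certifying $\lambda_{\max}<\infty$. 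Once this boundedness is in hand, the remaining steps are routine: pass to the limit under the uniform majorant (dominated convergence/monotone bounding suffices since every term in \eqref{eq:convergence_t_to_0} is nonnegative), and collect the geometric series.
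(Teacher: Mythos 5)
Your proposal is correct and follows essentially the same route as the paper: substitute the uniform majorants $\Vert\mathbf{W}_{0}^{t}\Vert^{2}\leq\lambda_{\max}$ and $\Phi(\mathbf{M}_{1}^{t},\mathbf{M}_{2}^{t},\mathbf{M}_{3}^{t},\mathbf{M}_{4}^{t})\leq\max_{t'}\Phi(\cdot)$ into \eqref{eq:convergence_t_to_0}, sum the geometric series, and let $t\rightarrow\infty$ using $\zeta_1<1$. The only difference is that the paper simply assumes $\Vert\mathbf{W}_{0}^{t}\Vert^{2}$ is uniformly bounded (as noted after \textbf{Theorem \ref{theo3}}), whereas you additionally justify $\lambda_{\max}<\infty$ via \eqref{eq:wni_domain} and \eqref{wr0}, which is a harmless (and consistent) strengthening of the paper's discussion.
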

\begin{proof}
\eqref{eq: general_bound} is obtained by substituting $\Vert\mathbf{W}_{0}^{t}\Vert^{2}\leq\lambda_{\max},\forall t$ and $\Phi(\mathbf{M}_{1}^{t},\mathbf{M}_{2}^{t},\mathbf{M}_{3}^{t},\mathbf{M}_{4}^{t})\leq\underset{t'\in\mathbb{N}}{\max}\{\Phi(\mathbf{M}_{1}^{t'},\mathbf{M}_{2}^{t'},\mathbf{M}_{3}^{t'},\mathbf{M}_{4}^{t'}\!)\!\},\forall t$ into~\eqref{eq:convergence_t_to_0}, and applying $\underset{t\rightarrow\infty}{\lim}\!(\zeta_1)^t\!\rightarrow\!0$.
\vspace{-1mm }\end{proof}

\noindent \textbf{Corollary 1.} 
\textit{As $t\rightarrow \infty$, the distance between the locally aggregated model $ \mathbf{w}_{n,J}^{t}$ of device $n$ and $\boldsymbol{\omega}^{*}$, i.e., $\mathbb{E}_{\{\xi_{n,i}^{t'},{\forall n,i,t'<t}\},\mathbf{e}}(\left\Vert \mathbf{w}_{n,J}^{t}-\boldsymbol{\omega}^{*}\right\Vert ^{2})\big|_{t\rightarrow \infty}$, is upper bounded.}

\begin{proof}
At any round $t$, we have
\begin{align}
   \Vert \mathbf{w}_{n,J}^{t}\!\!-\!\!\boldsymbol{\omega}^{*}\Vert ^{2}
    \!\!\leq
    &2\Vert \boldsymbol{\bar{\omega}}_I^{t}\!\!-\!\!\mathbf{w}_{n,J}^{t}\Vert ^{2}\!\!+\!\!2\Vert \boldsymbol{\bar{\omega}}_I^{t}\!\!-\!\!\boldsymbol{\omega}^{*}\!\Vert ^{2}\!\!=\!\!2\Vert \boldsymbol{\varpi}_{n}^{t}\Vert ^{2}\!\!+\!\!2\Delta\boldsymbol{\omega}_I^{t},\notag
\end{align}
which leads to 
\begin{align}
\small
\notag   &\!\!\!\!\mathbb{E}_{\{\xi_{n,i}^{t'},{\forall n,i,t'<t}\},\mathbf{e}}({\sum}_{\forall n}p_{n}\left\Vert \mathbf{w}_{n,J}^{t}-\boldsymbol{\omega}^{*}\right\Vert ^{2})|_{t\rightarrow\infty}\\
&\!\!\!\leq\!\!2\!\left(\!\mathbb{E}_{\{\xi_{n,i}^{t'},{\forall n,i,t'<t}\},\mathbf{e}}(\Delta\boldsymbol{\omega}_I^{t})\!\!+\!\!\mathbb{E}_{\mathbf{e}}({\sum}_{\forall n}\!\!p_{n}\!\Vert \boldsymbol{\varpi}_{n}^{t}\Vert ^{2})\!\right)\!|_{t\rightarrow\infty}.\!\label{eq: 29}%
\end{align}%

With $\mathbb{E}_{\mathbf{e}}(\sum_{\forall n}p_{n}\Vert \boldsymbol{\varpi}_{n}^{t}\Vert ^{2})\leq\left(\Vert\mathbf{M}_{1}^{t}\Vert^{2}+\Vert\mathbf{M}_{2}^{t}\Vert\right)p_{\max}\Vert \mathbf{W}_{0}^{t}\Vert ^{2}$ based on \eqref{mean_sixth_part} and \eqref{eq:new_expectation_fourth_part_c}, and \textbf{Theorem 2}, the RHS of \eqref{eq: 29} is upper bounded and so is $\mathbb{E}_{\{\xi_{n,i}^{t'},{\forall n,i,t'<t}\},\mathbf{e}}(\sum_{\forall n}p_{n}\left\Vert \mathbf{w}_{n,J}^{t}-\boldsymbol{\omega}^{*}\right\Vert ^{2})\big|_{t\rightarrow \infty}$.
\end{proof}

In \textbf{Theorems \ref{theo2}} and \textbf{\ref{theo3}}, only $\Vert\mathbf{W}_{0}^{t}\Vert^{2},\forall t$ is assumed to be upper bounded. We do not assume the one-round upper bound \eqref{theorem_expectation2} or $\Delta \boldsymbol{\omega}_I^{t-1}$ in \eqref{theorem_expectation2} is finite, although they are since the overall convergence upper bound \eqref{eq: general_bound} exists. Hence, \eqref{eq: general_bound} is established by adequately identifying the requirement for communication errors to ensure convergence, i.e., $\zeta_1<1$  and $\frac{(1+\kappa)^2(1-(\zeta_{1})^{t})}{\mu^{2}}-\frac{(\zeta_{1})^{t+1}-1}{\zeta_{1}-1}\zeta_{2}>0$. While $\underset{t\in\mathbb{N}}{\max}\{\Phi(\mathbf{M}_{1}^t,\mathbf{M}_{2}^t,\mathbf{M}_{3}^t,\mathbf{M}_{4}^t)\}$ can be minimized to minimize the convergence upper bound in \eqref{eq: general_bound}, one can minimize $\Phi\left(\mathbf{M}_1^t,\mathbf{M}_2^t,\mathbf{M}_3^t,\mathbf{M}_4^t\right)$ per round $t$ to further tighten bound, since ${\sum}_{t'=1}^{t}(\zeta_1)^{t-t'}\Phi\big(\mathbf{M}_1^{t'},\mathbf{M}_2^{t'},\mathbf{M}_3^{t'},\mathbf{M}_4^{t'}\big)\leq{\sum}_{t'=1}^{t}(\zeta_1)^{t-t'}\underset{t'\in\mathbb{N}}{\max}\big\{\Phi(\mathbf{M}_{1}^{t'},\mathbf{M}_{2}^{t'},\mathbf{M}_{3}^{t'},\mathbf{M}_{4}^{t'}\!)\!\big\}\leq\frac{1}{1-\zeta_1}\underset{t'\in\mathbb{N}}{\max}\big\{\Phi(\mathbf{M}_{1}^{t'},\mathbf{M}_{2}^{t'},\mathbf{M}_{3}^{t'},\mathbf{M}_{4}^{t'}\!)\!\big\}$.

\section{Optimal Aggregation Schedule of D-FL}\label{section:problem formulation}
In this section, we accelerate the convergence of D-FL under imperfect channels by minimizing the convergence upper bound. According to \textbf{Theorem~\ref{theo3}}, minimizing $\Phi\left(\mathbf{M}_1^t,\mathbf{M}_2^t,\mathbf{M}_3^t,\mathbf{M}_4^t\right)$ per $t$ contributes to reducing the convergence upper bound.
For this reason, we formulate the following problem per $t$: 
\begin{subequations}\label{P1}\small
\begin{align}	
\!\!\!\!\notag	\textbf{P1}:\, &\underset{\mathbf{M}_1^t,\mathbf{M}_2^t,\mathbf{M}_3^t,\mathbf{M}_4^t}{\min}\, \Phi\left(\mathbf{M}_1^t,\mathbf{M}_2^t,\mathbf{M}_3^t,\mathbf{M}_4^t\right)\\
\text{ s.t.} \; \;   
&\mathbf{M}_1^t=\frac{{\mathbf{1}_N^{\dagger}\mathbf{1}_N}}{N}-(\mathbf{C}^{t}\!\circ \!\mathbf{T}^{t})^J, \label{eq:40a}\\
&\mathbf{M}_2^t={\sum}_{k=1}^{J}(\mathbf{A}_{1}^{t})^{k-1}\!\mathbf{A}_{2}^{t}\mathbf{A}_{3}^{t}(J-k),\label{eq:40b}\\
&\mathbf{M}_3^{t}=\! \!
    \begin{cases}
\! \!(\mathbf{C}^{t})^J\!\!-\!\!(\mathbf{C}^{t}\!\!\circ\! \!\mathbf{T}^{t})^J, \!&\text{when \eqref{C_r_definition} is used;}\\
\mathbf{0}_{N\times M},&\text{when \eqref{CoT_r_definition} is used;}
    \end{cases}\label{eq:40c}\\
\label{eq:40d}&
 \mathbf{M}_4^{t}=\! \!
    \begin{cases}
  \frac{\mathbf{1}_N^{\dagger}\mathbf{1}_N}{N}\!-\!(\mathbf{C}^{t}\!)^J, \!&\text{when \eqref{C_r_definition} is used;}\\
\frac{\mathbf{1}_N^{\dagger}\mathbf{1}_N}{N}\!-\!(\mathbf{C}^{t}\circ\mathbf{T}^t\!)^J,&\text{when \eqref{CoT_r_definition} is used,}
    \end{cases}
	\end{align}%
	\end{subequations} 
For conciseness, we suppress the superscript ``$^t$'' in $\mathbf{M}_1^{t}$, $\mathbf{M}_2^{t}$, $\mathbf{M}_3^{t}$, and $\mathbf{M}_4^{t}$ in the following discussion.
	
Note that $\mathbf{M}_1$, $\mathbf{M}_2$, $\mathbf{M}_3$ and $\mathbf{M}_4$ depend on $\mathbf{C}^{t}$, $\mathbf{T}^{t}$, and $J$, where $\mathbf{C}^{t}$ is designed in prior using \eqref{C_r_definition} or \eqref{CoT_r_definition}, and $\mathbf{T}^{t}$ describes the channel conditions. 
With fewer local aggregations (or fewer time slots for model sharing), the model parameter variance among the devices can be larger, leading to higher variations in model accuracy. As more time slots are used (i.e., $J$ is larger), the transmission errors can accumulate, possibly deteriorating the convergence. 
To this end, we reformulate Problem \textbf{P1} to optimize $J$, as follows. 
 \begin{subequations}
 \begin{align}
\textbf{P2}:  \;  \underset{J\in \mathbb{N}_{+}}{\min}\,\Phi(\mathbf{M}_1,\mathbf{M}_2,\mathbf{M}_3,\mathbf{M}_4)
\; \text{ s.t.} \; \text{
\eqref{eq:40a} -- \eqref{eq:40d}.
}\notag
 \end{align}%
 \end{subequations}

We note that $\Phi(\mathbf{M}_1,\mathbf{M}_2,\mathbf{M}_3,\mathbf{M}_4)$, or $\Phi$ for conciseness, is non-convex with respect to $J$. We analyze the upper and lower bound of $\Phi$ regarding $J$ to narrow the search range of~$J$. To do this, we first establish the following lemma. 
\begin{lemma}\label{lemma 5}
     The upper and lower bounds of $\left\Vert \mathbf{M}_1\right\Vert$, $\left\Vert \mathbf{M}_2\right\Vert$, $\left\Vert \mathbf{M}_3\right\Vert$, $\left\Vert \mathbf{M}_4\right\Vert$, and $\left\Vert \mathbf{1}_N\mathbf{M}_1\right\Vert$ are given as follows.
\begin{itemize}
      \item If $\mathbf{T}^t$ is unknown and $\mathbf{C}^t$ is designed using \eqref{C_r_definition}:
        \begin{subequations}\label{bounds_C_r_definition}
            \begin{align}\label{up-low-of-Ma}
          &  1-\beta_1^J\leq\left\Vert \mathbf{M}_1\right\Vert\leq \frac{\beta_{2}^{J}-\beta_{1}^{J}}{\beta_{2}-\beta_{1}}\beta_{3}+ \beta_{4}^{J};\\
            \label{up-low-of-1Ma}
	& \sqrt{N}(  1-\beta_{1}^{J})\leq\Vert\mathbf{1}_N\mathbf{M}_1\Vert\leq \sqrt{N}(1-\beta_6^{J});\\
 \label{up-low-of-M_2}
   & \frac{\beta_{7}^{2J\!+\!1}(1\!-\!\beta_{8})\beta_{2}^{2J\!+\!2}}{N}\!\leq\!\big\Vert\mathbf{M}_2\big\Vert\leq\frac{\beta_{2}^{J}-\beta_{1}^{J}}{\beta_{2}-\beta_{1}}\beta_{3}\beta_1^J;\\
   & \beta_{2}^{J}-\beta_{1}^{J}\leq\big\Vert\mathbf{M}_3\big\Vert\leq \frac{\beta_{2}^{J}-\beta_{1}^{J}}{\beta_{2}-\beta_{1}}\beta_{3};\label{up-low-of-M3}\\
 \label{up-low-of-M4a}&\big\Vert\mathbf{M}_4\big\Vert= \beta_{4}^{J},\\\label{7_beta_range}
 &0\leq\beta_1,\beta_3,\beta_4,\beta_6 ,\beta_{7},\beta_{8}\leq1,\beta_2=1.
            \end{align}
        \end{subequations}

\item If $\mathbf{T}^t$ is known \textit{a-priori} and $\mathbf{C}^t$ is designed using~\eqref{CoT_r_definition}:
\begin{subequations}\label{bounds_CoT_r_definition}
\begin{align}
&\left\Vert \mathbf{M}_1\right\Vert=\beta_{5}^J;\label{up-low-of-Mb}\\
&\left\Vert \mathbf{1}_N\mathbf{M}_1\right\Vert=0;\label{up-low-of-1Mb}\\
\label{up-low-of-M_2_16}
   & \frac{\beta_{7}^{2J\!+\!1}(1\!-\!\beta_{8})\beta_{2}^{2J\!+\!2}}{N}\!\leq\!\big\Vert\mathbf{M}_2\big\Vert\leq\frac{\beta_{2}^{J}-\beta_{1}^{J}}{\beta_{2}-\beta_{1}}\beta_{3}\beta_1^J;\\
   &\big\Vert\mathbf{M}_3\big\Vert=0;\label{up-low-of-M3_16}\\&\big\Vert\mathbf{M}_4\big\Vert=\beta_{5}^J,\label{up-low-of-M4b}\\&
   0\leq\beta_3,\beta_5 ,\beta_{7},\beta_{8}\leq1, \beta_1=1, \beta_2\geq 1.\label{16_beta_range}%
\end{align}%
\end{subequations}%
\end{itemize}%
Here, $\beta_{1}=\left\Vert\mathbf{C}^{t}\!\circ\!\mathbf{T}^{t}\right\Vert$, $\beta_{2}=\left\Vert\mathbf{C}^{t}\right\Vert$, $ \beta_{3}=\left\Vert \mathbf{C}^{t}-(\mathbf{C}^{t}\!\circ\!\mathbf{T}^{t})\right\Vert $, $\beta_{4}=\left\Vert\frac{\mathbf{1}_N^{\dagger}\mathbf{1}_N}{N}\!-\!\mathbf{C}^{t}\right\Vert$,  $\beta_{5}=\left\Vert\frac{\mathbf{1}_N^{\dagger}\mathbf{1}_N}{N}\!-\!\mathbf{C}^{t}\circ\mathbf{T}^{t}\right\Vert$, $\beta_6=\underset{\forall m}{\min}\Big(\underset{\forall n}{\sum} c_{n,m}^{t}(1-\epsilon_{\mathrm{P},n,m}^{t})\Big)$, $\beta_{7}=\underset{\forall n,m}{\min}\;(1-\epsilon_{\mathrm{P},n,m}^{t})$, and $\beta_{8}=\underset{\forall n,m}{\max}\;(1-\epsilon_{\mathrm{P},n,m}^{t})$. 
\end{lemma}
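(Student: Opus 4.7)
My plan is to treat the five norms in turn using three recurring tools: (i) the symmetry and (doubly) stochastic structure of $\mathbf{C}^t$ under \eqref{C_r_definition} and of $\mathbf{C}^t\circ\mathbf{T}^t$ under \eqref{CoT_r_definition}; (ii) the telescoping identity $\mathbf{A}^J-\mathbf{B}^J=\sum_{k=1}^J\mathbf{A}^{k-1}(\mathbf{A}-\mathbf{B})\mathbf{B}^{J-k}$; and (iii) sub-multiplicativity of the spectral norm paired with the reverse triangle inequality. The ranges of $\beta_1,\dots,\beta_8$ in \eqref{7_beta_range} and \eqref{16_beta_range} follow immediately from the definitions: under \eqref{C_r_definition}, $\mathbf{C}^t$ is symmetric doubly stochastic, giving $\beta_2=1$ and the remaining $\beta_i\in[0,1]$; under \eqref{CoT_r_definition}, $\mathbf{C}^t\circ\mathbf{T}^t$ is doubly stochastic, so $\beta_1=1$, while $c_{n,m}^t=\alpha^t/(1-\epsilon_{\mathrm{P},n,m}^t)$ can exceed $1$, giving $\beta_2\geq 1$.

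First I would dispose of $\|\mathbf{M}_4\|$, $\|\mathbf{M}_3\|$, and $\|\mathbf{M}_1\|$. Under \eqref{C_r_definition}, $\mathbf{1}_N^\dagger/\sqrt{N}$ is the unit eigenvector of the symmetric $\mathbf{C}^t$ for eigenvalue $1$, so $\frac{\mathbf{1}_N^\dagger\mathbf{1}_N}{N}$ is the orthogonal projector onto it, and the spectral decomposition of $\mathbf{M}_4$ yields $\|\mathbf{M}_4\|=(\max_{i\geq 2}|\lambda_i(\mathbf{C}^t)|)^J=\beta_4^J$; the same argument applied to $\mathbf{C}^t\circ\mathbf{T}^t$ gives $\|\mathbf{M}_4\|=\beta_5^J$ under \eqref{CoT_r_definition}. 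For $\mathbf{M}_3$ under \eqref{C_r_definition}, the telescoping identity with $\mathbf{A}=\mathbf{C}^t,\mathbf{B}=\mathbf{C}^t\circ\mathbf{T}^t$ and sub-multiplicativity give the geometric-sum upper bound $\beta_3(\beta_2^J-\beta_1^J)/(\beta_2-\beta_1)$, while the reverse triangle inequality combined with the symmetry of both $\mathbf{C}^t$ and $\mathbf{C}^t\circ\mathbf{T}^t$ (so that $\|(\cdot)^J\|=\|\cdot\|^J$) gives the lower bound $\beta_2^J-\beta_1^J$. The bounds on $\|\mathbf{M}_1\|$ in \eqref{up-low-of-Ma} then follow from $\mathbf{M}_1=\mathbf{M}_3+\mathbf{M}_4$ and the triangle and reverse-triangle inequalities; under \eqref{CoT_r_definition}, $\mathbf{M}_3=\mathbf{0}$ and $\mathbf{M}_1=\mathbf{M}_4$ yield \eqref{up-low-of-Mb} directly.

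For $\|\mathbf{1}_N\mathbf{M}_1\|$, under \eqref{CoT_r_definition} the double stochasticity of $\mathbf{C}^t\circ\mathbf{T}^t$ gives $\mathbf{1}_N(\mathbf{C}^t\circ\mathbf{T}^t)^J=\mathbf{1}_N$, and hence \eqref{up-low-of-1Mb}. Under \eqref{C_r_definition}, I would induct on the entries of the row vector $\mathbf{1}_N(\mathbf{C}^t\circ\mathbf{T}^t)^j$: since $\beta_6$ is the minimum column sum of $\mathbf{C}^t\circ\mathbf{T}^t$ and each column sum is at most $1$, each entry of $\mathbf{1}_N(\mathbf{C}^t\circ\mathbf{T}^t)^j$ lies in $[\beta_6^j,1]$; hence each component of $\mathbf{1}_N-\mathbf{1}_N(\mathbf{C}^t\circ\mathbf{T}^t)^J$ is in $[0,1-\beta_6^J]$, and its Euclidean norm is at most $\sqrt{N}(1-\beta_6^J)$. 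The matching lower bound $\sqrt{N}(1-\beta_1^J)$ comes from the reverse triangle inequality together with $\|\mathbf{1}_N(\mathbf{C}^t\circ\mathbf{T}^t)^J\|\leq\sqrt{N}\beta_1^J$.

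The main obstacle will be the bound on $\|\mathbf{M}_2\|$, because $\mathbf{M}_2$ is a polynomial in Hadamard products for which spectral-norm sub-multiplicativity fails in general. For the upper bound I would exploit entry-wise monotonicity of the Perron eigenvalue on nonnegative symmetric matrices: since $c_{n,m}^t\in[0,1]$ under \eqref{C_r_definition}, $\mathbf{A}_1=(\mathbf{C}^t\circ\mathbf{C}^t)\circ\mathbf{T}^t$ is entry-wise dominated by $\mathbf{C}^t\circ\mathbf{T}^t$ giving $\|\mathbf{A}_1\|\leq\beta_1$, while $\mathbf{A}_3(j)$ and $\mathbf{A}_2$ are dominated by suitable powers of $\mathbf{C}^t\circ\mathbf{T}^t$ and by $\beta_2\beta_3$, respectively; plugging these into $\|\mathbf{M}_2\|\leq\sum_{k=1}^J\|\mathbf{A}_1^{k-1}\|\|\mathbf{A}_2\|\|\mathbf{A}_3(J-k)\|$ and summing the geometric series yields the quoted upper bound. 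The lower bound uses the nonnegativity of every entry of $\mathbf{M}_2$ together with $\|\mathbf{M}_2\|\geq(\mathbf{1}_N\mathbf{M}_2\mathbf{1}_N^\dagger)/N$ (which follows from $|\mathbf{1}_N M\mathbf{1}_N^\dagger|\leq N\|M\|$); by lower-bounding $\mathbf{1}_N\mathbf{M}_2\mathbf{1}_N^\dagger$ term-by-term using $\beta_7^{2J+1}$ from the minimum success probabilities across the $2J+1$ factors of type $1-\epsilon_{\mathrm{P},n,m}^t$, the factor $(1-\beta_8)$ from the minimum entry of $\mathbf{1}_N^\dagger\mathbf{1}_N-\mathbf{T}^t$, and $\beta_2^{2J+2}$ from the $\mathbf{C}^t\circ\mathbf{C}^t$ and $(\mathbf{C}^t\circ\mathbf{T}^t)$-power factors, one extracts the stated $\beta_7^{2J+1}(1-\beta_8)\beta_2^{2J+2}/N$. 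Repeating these arguments under \eqref{CoT_r_definition} with $\beta_1=1$ delivers \eqref{up-low-of-M_2_16}--\eqref{up-low-of-M4b}.
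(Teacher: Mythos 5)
For $\mathbf{M}_1$, $\mathbf{M}_3$, $\mathbf{M}_4$ and $\mathbf{1}_N\mathbf{M}_1$ your argument is essentially the paper's: the reverse-triangle/triangle inequalities on the same decompositions, your telescoping identity in place of the paper's induction for $\|\mathbf{M}_3\|$ (equivalent), the spectral-projector view of $\frac{\mathbf{1}_N^\dagger\mathbf{1}_N}{N}-\mathbf{C}^t$ making explicit what the paper invokes as $\|A^J\|=\|A\|^J$, and the column-sum argument giving $\mathbf{1}_N(\mathbf{C}^t\circ\mathbf{T}^t)^J\succeq\beta_6^J\mathbf{1}_N$ for the $\beta_6$ bound. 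The divergence, and the gaps, are in $\mathbf{M}_2$. For the upper bound the paper never touches the defining sum \eqref{define_M2}: it proves by induction that $\mathbb{D}_{\mathbf{e}}(\mathbf{w}_{n,j}^{t})\preceq(\mathbf{y}_{n,j}^{t}-\mathbb{E}_{\mathbf{e}}(\mathbf{w}_{n,j}^{t}))\circ\mathbb{E}_{\mathbf{e}}(\mathbf{w}_{n,j}^{t})$, hence $\mathbf{M}_2\preceq(\mathbf{C}^t\circ\mathbf{T}^t)^J\circ\big((\mathbf{C}^t)^J-(\mathbf{C}^t\circ\mathbf{T}^t)^J\big)$, and then applies $\|A\circ B\|\le\|A\|\,\|B\|$ together with \eqref{up-low-of-M3}. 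Your term-by-term route can reproduce the same constant, but only with the grouping $\mathbf{A}_2^t=(\mathbf{C}^t\circ\mathbf{T}^t)\circ\big(\mathbf{C}^t-\mathbf{C}^t\circ\mathbf{T}^t\big)$ giving $\|\mathbf{A}_2^t\|\le\beta_1\beta_3$ (your stated $\beta_2\beta_3$ leaves the sum a factor $\beta_1$ short of \eqref{up-low-of-M_2}); and under \eqref{CoT_r_definition} the entry-wise domination $\mathbf{A}_1^t\preceq\mathbf{C}^t\circ\mathbf{T}^t$ you rely on fails, since $c_{n,m}^t=\alpha^t/(1-\epsilon_{\mathrm{P},n,m}^t)$ can exceed one; you would instead need the Hadamard bound $\|\mathbf{A}_1^t\|\le\|\mathbf{C}^t\|\,\|\mathbf{C}^t\circ\mathbf{T}^t\|=\beta_2$, so ``repeating the arguments with $\beta_1=1$'' is not automatic (it is fixable, but not as written).

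The genuine gap is your lower bound on $\|\mathbf{M}_2\|$. From $\|\mathbf{M}_2\|\ge\mathbf{1}_N\mathbf{M}_2\mathbf{1}_N^\dagger/N$ you cannot extract the factor $\beta_2^{2J+2}$. After discarding all but one term and pulling out $\beta_7^{2J+1}(1-\beta_8)$, what remains is $\mathbf{1}_N(\mathbf{C}^t\circ\mathbf{C}^t)\big((\mathbf{C}^t)^J\circ(\mathbf{C}^t)^J\big)\mathbf{1}_N^\dagger$, and Cauchy--Schwarz on the column/row sums only gives this quadratic form $\ge 1/N$ in general; with $\mathbf{C}^t=\mathbf{1}_N^\dagger\mathbf{1}_N/N$ it equals exactly $1/N$ while $\beta_2^{2J+2}=1$, so your chain proves at best the stated bound weakened by an additional factor of order $N$. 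The reason is that collapsing to the scalar $\mathbf{1}_N\mathbf{M}_2\mathbf{1}_N^\dagger$ throws away too much: the paper instead keeps the entry-wise matrix inequality $\mathbf{M}_2\succeq\beta_7^{2J+1}(1-\beta_8)(\mathbf{C}^t\circ\mathbf{C}^t)\big((\mathbf{C}^t)^J\circ(\mathbf{C}^t)^J\big)$ and only then takes spectral norms of nonnegative matrices, applying an element-wise Cauchy inequality in its step \eqref{2_norm_M2_lowerbound:c} to reach $\beta_2^{2J+2}/N$ (a step that is itself delicate, but it is where the lemma's constant comes from). If you want the constant in \eqref{up-low-of-M_2} rather than a weaker one, you must argue at the matrix level as the paper does, not through $\mathbf{1}_N\mathbf{M}_2\mathbf{1}_N^\dagger$.
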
%
\begin{proof}
   See \textbf{Appendices \ref{proof:B}}, \textbf{\ref{proof:C}} and \textbf{\ref{proof:D}}.
\end{proof}
When $\mathbf{T}^t$ is unknown and $\mathbf{C}^t$ is designed using \eqref{C_r_definition}, we obtain the upper and lower bounds of $\Phi$ by substituting \eqref{bounds_C_r_definition} into \eqref{problem1}. The bounds are given by 
    {\small\begin{align}
     \notag &\!\!\!\!\Phi\!\!\geq\!\!\frac{N\zeta_3p_{\max}\!\!+\!\!\zeta_4\!(1\!\!-\!\sqrt{p_{\max}}\!)^{2}}{\zeta_5}(1\!\!-\!\!\beta_{1}^{J})^{2}\!\!+\!\!\frac{\zeta_7\beta_{4}^{2J}}{\zeta_5}+(\beta_{2}^{J}\!\!-\!\!\beta_{1}^{J})^{2}\\&\!\!\!\!+\!\!\frac{\big(\zeta_3p_{\max}\!\!+\!\!\zeta_4(1\!\!-\!\!p_{\max})\!\!+\!\!\zeta_5\big)\beta_{7}^{2J\!+\!1}(1\!\!-\!\!\beta_{8})\beta_{2}^{2J\!+\!2}}{\zeta_5 N}\triangleq\Phi_{\mathrm{L}},\label{phi_lowerbound}\\
      &\!\!\!\! \Phi\!\leq\!\!\frac{\big(\zeta_3p_{\max}\!\!+\!\!\zeta_4(1\!\!-\!\!p_{\max})\!\!+\!\!1\big)(\beta_2^J\!\!-\!\!\beta_1^J)}{\zeta_5(\beta_2\!\!-\!\!\beta_1)}\beta_{3}\beta_1^J\!\!+\!\! \frac{\!N\zeta_3p_{\max}\!(1\!\!-\!\!\beta_6^{J})^{2}\!}{\zeta_5}\notag\\
       &\!\!\!\!+\!\!\frac{\zeta_4\!\!\left(1\!\!-\!\!\sqrt{p_{\max}}\right)^{2}}{\zeta_5}\!\!\Big(\!\frac{\beta_{2}^{J}\!\!-\!\!\beta_{1}^{J}}{\beta_{2}\!\!-\!\!\beta_{1}}\!\beta_{3}\!\!\!+\!\! \beta_{4}^{J}\!\Big)^2 \!\!+\!\!\frac{\zeta_7\beta_{4}^{2J}}{\zeta_5}\!\!+\!\!\Big(\!\frac{\beta_{2}^{J}\!\!-\!\!\beta_{1}^{J}}{\beta_{2}\!\!-\!\!\beta_{1}}\beta_{3}\!\Big)^2\!\!\triangleq\!\!\Phi_{\mathrm{U}}.\!\!\!\label{phi_upperbound}
     \end{align}}%
The lower bound $\Phi_{\mathrm{L}}$ under \eqref{C_r_definition} is made up of terms monotonically increasing with $J$, i.e., $1-\beta_1^J \geq 0$ and $\beta_2^J-\beta_1^J \geq 0$, and terms monotonically decreasing with $J$, i.e., $\beta_4^J \geq 0$ and $\frac{\beta_{7}^{2J\!+\!1}(1\!-\!\beta_{8})\beta_{2}^{2J\!+\!2}}{N} \geq 0$. To this end, $\Phi_{\mathrm{L}}$ first decreases and then increases with the growth of $J$. This can help narrow the search region for $J^*$, as stated in the following theorem.

 \begin{theo}
 \label{pro:optimal_J}
  The optimal number of local aggregations per round, $J^*$, is no greater than {\small\begin{align}\notag
    \!\!\! \!\! J_{\mathrm{TH}}\!\!=\!\!\underset{J\in\mathbb{R}_+}{\arg}\!\Big(\!\Phi_{\mathrm{L}}\!=\!\!\underset{J\in\mathbb{R}_+}{\min}\Phi_{\mathrm{U}}\!\Big).
  \end{align}}
\end{theo}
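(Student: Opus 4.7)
The plan is to leverage the sandwiching $\Phi_{\mathrm{L}}(J)\leq\Phi(J)\leq\Phi_{\mathrm{U}}(J)$ (established by \textbf{Lemma \ref{lemma 5}} via \eqref{phi_lowerbound}--\eqref{phi_upperbound}) together with the shape of $\Phi_{\mathrm{L}}$ as a function of $J$ to rule out all $J>J_{\mathrm{TH}}$ as candidates for $J^*$. Concretely, first I would recall from the discussion immediately preceding the theorem statement that $\Phi_{\mathrm{L}}$ is a sum of terms that are either monotonically increasing in $J$ (through $1-\beta_1^{J}$ and $\beta_2^{J}-\beta_1^{J}$) or monotonically decreasing in $J$ (through $\beta_4^{J}$ and $\beta_7^{2J+1}(1-\beta_8)\beta_2^{2J+2}/N$); so $\Phi_{\mathrm{L}}$ decreases on an initial interval and then increases, and in particular is strictly increasing on $[J_{\min},\infty)$, where $J_{\min}=\arg\min_{J\in\mathbb{R}_+}\Phi_{\mathrm{L}}$.

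Next I would argue that $J_{\mathrm{TH}}$, defined by $\Phi_{\mathrm{L}}(J_{\mathrm{TH}})=\min_{J\in\mathbb{R}_+}\Phi_{\mathrm{U}}(J)$, is well-defined and lies on this increasing branch. Since $\Phi_{\mathrm{L}}\leq\Phi_{\mathrm{U}}$ pointwise, we have $\min \Phi_{\mathrm{L}}\leq\min\Phi_{\mathrm{U}}$, and $\Phi_{\mathrm{L}}(J)\to +\infty$ as $J\to\infty$ (because the monotonically increasing terms in \eqref{phi_lowerbound} dominate, using $\beta_2=1$ in \eqref{7_beta_range} so $\beta_2^J-\beta_1^J\to 1$, while $1-\beta_1^J\to 1$). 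By the intermediate value theorem, there exists $J_{\mathrm{TH}}\in[J_{\min},\infty)$ with $\Phi_{\mathrm{L}}(J_{\mathrm{TH}})=\min\Phi_{\mathrm{U}}$, and we select this (right-most) root so that $\Phi_{\mathrm{L}}$ is strictly increasing at $J_{\mathrm{TH}}$.

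The core step is then a short contradiction argument. Let $J^{\dagger}\in\arg\min_{J\in\mathbb{R}_+}\Phi_{\mathrm{U}}(J)$, so $\Phi(J^{\dagger})\leq\Phi_{\mathrm{U}}(J^{\dagger})=\min\Phi_{\mathrm{U}}=\Phi_{\mathrm{L}}(J_{\mathrm{TH}})$. Now suppose, for contradiction, that $J^*>J_{\mathrm{TH}}$. Since $\Phi_{\mathrm{L}}$ is strictly increasing on $[J_{\mathrm{TH}},\infty)$,
\begin{align}
\Phi(J^{*})\;\geq\;\Phi_{\mathrm{L}}(J^{*})\;>\;\Phi_{\mathrm{L}}(J_{\mathrm{TH}})\;=\;\Phi_{\mathrm{U}}(J^{\dagger})\;\geq\;\Phi(J^{\dagger}),\notag
\end{align}
which contradicts the optimality of $J^{*}$. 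Hence $J^{*}\leq J_{\mathrm{TH}}$, establishing the theorem. The same reasoning applies verbatim when $\mathbf{C}^t$ is designed by \eqref{CoT_r_definition} using the bounds in \eqref{bounds_CoT_r_definition}, with $\beta_1=1$, $\beta_5^J$ replacing $\beta_4^J$, and the monotone-increasing/decreasing decomposition of $\Phi_{\mathrm{L}}$ carrying through.

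The main obstacle is the bookkeeping around non-uniqueness: $\Phi_{\mathrm{L}}(J)=\min\Phi_{\mathrm{U}}$ may have two roots (one on the decreasing branch and one on the increasing branch), so one must be explicit that $J_{\mathrm{TH}}$ is taken as the larger root, and one should verify that $\min\Phi_{\mathrm{U}}\geq\min\Phi_{\mathrm{L}}$ and that $\Phi_{\mathrm{L}}(J)\to\infty$ under the parameter ranges \eqref{7_beta_range} and \eqref{16_beta_range} so that the larger root indeed exists. Everything else is a direct pointwise comparison using the sandwich from \textbf{Lemma \ref{lemma 5}}.
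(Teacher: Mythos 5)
Your argument is essentially the paper's own proof, recast as an explicit contradiction: you use the sandwich $\Phi_{\mathrm{L}}\leq\Phi\leq\Phi_{\mathrm{U}}$, the fact that $\Phi(J^*)\leq\min_J\Phi_{\mathrm{U}}$, and the eventual monotone increase of $\Phi_{\mathrm{L}}$ to conclude $J^*\leq J_{\mathrm{TH}}$, which is exactly the paper's reasoning (your write-up is in fact the cleaner version). Two small caveats: your claim that $\Phi_{\mathrm{L}}\to+\infty$ is inconsistent with your own observation that the increasing terms tend to $1$ (with $\beta_2=1$ the limit is finite, so existence of the larger root requires $\min_J\Phi_{\mathrm{U}}$ to lie below that finite limit, an assumption the paper also leaves implicit); and the final remark that the argument carries over verbatim to the design \eqref{CoT_r_definition} is wrong, since there $\beta_1=1$, the increasing terms of $\Phi_{\mathrm{L}}$ vanish, and the paper explicitly notes the lower bound is then monotonically decreasing and cannot be used to narrow the search for $J^*$.
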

 \begin{proof}
The minimum of the upper bound in \eqref{phi_lowerbound},  $\min_{J\in\mathbb{R}_+}\Phi_{\mathrm{U}}$, can be obtained by solving $\frac{\partial\Phi_{\mathrm{U}}}{\partial J}=0$, and is larger than the minimum of $\Phi$, i.e., at $J^*$, is always smaller than the minimum of its upper bound. Moreover, at $J^*$, the lower bound in \eqref{phi_lowerbound}, that is, $\Phi_{\mathrm{L}}$, is lower than $\min_{J\in\mathbb{R}_+}\Phi$ and, in turn, lower than $\min_{J\in\mathbb{R}_+}\Phi_{\mathrm{U}}$; and $\Phi_{\mathrm{L}}$ also increases with~$J$ when $J$ is large. Since  $\min_{J\in\mathbb{R}_+}\Phi$ (i.e., at $J^*$) is no greater than $\min_{J\in\mathbb{R}_+}\Phi_{\mathrm{U}}$ while $\Phi_{\mathrm{L}}$ increases with $J$ when $J$ is large, $J^*$ must be in the region satisfying $\Phi_{\mathrm{L}}\leq\min_{J\in\mathbb{R}_+}\Phi$. As a result,  $J^*\leq J_{\mathrm{TH}}=\arg_{J\in\mathbb{R}_+}\big(\Phi_{\mathrm{L}}=\min_{J\in\mathbb{R}_+}\Phi_{\mathrm{U}}\big)$.
 \end{proof}
Based on \textbf{Theorem \ref{pro:optimal_J}}, we rewrite Problem \textbf{P2} as 
\begin{align}
    \textbf{P3:}\;\underset{J\in \mathbb{N}_{+}}{\min}\;\Phi \;\;\mathrm{s.t.} \; J\in \big[0,J_{\mathrm{TH}}\big],
    \text{(\ref{P1}a) -- (\ref{P1}d)}.\notag
 \end{align}%
This problem can be readily solved using a one-dimensional search, with the search range between $1$ and $J_{TH}$ (inclusive), and the step size of 1, as summarized in \textbf{Algorithm 2}.
\begin{algorithm} 
\caption{Optimization of the number of local aggregations}
\begin{algorithmic}[1]
\STATE Evaluate $p_{\max}$, $\zeta_1,\cdots,\zeta_7$, $\mathbf{M}_1,\cdots,\mathbf{M}_4$ to specify $\Phi$;
\STATE Evaluate $\beta_1,\cdots,\beta_8$ to specify the upper and lower bounds of $\Phi$, $\Phi_{\mathrm{L}}$ and $\Phi_{\mathrm{U}}$, using (33) and (34), respectively;
\STATE Derive $J_\mathrm{TH}$ by \textbf{Theorem 3};
\STATE Execute one-dimensional search for $J^*$ within $[0,J_\mathrm{TH}]$.
\end{algorithmic}
\end{algorithm}

When $\mathbf{T}^t$ is known \textit{a-priori} and $\mathbf{C}^t$ is designed using \eqref{CoT_r_definition}, the upper and lower bounds on $\Phi$ can be obtained by substituting \eqref{bounds_CoT_r_definition} into \eqref{problem1}. The lower bound decreases monotonically as $J$ increases, and it cannot be used to narrow the search region for $J^*$. Nevertheless, as shown numerically, $\Phi$ generally decreases as $J$ decreases. This indicates that more local aggregations always contribute to the convergence of D-FL, when the channel conditions are known \textit{a-priori} and the communication errors can be properly compensated for.

\section{Numerical Results}\label{section:results}
In this section, we conduct extensive experiments to validate the new convergence analysis and algorithm of D-FL. 
    \begin{table*}[]\vspace{-2mm }
    \footnotesize
\centering
\caption{The default coordinates of all devices, generated randomly.}
\label{tab:coordinate}
\begin{tabular}{c|cccccccccccc}
\hline
Device ID & 1 & 2 & 3 & 4 & 5 & 6 & 7 & 8 & 9 & 10 & 11 & 12 \\ \hline
$x$-coordinate (m) & 2196 & 3637 & 2642 & 2884 & 5254 & 1730 & 3572 & 4546 & 4328 & 2534 & 173 & 2050 \\ \hline
$y$-coordinate (m)& 1351 & 3127 & 284 & 848 & 596 & 1923 & 2668 & 5326 & 4001 & 5171 & 575 & 3676 \\ \hline
non-i.i.d. F-MNIST Group& 1 & 2 & 3 & 4 & 5 & 6 & 7 & 8 & 2 & 8 & 4 & 6 \\ \hline
\end{tabular}
\end{table*}
Following a random geometric graph model~\cite{penrose2003random}, a 12-vertex graph with an undirected and connected topology is generated at random. Table \ref{tab:coordinate} lists the coordinates of the 12 vertices. When there are $N\leq 12$ participating devices, we select the first $N$ vertices. The connectivity density of edges (i.e., transmitter-receiver pairs) in the topological graph of the network is $\rho$; in other words, the number of transmitter-receiver pairs in the considered network is $\rho\times\frac{N(N-1)}{2} $. By default, $N=10$ and $\rho=0.5$. All devices operate at the central radio frequency $f_c=2.5$ GHz and bandwidth $B=30$ MHz. The transmit power of each device is $P=20$ dBm. The noise power spectral density is $N_0=-174$ dBm/Hz. {We set the channel gain to $h^t_{m,n} (\text{dB})=20\text{log}(f_c)+20\text{log}(d_{m,n})+32.4$~\cite{rappaport2010wireless}, where $d_{m,n}$ (km) is the distance between neighboring clients $m$ and $n$. Considering BPSK or QPSK, the PERs of the direct links are evaluated accordingly.\footnote{Take BPSK/QPSK in the AWGN channel for example. $\epsilon_{\mathrm{B},n,m}^{t}=Q(\sqrt{2\gamma^{t}_{n,m}})$, where $\gamma^{t}_{n,m}$ is the signal-to-noise ratio (SNR) between the two devices, and $Q(x)=\frac{1}{\sqrt{2\pi}}\int_{x}^{\infty}e^{-\frac{x^{2}}{2}}\textrm{d}x$ is the $Q$ function~\cite{goldsmith_2005}.}}

We consider two image classification tasks as follows.
\begin{itemize}
    \item \textbf{F-MNIST \& CNN:} We consider an image classification task to classify a non-identical and independently distributed (non-i.i.d.) F-MNIST dataset based on a CNN model. The F-MNIST dataset contains the standard MNIST dataset of ten handwritten digits, and 26 upper- and lower-case English characters. Each sample is a grayscale image with $28\times28$ pixels. The dataset has 3,400 subsets and 62 labels. We select 12 subsets at random. We divide the labels into eight independent groups. Each device selects a group with 7 or 8 labels and 20 to 70 training examples. It is estimated that $L=2$ and $\mu =0.0016$. The batch size is 20. 

\hspace{3 mm} The considered CNN model consists of two convolutional layers with 32 or 64 convolutional filters per layer. We add a pooling layer between the two convolutional layers to prevent overfitting. Following the convolutional layers, we insert two fully-connected layers to integrate the local information and generate the output. We use the rectified linear unit (ReLU) in the convolutional and fully-connected layers. By default, the CNN model has $1.21\times 10^6$ parameters in $38.72$ Mbits. The model is divided into $1,600$ packets with $24.2$ kbits per packet; unless otherwise specified. The learning rate is $\eta=0.03$. 

\item \textbf{F-CIFAR100 \& ResNet-18:} We consider another image classification task to train the F-CIFAR100 dataset based on the ResNet-18 model. The CIFAR100 dataset comprises 100 classes, with 500 training images and 100 test images per class. Each image is a $32\times 32$-pixel color image. F-CIFAR100 divides CIFAR100 into 500 training subsets with 100 images per subset. We select 10 subsets from F-CIFAR100 at random.

\hspace{3 mm}The ResNet-18 model consists of 18 layers: 17 convolutional layers and a fully-connected layer.
The model has $11.69\times 10^6$ parameters in $374.08$~Mbits. By default, the model is divided into 800 packets with $467.6$ kbits per packet. The learning rate is~$\eta=0.1$. It is estimated that $L=3$ and $\mu =0.005$. The batch size is~20.
\end{itemize}
The following FL models are compared with D-FL.
\begin{itemize}
    
    \item \textbf{C-FL:} In C-FL, one device is selected as the aggregation center. All devices transmit their local models (hop-by-hop) to the device, where all local models are aggregated to update the global model. The global model is then returned to all devices in the reverse links. We consider different nodes as the aggregation center, and select the best performer in every simulation. 

    \item  \textbf{ Unsegmented D-FL (U-D-FL):} A special case of D-FL was considered in~\cite{9772390}, where a device precludes any models received with errors from its aggregation, i.e., one packet per model. Moreover, only one local aggregation is performed per training round, i.e., $J=1$~\cite{9772390}. 
   
\end{itemize}
We examine both perfect and imperfect channels.

\subsection{CNN on F-MNIST}
\begin{figure*}[t]
	\centering
	\subfigure
	{
		\begin{minipage}[t]{0.2\textwidth}
			\centering
				\includegraphics[width=4cm]{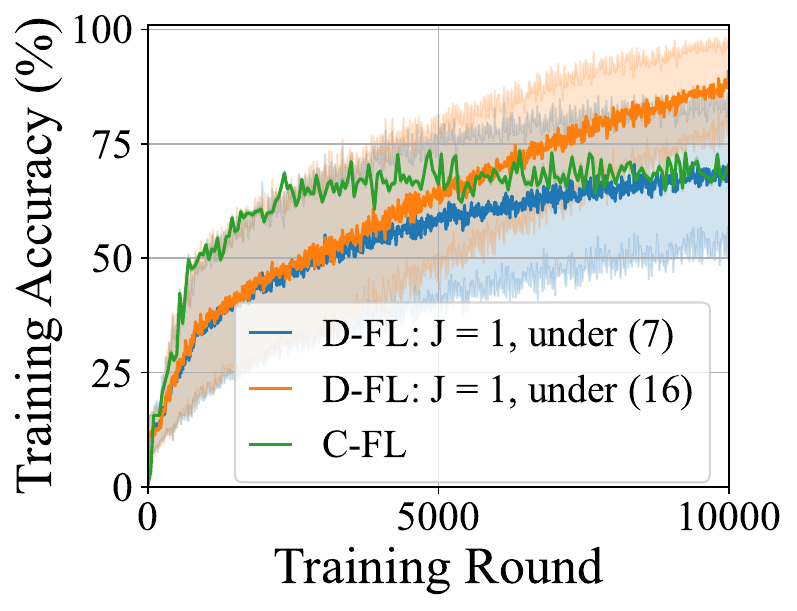}
		\end{minipage}
	}
	\subfigure
	{
		\begin{minipage}[t]{0.2\textwidth}
			\centering
				\includegraphics[width=4cm]{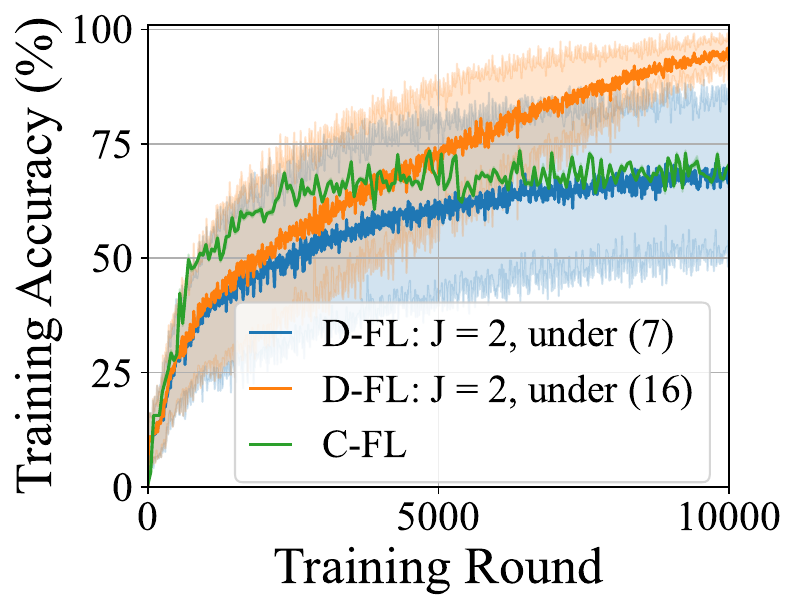}
		\end{minipage}
	} 
 \subfigure
	{
		\begin{minipage}[t]{0.2\textwidth}
			\centering
				\includegraphics[width=4cm]{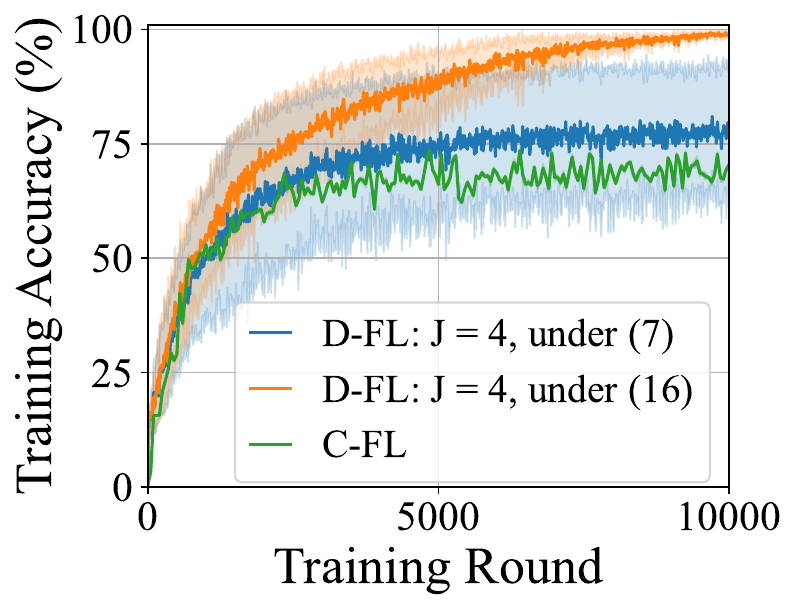}
		\end{minipage}
	}
 \subfigure
	{
		\begin{minipage}[t]{0.2\textwidth}
			\centering
				\includegraphics[width=4cm]{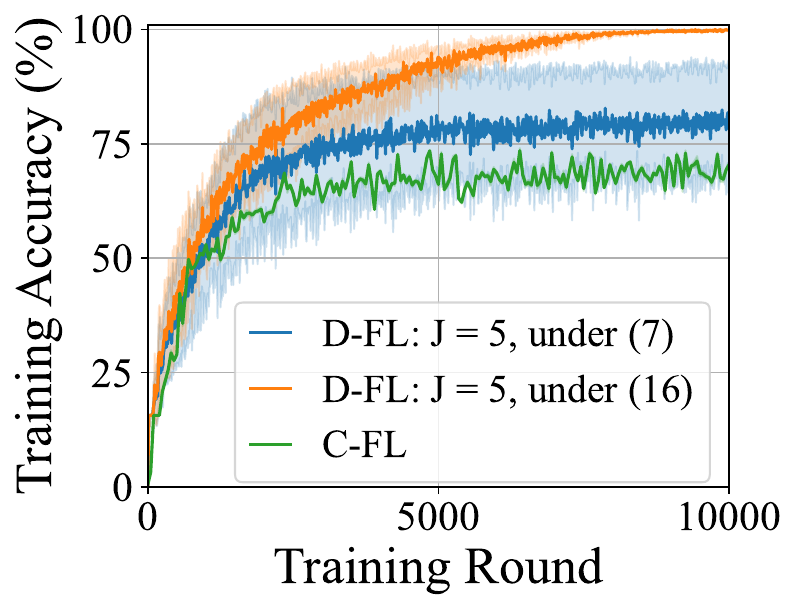}
		\end{minipage}
	}
 \subfigure
	{
		\begin{minipage}[t]{0.2\textwidth}
			\centering
				\includegraphics[width=4cm]{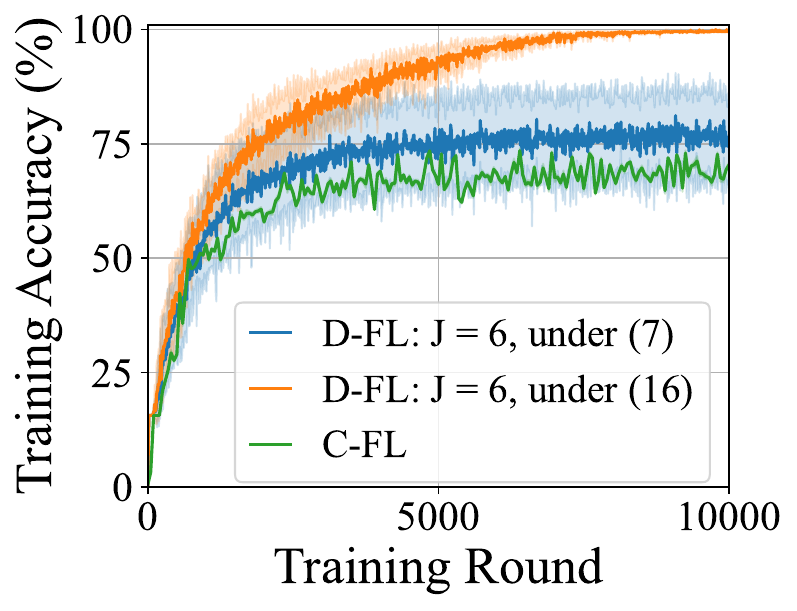}
		\end{minipage}
	}
 \subfigure
	{
		\begin{minipage}[t]{0.2\textwidth}
			\centering
				\includegraphics[width=4cm]{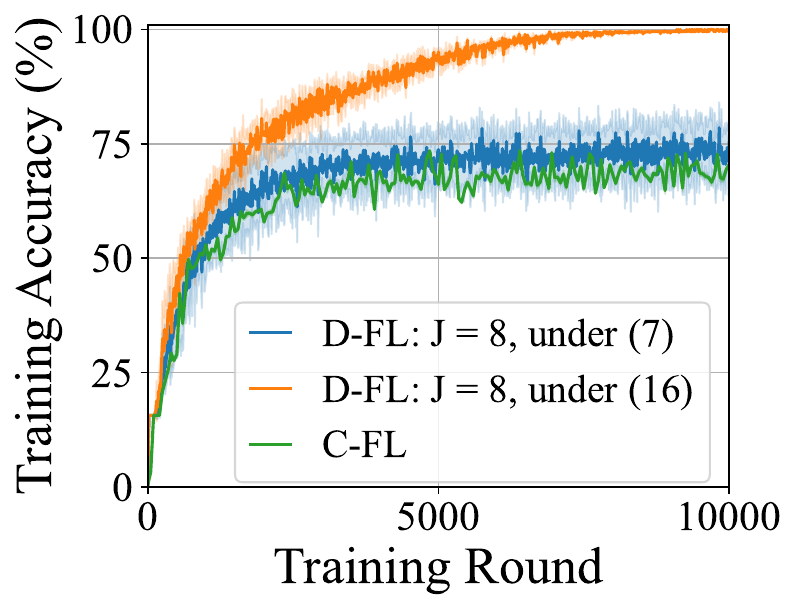}
		\end{minipage}
	}
 \subfigure
	{
		\begin{minipage}[t]{0.2\textwidth}
			\centering
				\includegraphics[width=4cm]{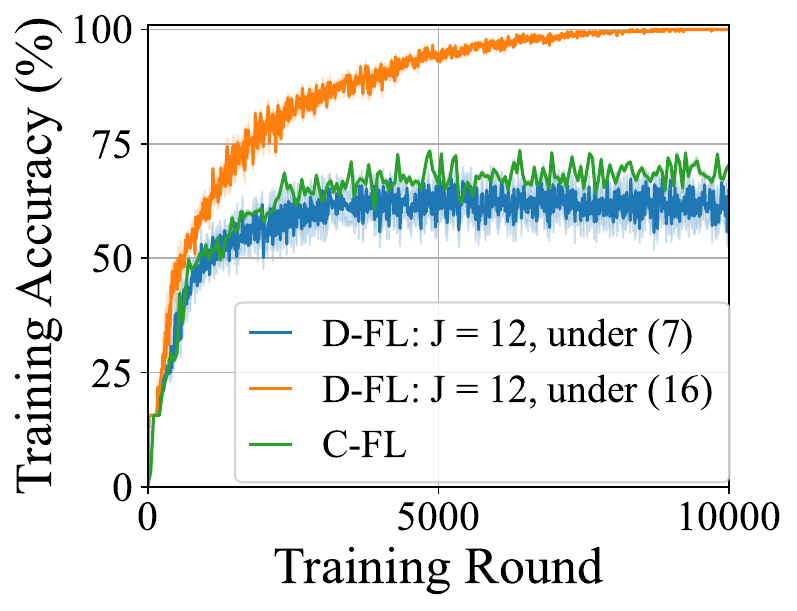}
		\end{minipage}
	}
 \subfigure
	{
		\begin{minipage}[t]{0.2\textwidth}
			\centering
				\includegraphics[width=4cm]{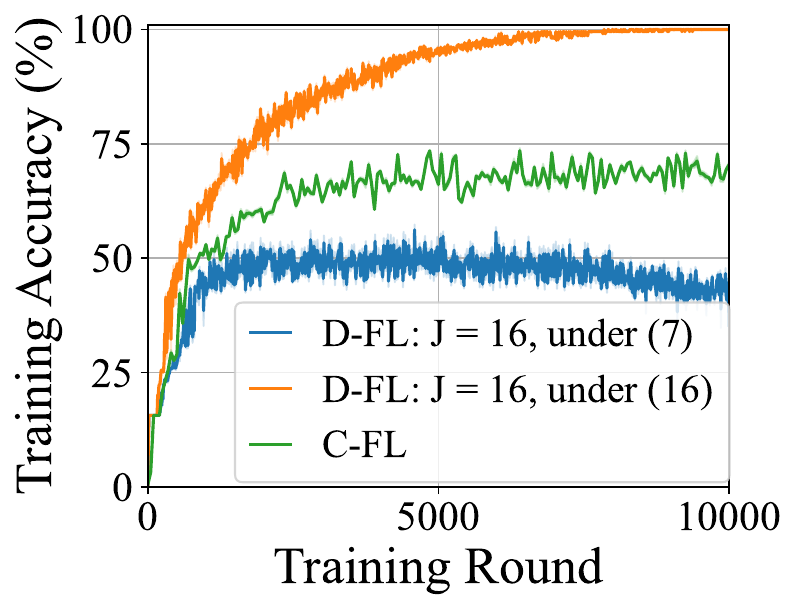}
		\end{minipage}
	}\vspace{-3mm }
\caption{The training accuracy of D-FL versus the training rounds, where the CNN model and non-i.i.d. F-MNIST dataset. $N=10$, $\rho=0.5$, and $\kappa=1$. The C-FL is also considered with device $4$ serving as the central aggregator.}\label{Convergence curves}\vspace{-2mm }
\end{figure*}

\begin{figure} 
\centering
\includegraphics[scale=0.65]{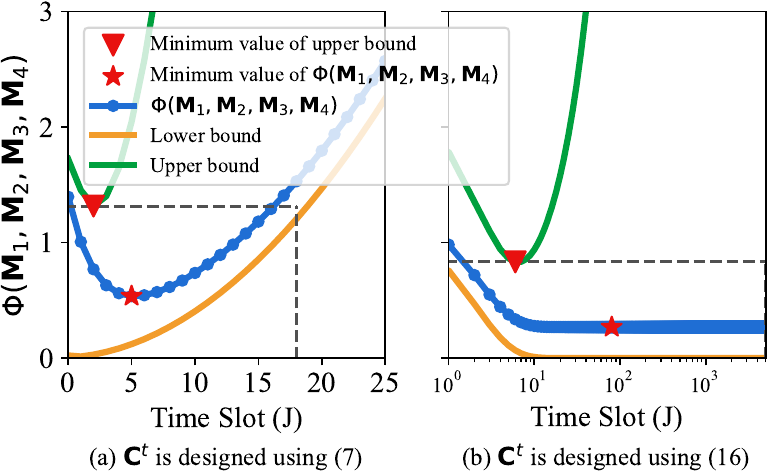}
\caption{The upper and lower bounds of $\Phi$ under the CNN model, where the default setting is considered: $N=10$ and $\rho=0.5$.}
    \label{fig:lower_upper}\vspace{-3mm }
\end{figure}
Fig. \ref{Convergence curves} shows the training accuracy of D-FL with the increasing number of rounds, where $N=10$ and different numbers of local aggregations per round, $J$, are considered. Two consensus matrix designs of D-FL using \eqref{C_r_definition} and \eqref{CoT_r_definition} are tested, and C-FL is also plotted. Each solid line is the average training accuracy of all 10 devices under a given $J$. The shade shows the individual training accuracy of the $10$ devices. 

When the channel conditions $\mathbf{T}^t$ of the network are not known \textit{a-priori} and the consensus matrix $\mathbf{C}^t$ is designed using~\eqref{C_r_definition}, the best convergence of D-FL is achieved under $J=5$ and is even better than C-FL. When $J<5$, the average training accuracy after convergence increases with $J$, benefiting from increased aggregation opportunities. When $J>5$, the average training accuracy after convergence decreases with the further increase in $J$. This is due to the increasingly accumulated impact of transmission errors on accuracy. Upon convergence, the differences among the local models of the devices decrease consistently with the increase in $J$. This is reasonable since increasing transmissions among the devices contribute to the conformity of the local models. Moreover, the optimal number of local aggregations is $J^*=5$ in Fig. \ref{Convergence curves}, which is consistent with the observation in Fig. \ref{fig:lower_upper}(a), where the upper and lower bounds of $\Phi$ are used to narrow the search range of $J$.

On the other hand, when $\mathbf{T}^t$ is known \textit{a-priori} and $\mathbf{C}^t$ is designed using \eqref{CoT_r_definition}, the convergence and the average training accuracy of D-FL improve by increasing $J$ although the improvements slow down as $J$ increases, as shown in Fig.~\ref{Convergence curves}. This is consistent with the observation in Fig. \ref{fig:lower_upper}(b), where $\Phi$ decreases with increasing of $J$ over a wide range of~$J$. Moreover, D-FL under \eqref{CoT_r_definition} substantially outperforms D-FL under \eqref{C_r_definition}, as shown in Fig. \ref{Convergence curves} and also consistently shown in the rest of the figures in this paper. The conclusion drawn is that the \textit{a-priori} knowledge of the imperfect channels can help compensate for the imperfect local aggregations resulting from the communication errors and contribute to improved convergence and accuracy.

Next, we evaluate D-FL statistically, where the average training accuracy of the last 1,000 rounds is considered. We also scale the graph of the considered network with a scaling factor $\kappa$; i.e., the graph (and all its edges) is expanded by $\kappa$. Then, the impacts of $N$, the connectivity density $\rho$, and $\kappa$ are assessed on the average training accuracy and $\Phi$,  when $\mathbf{T}^t$ is unknown \textit{a-priori} and $\mathbf{C}^t$ is designed using \eqref{C_r_definition}.

Fig. \ref{fig:user_num_changes} demonstrates that under different numbers of devices, the optimal numbers of local aggregations per round, i.e., $J^*$, identified in Fig. \ref{fig:user_num_changes}(a) by solving Problem \textbf{P3} are consistent with the optimal numbers experimentally obtained in Fig. \ref{fig:user_num_changes}(b), when $\mathbf{C}^t$ is designed using \eqref{C_r_definition}. Moreover, Fig. \ref{fig:user_num_changes}(a) shows that increasing the participating devices does not always help improve the training accuracy of D-FL in the presence of transmission errors. This is because the training accuracy depends heavily on the topology of the participating devices (not only the number), as well as the SNR of the links. Nevertheless, the highest training accuracy of D-FL obtained under the identified optimal number of local aggregations per training round is better than that of C-FL. The effectiveness of the algorithm is demonstrated.

\begin{figure}\vspace{-2mm }
	\centering
	\subfigure[$\Phi(\mathbf{M}_1,\mathbf{M}_2,\mathbf{M}_3,\mathbf{M}_4)$ vs. $J$.]
	{
		\begin{minipage}[t]{0.45\textwidth}
			\centering
		\includegraphics[width=7cm]{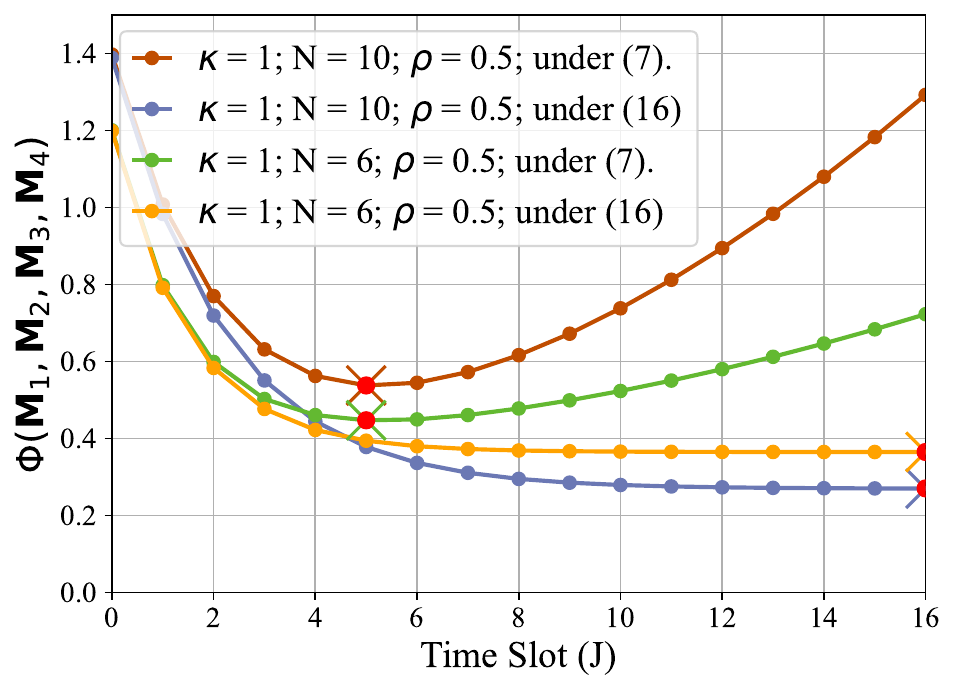}
		\end{minipage}
	}
	\subfigure[Training accuracy vs. $J$, where D-FL (solid line) and C-FL (dotted line) are compared under the same topology.] 
	{
		\begin{minipage}[t]{0.45\textwidth}
			\centering
			\includegraphics[width=7cm]{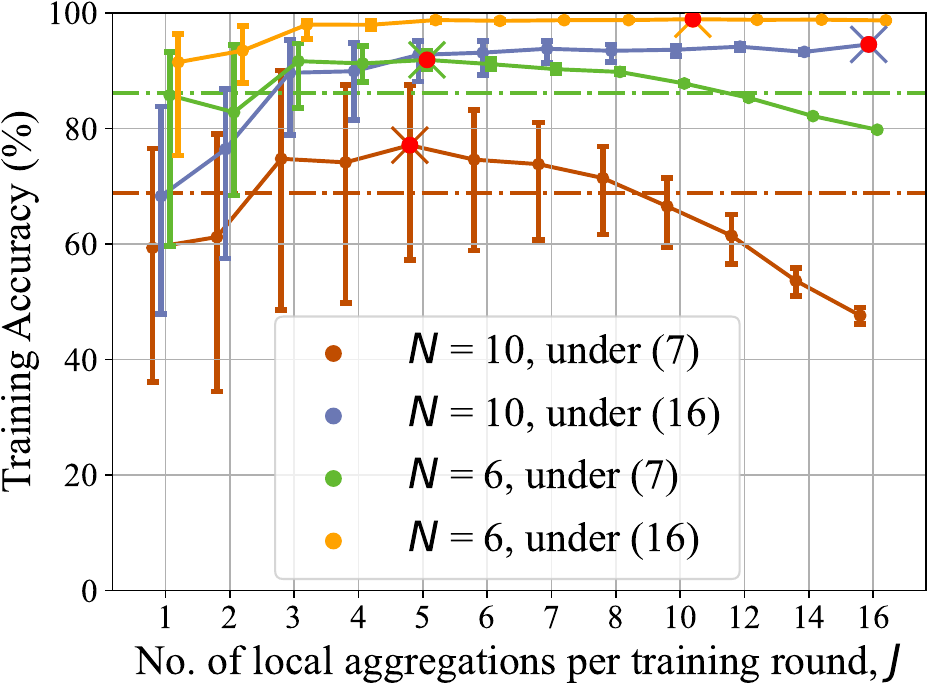}
		\end{minipage}
	} \vspace{-2mm }\caption{The impact of the number of devices  under the CNN model.}\label{fig:user_num_changes}\vspace{-3mm }
\end{figure}
\begin{figure}[h]\vspace{-3mm }
	\centering
	\subfigure[$\Phi(\mathbf{M}_1,\mathbf{M}_2,\mathbf{M}_3,\mathbf{M}_4)$ vs. $J$.]
	{
		\begin{minipage}[t]{0.45\textwidth}
			\centering
				\includegraphics[width=7cm]{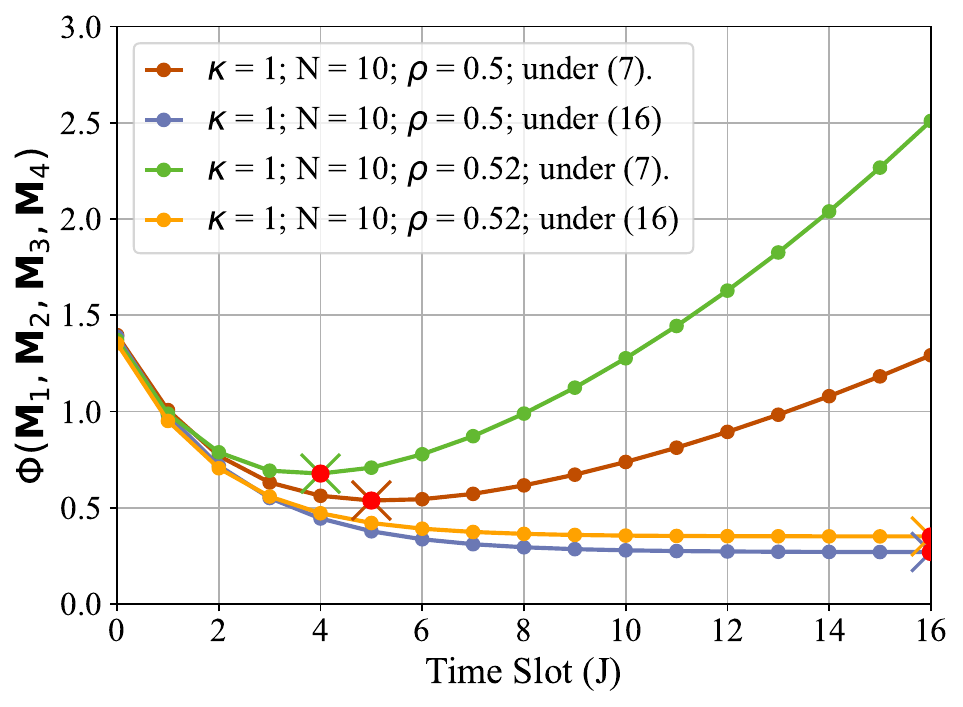}\vspace{-2mm }
		\end{minipage}
	}
	\subfigure[Training accuracy vs. $J$, where D-FL (solid line) and C-FL (dotted line) are compared under the same topology.]
	{
		\begin{minipage}[t]{0.45\textwidth}
			\centering
			\includegraphics[width=7cm]{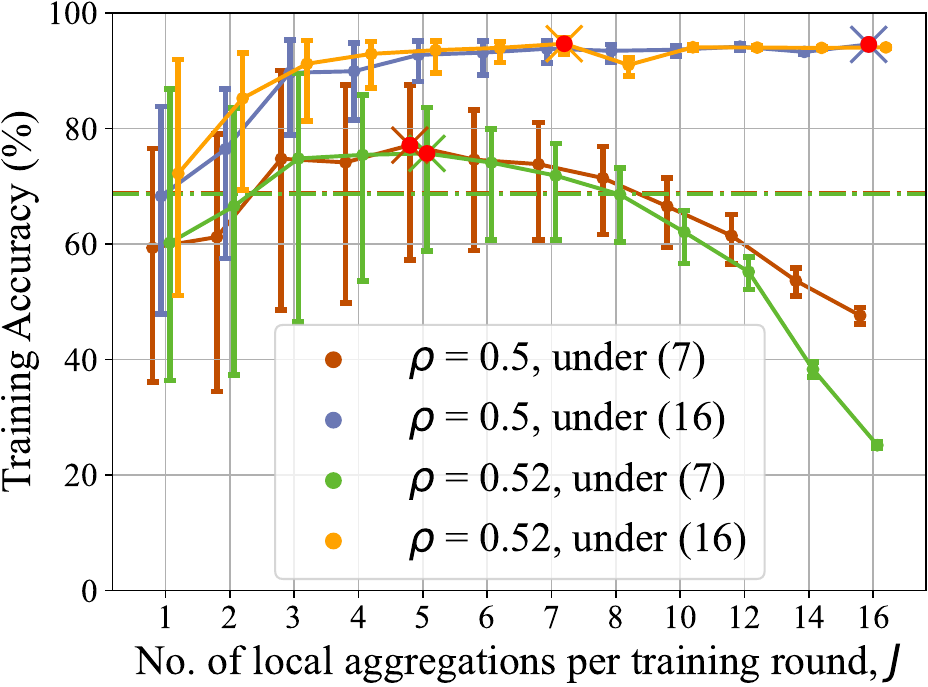}
		\end{minipage}\vspace{-3mm }
	} \vspace{-2mm }\caption{The impact of connection density  under the CNN model.}\label{fig:density}
\end{figure}

Fig. \ref{fig:density}(a) shows that the minimum of $\Phi$ is smaller under a smaller $\rho$, i.e., $J^*$ decreases as $\rho$ increases, when $\mathbf{C}^t$ is designed using \eqref{C_r_definition}. In other words, the reason is that higher connectivity leads to more edges; a device is likely to have more neighbors. This allows each device to receive the local models further away, helping reduce the variance of the locally aggregated models among the devices. Fig. \ref{fig:density}(b) shows that the highest training accuracy of D-FL is achieved at the optimal numbers of local aggregations $J^*$ identified in Fig. \ref{fig:density}(a). The optimization of $J^*$ is important, as it allows D-FL to outperform C-FL, as shown in Fig. \ref{fig:density}(b). When both $J$ and $\rho$ are large, $\Phi$ is large, see Fig. \ref{fig:density}(a), and the training accuracy of D-FL is low and can be worse than that of C-FL, see Fig. \ref{fig:density}(b). The reason is that the growth of $\rho$ leads to an increase of long and weak links, while the growth of $J$ results in accumulations of transmission errors in the locally aggregated models.

\begin{figure}\vspace{-3mm }
	\centering
	\subfigure[$\Phi(\mathbf{M}_1,\mathbf{M}_2,\mathbf{M}_3,\mathbf{M}_4)$ vs. $J$.]
	{
		\begin{minipage}[t]{0.45\textwidth}
			\centering
					\includegraphics[width=7cm]{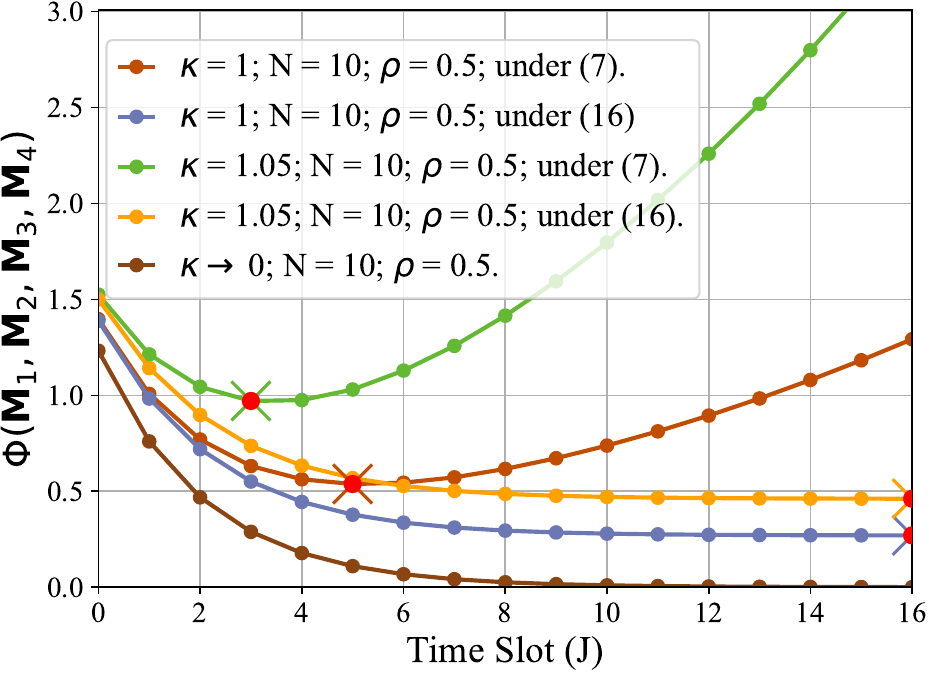}\vspace{-2mm }
		\end{minipage}
	}
	\subfigure[Training accuracy vs. $J$, where D-FL (solid line) and C-FL (dotted line) are compared under the same topology.]
	{
		\begin{minipage}[t]{0.45\textwidth}
			\centering
			\includegraphics[width=7cm]{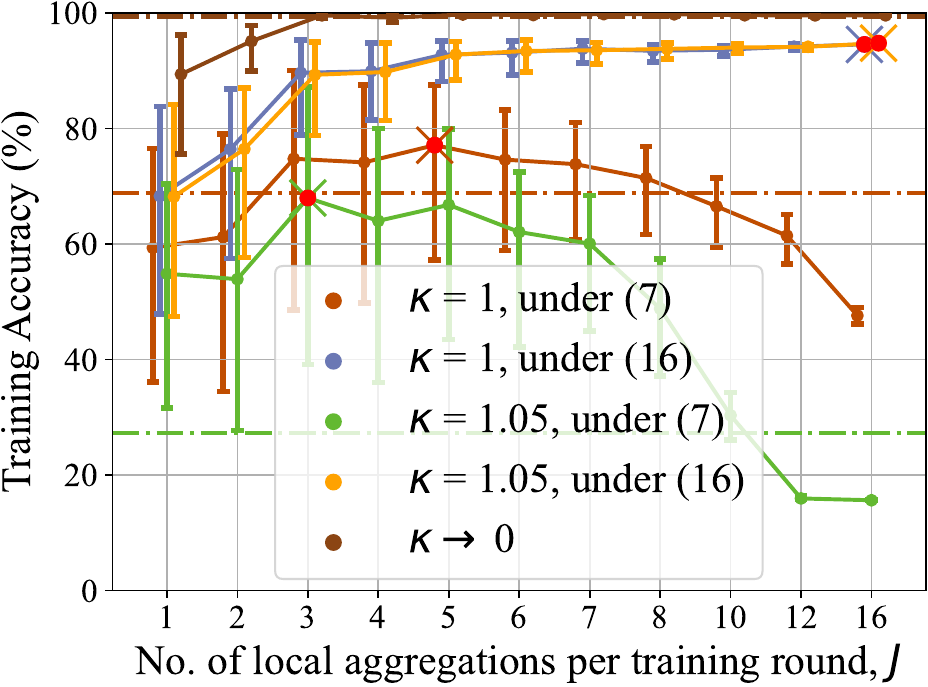}
		\end{minipage}
	}\vspace{-1mm } \caption{The impact of network scale under the CNN model.}\label{fig:scale_changes}
\end{figure}

\begin{figure}[h]
    \centering
    \includegraphics[scale=0.45]{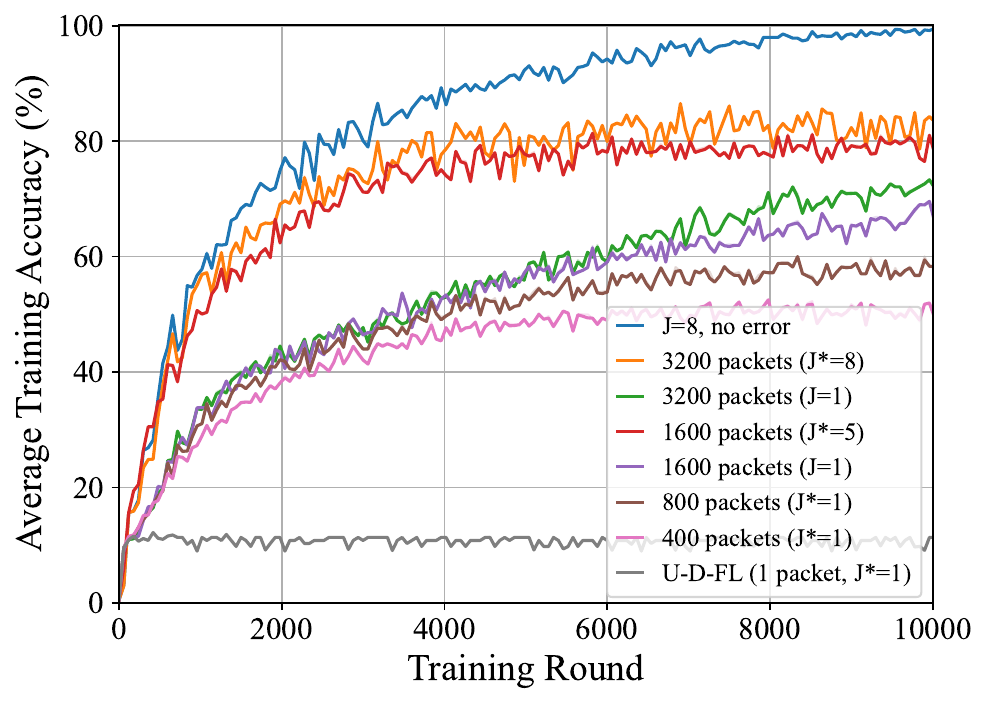}
    \caption{The convergence curves of D-FL under different numbers of packets under the CNN model when $\mathbf{C}^t$ is given by \eqref{C_r_definition}. }
    \label{fig:node_failure}\vspace{-3mm }
\end{figure}

Fig.~\ref{fig:scale_changes}(a) shows that $J^*$ decreases as $\kappa$ increases,  when $\mathbf{C}^t$ is designed using \eqref{C_r_definition}. Fig.~\ref{fig:scale_changes}(b) indicates that the highest training accuracy of D-FL is achieved at $J^*$ identified in Fig. \ref{fig:scale_changes}(a), which is always better than the training accuracy of C-FL. In other words, a larger network scale would require fewer local aggregations per round to achieve its highest training accuracy. This is because as $\kappa$ increases, the link distances of the considered network grow, reducing the SNR and increasing the PER. It is noticed in Fig. \ref{fig:scale_changes} that when $\kappa$ is sufficiently small, e.g., $\kappa \rightarrow 0^+$, the communications between the devices are error-free. Increasing $J$ does no harm to the training accuracy, which can be as high as 100\%, even when $\mathbf{T}^t$ is unknown \textit{a-priori} and $\mathbf{C}^t$ is designed using \eqref{C_r_definition}.

Fig. \ref{fig:node_failure} studies the impact of model segmentation (or packetization) on D-FL when $\mathbf{T}^t$ is unknown \textit{a-priori} and $\mathbf{C}^t$ is designed using \eqref{C_r_definition}. We also plot an ideal situation, where the link conditions are ideal  (i.e., no error), and the number of local aggregations per round is $J=8$. We see that D-FL can benefit from segmenting a model into smaller packets because a larger packet is more prone to errors. Moreover, $J^*$ increases as the packet size $L_{\mathrm{p}}$ decreases, since more aggregations of quality local models improve the convergence of D-FL.

On the other hand, when $\mathbf{T}^t$ is known \textit{a-priori} and $\mathbf{C}^t$ is designed using \eqref{CoT_r_definition}, it is consistently demonstrated in Figs.~\ref{fig:user_num_changes} -- \ref{fig:scale_changes} that the convergence and the training accuracy of D-FL improve with as $J$ increases, which verifies the finding in Fig. 3b. Moreover, given $J$, the convergence and the training accuracy improve with $N$, $\rho$, and $\kappa$ across Figs.~\ref{fig:user_num_changes} -- \ref{fig:scale_changes}, which is consistent with the observations in the scenario with $\mathbf{T}^t$ unknown \textit{a-priori} and $\mathbf{C}^t$ designed using \eqref{C_r_definition}.

\subsection{ResNet-18 on F-CIFAR100}
Next, we consider the ResNet-18 model on the F-CIFAR100 dataset. The experimental setups are the same as the default setups under the CNN model and F-MNIST dataset, except that 100 packets are considered here. As shown in Fig.~\ref{Convergence_CIFAR_100}, the ResNet-18 model displays a consistent convergence trend with the CNN model shown in Fig. \ref{Convergence curves}. Particularly, when $\mathbf{T}^t$ is unknown \textit{a-priori} and $\mathbf{C}^t$ is designed using \eqref{C_r_definition}, the best convergence of D-FL in Fig. 8 is achieved by setting $J=5$ per round. The optimum numbers of local aggregations observed in Fig.~\ref{fig:Fed-cifar100}(b) are consistent with $J^*$ identified analytically in Fig.~\ref{fig:Fed-cifar100}(a). Moreover, the average training accuracy increases when a model is segmented into a larger number of smaller packets, as also shown in Fig.~\ref{fig:node_failure}. On the other hand, when $\mathbf{T}^t$ is known \textit{a-priori} and $\mathbf{C}^t$ is designed using \eqref{CoT_r_definition}, the convergence and the average training accuracy of D-FL are improved by increasing $J$, as shown in Fig. \ref{fig:Fed-cifar100}.

\begin{figure*}[t]\vspace{-3mm }
	\centering
	\subfigure
	{
		\begin{minipage}[t]{0.2\textwidth}
			\centering
				\includegraphics[width=4cm]{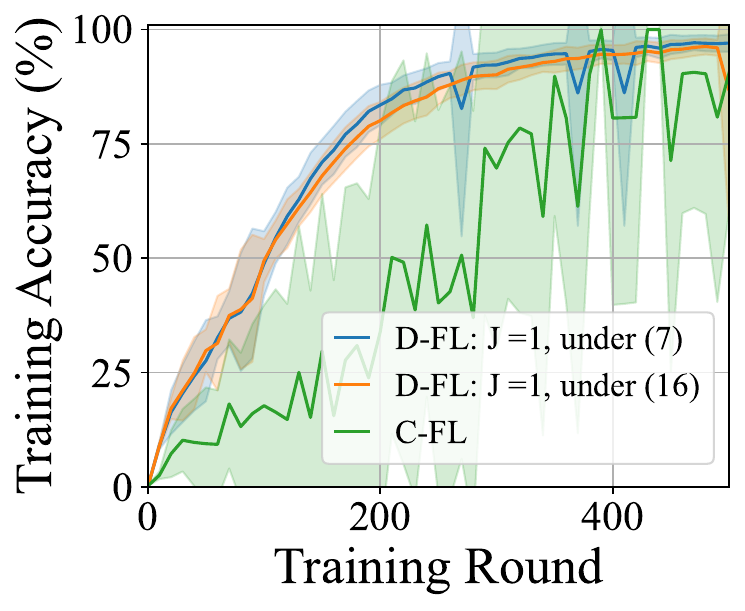}
		\end{minipage}
	}
	\subfigure
	{
		\begin{minipage}[t]{0.2\textwidth}
			\centering
				\includegraphics[width=4cm]{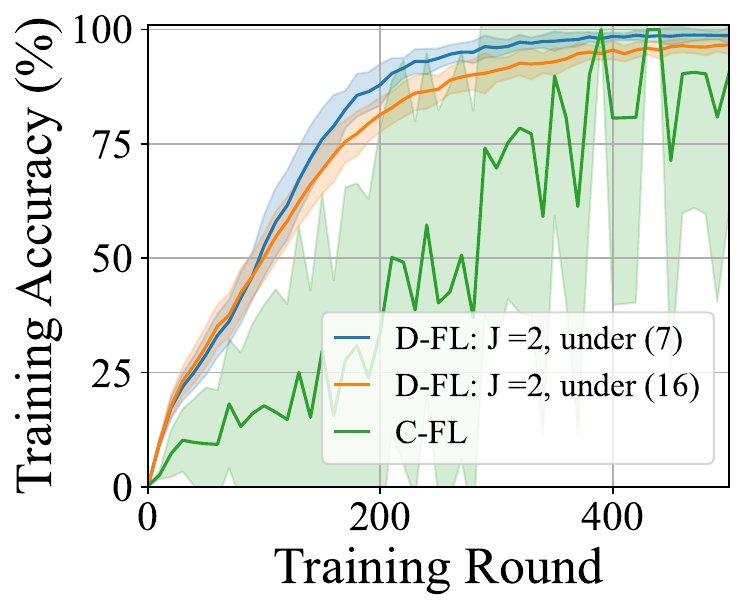}
		\end{minipage}
	} 
 \subfigure
	{
		\begin{minipage}[t]{0.2\textwidth}
			\centering
				\includegraphics[width=4cm]{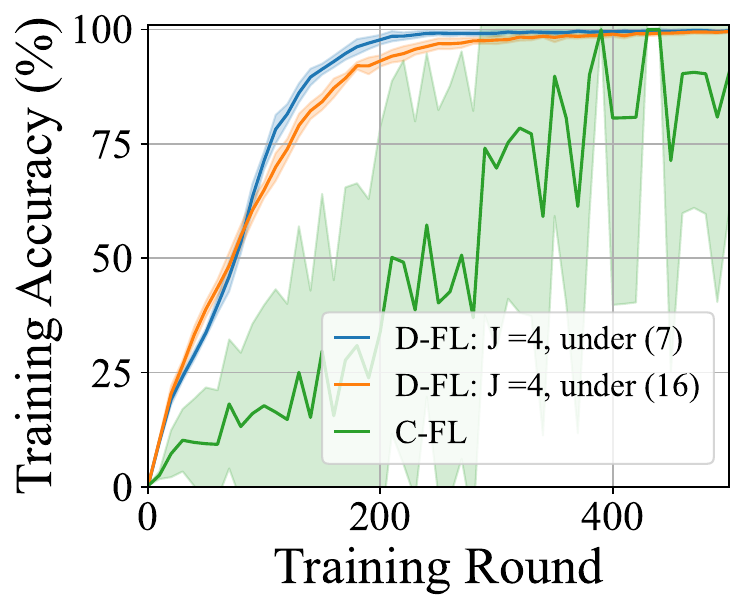}
		\end{minipage}
	}
 \subfigure
	{
		\begin{minipage}[t]{0.2\textwidth}
			\centering
				\includegraphics[width=4cm]{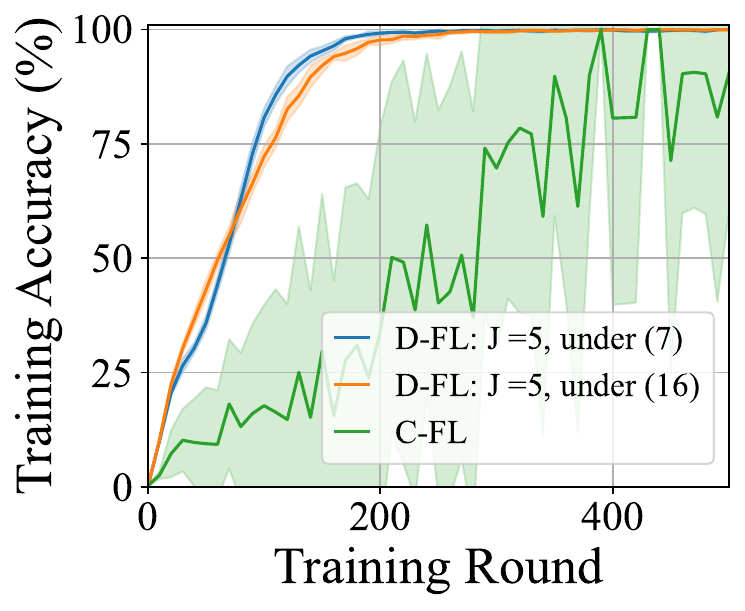}
		\end{minipage}
	}
 \subfigure
	{
		\begin{minipage}[t]{0.2\textwidth}
			\centering
				\includegraphics[width=4cm]{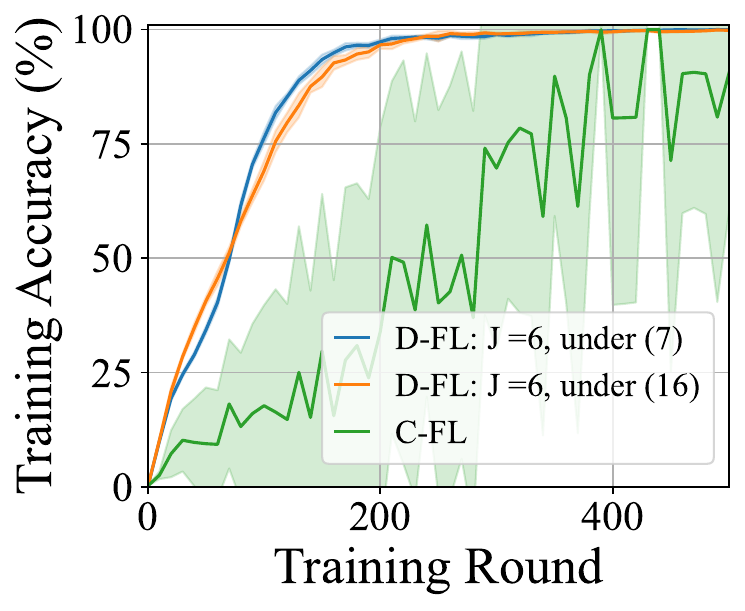}
		\end{minipage}
	}
 \subfigure
	{
		\begin{minipage}[t]{0.2\textwidth}
			\centering
				\includegraphics[width=4cm]{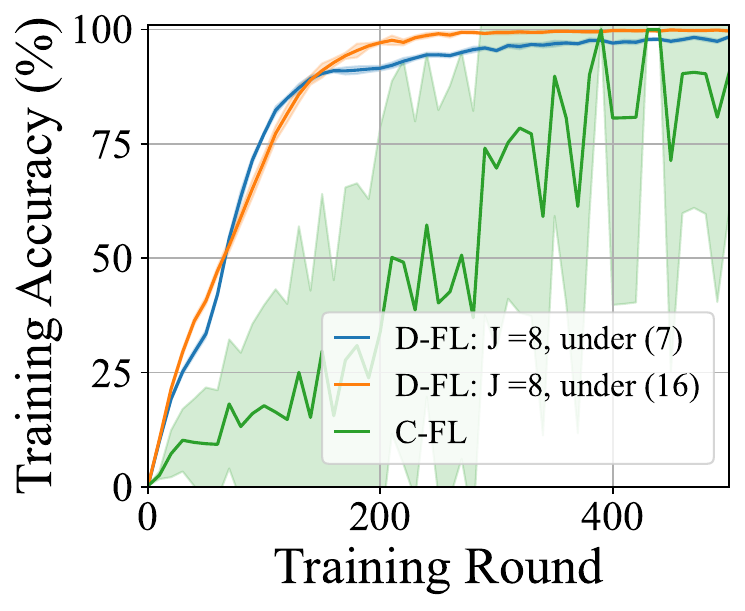}
		\end{minipage}
	}
 \subfigure
	{
		\begin{minipage}[t]{0.2\textwidth}
			\centering
				\includegraphics[width=4cm]{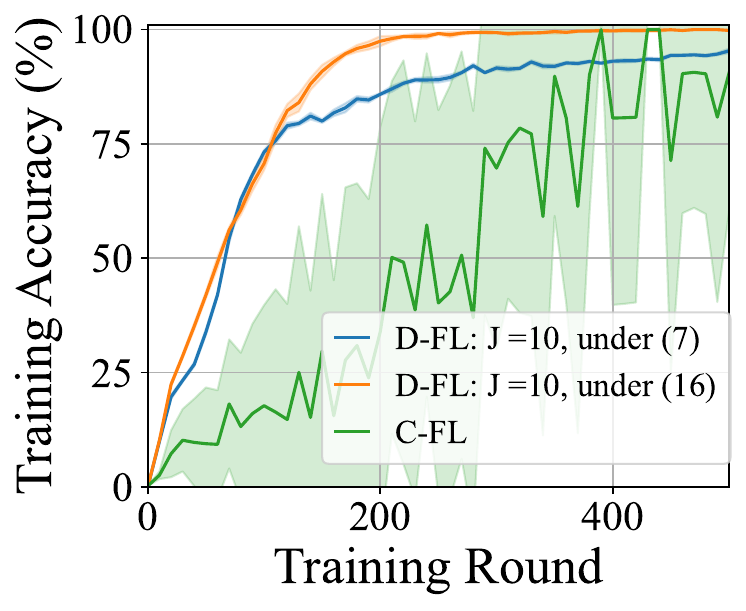}
		\end{minipage}
	}
 \subfigure
	{
		\begin{minipage}[t]{0.2\textwidth}
			\centering
				\includegraphics[width=4cm]{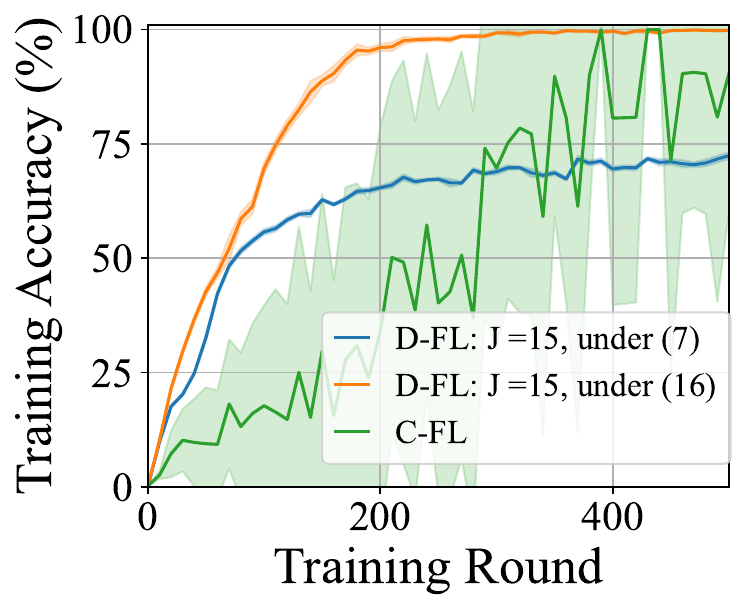}
		\end{minipage}
	}\vspace{-3mm }
\caption{The training accuracy of D-FL versus rounds, where the ResNet-18 model and F-CIFAR100 dataset are considered. $N=10$, $\rho=0.5$, and $\kappa=1$. The C-FL is also considered with device $1$ serving as the central aggregator.}\label{Convergence_CIFAR_100}
\end{figure*}

\begin{figure}[h]\vspace{-3mm }
	\centering
	\subfigure[$\Phi(\mathbf{M}_1,\mathbf{M}_2,\mathbf{M}_3,\mathbf{M}_4)$ vs. $J$.]
	{
		\begin{minipage}[t]{0.4\textwidth}
			\centering
					\includegraphics[width=7cm]{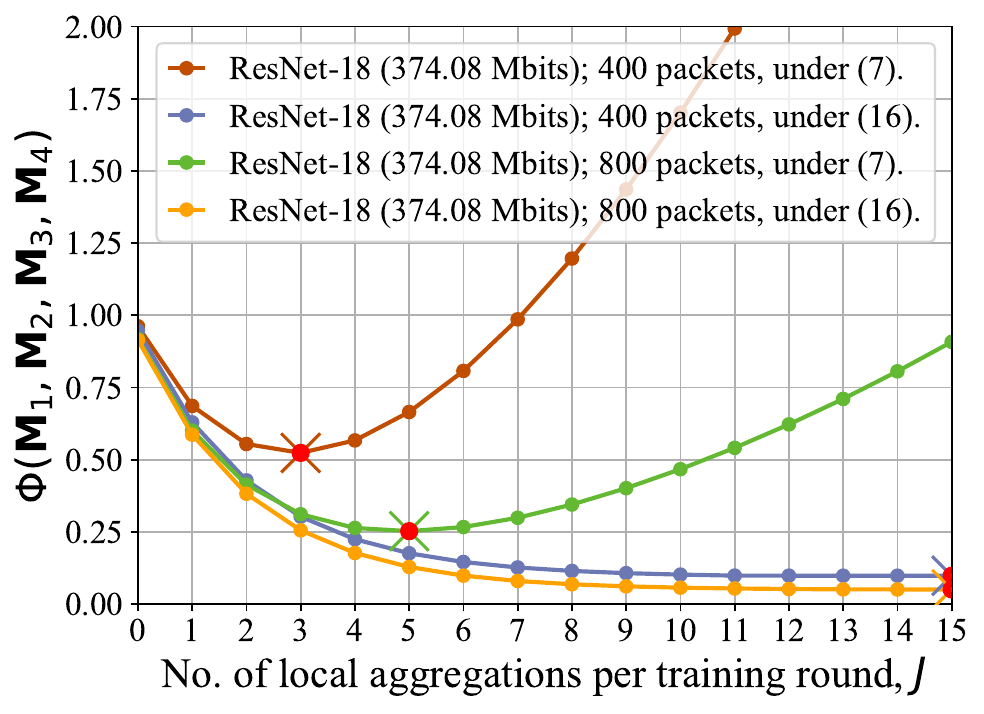}
		\end{minipage}
	}
	\subfigure[Training accuracy vs. $J$.]
	{
		\begin{minipage}[t]{0.4\textwidth}
			\centering
			\includegraphics[width=7cm]{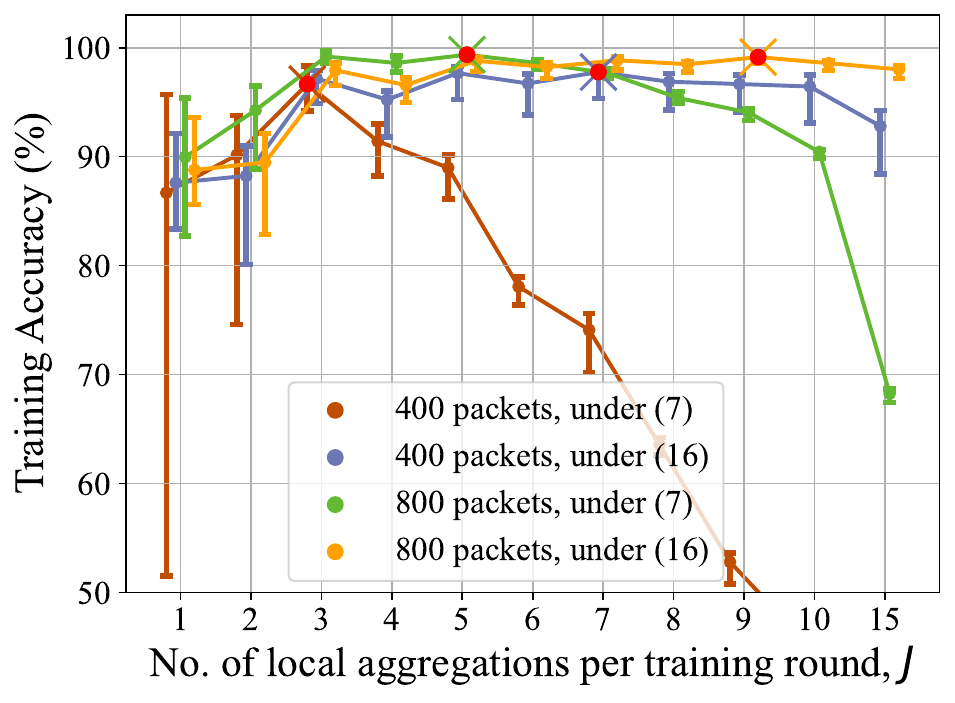}
		\end{minipage}
	} \vspace{-3mm }\caption{Impact of model segmentation under the ResNet-18 model.}\label{fig:Fed-cifar100}\vspace{-2mm }
\end{figure}

In addition to the communication errors and consensus matrix designs, $\Phi$ also depends on the convexity and smoothness of the model (i.e., $\mu$ and $L$), the hyper-parameters (i.e., $\eta$ and $I$), and the packet length $L_{\mathrm{p}}$. For the considered CNN model, $\mu =0.0016$, $L=2$,  $\eta=0.03$, $I=5$, and subsequently, $\Phi=1041\Vert\mathbf{1}_{N}\mathbf{M}_{1}\Vert^{2}+\Vert\mathbf{M}_{3}\Vert^{2}+0.03\Vert\mathbf{M}_{4}\Vert^{2}+1.31\Vert\mathbf{M}_{1}\Vert^{2}+1042\big\Vert\mathbf{M}_{2}\big\Vert.$ For the ResNet-18 model, $\mu =0.005$, $L=3$, $\eta=0.1$, $I=5$, and subsequently, $\Phi=333\Vert\mathbf{1}_{N}\mathbf{M}_{1}\Vert^{2}+\Vert\mathbf{M}_{3}\Vert^{2}+0.09\Vert\mathbf{M}_{4}\Vert^{2}+1.38\Vert\mathbf{M}_{1}\Vert^{2}+335\big\Vert\mathbf{M}_{2}\big\Vert$. When $\mathbf{T}^t$ is unknown \textit{a-priori} and $\mathbf{C}^t$ is designed using~\eqref{C_r_definition}, these parameters lead to different $\Phi$ curves and $J^*$ values between the two models in Fig.~\ref{fig:impact_of_different_models}. Although the ResNet-18 model has a larger packet size and hence a higher PER than the CNN model, $J^*$ is larger for the ResNet-18 model. This is different from the observations in Figs.~\ref{fig:node_failure} and~\ref{fig:Fed-cifar100} that a larger packet size (or a higher PER) results in a smaller $J^*$, when the same model (e.g., either the CNN or ResNet-18) is considered.

\begin{figure}\vspace{-3mm }
	\centering
	\subfigure[$\Phi(\mathbf{M}_1,\mathbf{M}_2,\mathbf{M}_3,\mathbf{M}_4)$ vs. $J$.]
	{
		\begin{minipage}[t]{0.375\textwidth}
			\centering
					\includegraphics[width=7cm]{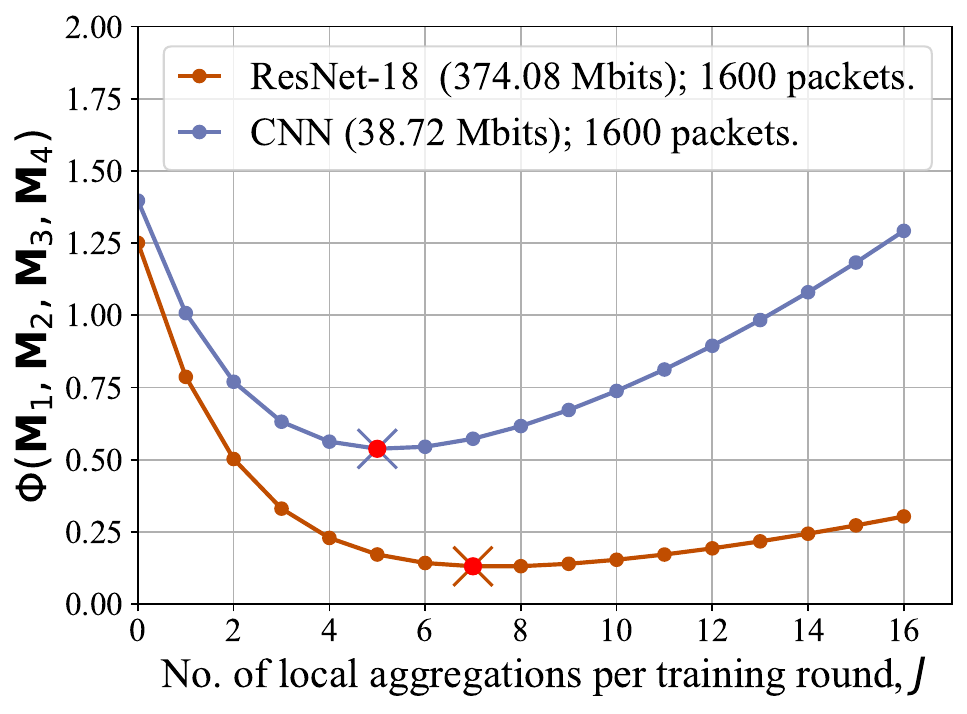}
		\end{minipage}
	}
	\subfigure[Training accuracy vs. $J$.]
	{
		\begin{minipage}[t]{0.375\textwidth}
			\centering
			\includegraphics[width=7cm]{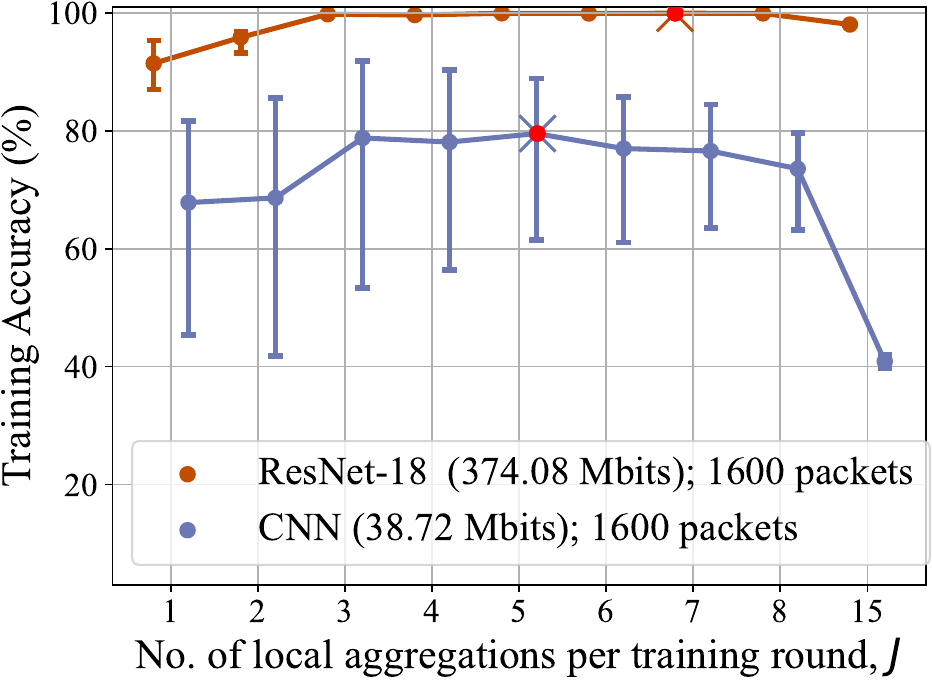}
		\end{minipage}
	}\vspace{-3mm }
 \caption{ Comparison between the CNN and ResNet-18 models, where $\eta=0.03$ for the CNN and $\eta=0.1$ for the ResNet-18. $I=5$. The number of packets is $1600$ per model. $\mathbf{C}^t$ is given by \eqref{C_r_definition}.}\label{fig:impact_of_different_models}\vspace{-2mm }
\end{figure}

 \section{Conclusion}\label{section:con}
This paper has analyzed the impact of imperfect channels on D-FL, and subsequently determined the optimal number of local aggregations. It has revealed that increased local iterations and the \textit{a-priori} knowledge of channels can help alleviate the impact of imperfect channels so that more local aggregations can lead to better convergence. Experiments on the CNN and ResNet-18 models have validated our convergence analysis. It has been shown that D-FL with the optimal number of local aggregations outperforms C-FL and D-FL without optimizing local aggregations by 12.5\% and 10\% in training accuracy, respectively, when the channel conditions are unknown. In the future, we plan to investigate the impact of mobility and non-i.i.d. errors in communication links on the convergence and reliability of D-FL. We also plan to study model selection for D-FL, where each device can adaptively select the models for aggregation under fast-changing fading channels.

\appendix
\subsection{Proof of \textbf{Lemma \ref{theo1}}}\label{appendix:theorem proof}
\begin{lemma}
    \label{minibatch_gradient}
    The variance between the virtually aggregated full-batch gradient and the virtually aggregated stochastic gradient is bounded by
   \begin{small}%
    \begin{align}%
  \notag & \mathbb{E}_{\left\{ \xi_{n,i-1}^{t},{\forall n}\right\}}\Big(\Big\Vert\!{\sum}_{\forall n}p_{n}(\eta\mathbf{g}_{n,i\!-\!1}^{t}\!\!-\!\!\eta\mathbf{g}_{\xi,n,i\!-\!1}^{t})\Big\Vert^{2}\Big)\!\\\notag
  &\;=\!{\sum}_{\forall n}p_{n}^{2}\mathbb{E}_{\xi_{n,i-1}^{t}}\big(\big\Vert\eta\mathbf{g}_{n,i\!-\!1}^{t}\!\!-\!\!\eta\mathbf{g}_{\xi,n,i\!-\!1}^{t}\big\Vert^{2}\big)\\
\notag  & \;+\!\!2\!\!\!\sum_{n\neq m}\!\!\!\big\langle p_{n}\!\eta(\mathbf{g}_{n,i\!-\!1}^{t}\!\!-\!\!\mathbb{E}_{\xi_{n,i\!-\!1}^{t}}\!\!(\mathbf{g}_{\xi,n,i\!-\!1}^{t})\!),p_{m}\!\eta(\mathbf{g}_{m,i\!-\!1}^{t}\!\!-\!\!\mathbb{E}_{\xi_{m,i\!-\!1}^{t}}\!\!(\mathbf{g}_{\xi,m,i\!-\!1}^{t}\!)\!)\!\big\rangle  \! \\
  &     =\!\! {\sum}_{\forall n}\!p_{n}^{2}\!\mathbb{E}_{\xi_{n,i-1}^{t}}\big(\!\big\Vert\eta\mathbf{g}_{n,i\!-\!1}^{t}\!\!-\!\!\eta\mathbf{g}_{\xi,n,i\!-\!1}^{t}\big\Vert^{2}\big)\!\!\leq\!\! {\eta^2}{\sum}_{\forall n}\!p_{n}^{2}\!\sigma^2_n, \!\!\label{a_bound:SGD} 
  \end{align}  
\end{small}
where \eqref{a_bound:SGD} is based on \textbf{Assumption \ref{assumption}-4)}.
\end{lemma}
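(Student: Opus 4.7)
The plan is to expand the squared norm, invoke independence of mini-batch sampling across devices to kill cross terms, and then apply the bounded-variance clause of Assumption~\ref{assumption}-4) termwise. Define $X_n := p_n\eta(\mathbf{g}_{n,i-1}^t - \mathbf{g}_{\xi,n,i-1}^t)$, so the quantity to bound becomes $\mathbb{E}_{\{\xi_{n,i-1}^t,\forall n\}}\bigl[\|\sum_{\forall n} X_n\|^2\bigr]$.

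First I would apply the polarization identity $\|\sum_n X_n\|^2 = \sum_n \|X_n\|^2 + 2\sum_{n<m}\langle X_n, X_m\rangle$ and take expectation over $\{\xi_{n,i-1}^t\}_{\forall n}$. For each diagonal term, observe that $\mathbf{g}_{n,i-1}^t = \mathbb{E}_{\xi_{n,i-1}^t}(\mathbf{g}_{\xi,n,i-1}^t)$ by the very definition recorded in Assumption~\ref{assumption}-4), so $X_n$ is the zero-mean SGD noise at device $n$ scaled by $p_n\eta$. The bounded-variance part of the same assumption then gives $\mathbb{E}_{\xi_{n,i-1}^t}[\|X_n\|^2] \le p_n^2\eta^2\sigma_n^2$; summing over $n$ yields the asserted upper bound $\eta^2\sum_{\forall n} p_n^2\sigma_n^2$.

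Second, I would show that every one of the $N(N-1)$ cross terms vanishes. The D-FL protocol in \textbf{Algorithm 1} draws each device's mini-batch $\xi_{n,i-1}^t$ independently from its own local dataset $\mathcal{D}_n$, so for $n\ne m$ the pair $(X_n, X_m)$ has product law. Consequently $\mathbb{E}[\langle X_n, X_m\rangle] = \langle\mathbb{E}_{\xi_{n,i-1}^t} X_n,\, \mathbb{E}_{\xi_{m,i-1}^t} X_m\rangle$, and each factor is the zero vector by unbiasedness. All cross terms therefore drop out and only the diagonal sum survives.

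The only real subtlety, and thus the ``main obstacle,'' is purely one of bookkeeping: one must state explicitly that the mini-batches drawn at different devices in iteration $i-1$ of round $t$ are mutually independent random variables in order to legitimize the factorization step. Once that modeling assumption is articulated, Assumption~\ref{assumption}-4) supplies both the unbiasedness needed to kill the cross terms and the second-moment bound needed on the diagonals, and the claim is immediate; no further inequalities (Jensen, Cauchy--Schwarz, smoothness) are required.
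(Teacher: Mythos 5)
Your proposal is correct and follows essentially the same route as the paper: expand the squared norm of the weighted sum, use independence of the per-device mini-batches together with the unbiasedness $\mathbf{g}_{n,i-1}^{t}=\mathbb{E}_{\xi_{n,i-1}^{t}}(\mathbf{g}_{\xi,n,i-1}^{t})$ from Assumption~1-4) to annihilate the cross terms, and bound each diagonal term by $p_n^2\eta^2\sigma_n^2$ via the bounded-variance clause of the same assumption. Your explicit remark that cross-device independence must be stated to justify factoring the expectation is exactly the step the paper leaves implicit, so no gap remains.
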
	
 \begin{lemma}\label{appedix_lemma3}
		 The weighted distance between $\bar{\boldsymbol{\omega}}^{t}_i$ and $\boldsymbol{\omega}_{n,i}^{t}, \forall i,n$ is bounded by
	\begin{small}%
	    \begin{align}%
	\notag	&\!\!\!\mathbb{E}_{\{ \xi_{n,0}^{t},\cdots,\xi_{n,i\!-\!1}^{t},\forall n\}}\!\!\left({\sum}_{\forall n}\!p_{n}\!\left\Vert \bar{\boldsymbol{\omega}}_{i}^{t}\!-\!\boldsymbol{\omega}_{n,i}^{t}\right\Vert ^{2}\!\right)\!\!\leq\!\!\left[{(1+\eta)^{i\!+\!1}\!\!-\!\!(1\!+\!\eta)}\right]G^{2}\!\\&\quad\quad+\!(1\!+\!\eta)^{i}{\sum}_{\forall n}\!p_{n}\!\Vert\boldsymbol{\varpi}_{n}^{t\!-\!1}\!\!\!-\!\!\!{\sum}_{\forall m}\!p_{m}\!\boldsymbol{\varpi}_{m}^{t\!-\!1}\!\Vert^{2}.\label{var}%
		\end{align}%
	\end{small}%
	\end{lemma}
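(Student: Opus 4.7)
The plan is to prove Lemma~\ref{appedix_lemma3} by induction on the local iteration index $i$, accumulating a per-step geometric factor of $(1+\eta)$. I define $a_i \triangleq \mathbb{E}_{\{\xi_{n,0}^t,\dots,\xi_{n,i-1}^t,\forall n\}}\big(\sum_{\forall n}p_n\|\bar{\boldsymbol{\omega}}_i^t-\boldsymbol{\omega}_{n,i}^t\|^2\big)$ and aim to show the one-step recursion
\begin{align}
a_{i+1}\leq (1+\eta)a_i+\eta(1+\eta)G^2,\notag
\end{align}
whose closed-form solution matches the claimed bound. The base case $i=0$ uses $\boldsymbol{\omega}_{n,0}^t=\mathbf{w}_{n,J}^{t-1}=\bar{\boldsymbol{\omega}}_I^{t-1}-\boldsymbol{\varpi}_n^{t-1}$ and $\bar{\boldsymbol{\omega}}_0^t=\sum_m p_m\boldsymbol{\omega}_{m,0}^t=\bar{\boldsymbol{\omega}}_I^{t-1}-\sum_m p_m\boldsymbol{\varpi}_m^{t-1}$, so that $\bar{\boldsymbol{\omega}}_0^t-\boldsymbol{\omega}_{n,0}^t=\boldsymbol{\varpi}_n^{t-1}-\sum_m p_m\boldsymbol{\varpi}_m^{t-1}$; hence $a_0=\sum_{\forall n}p_n\|\boldsymbol{\varpi}_n^{t-1}-\sum_m p_m\boldsymbol{\varpi}_m^{t-1}\|^2$, which already aligns with the second summand on the right-hand side of \eqref{var} with coefficient $(1+\eta)^0=1$ (the $G^2$ term vanishes at $i=0$).

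For the inductive step I would subtract the SGD updates in \eqref{epoch_imperfect} to write $\bar{\boldsymbol{\omega}}_{i+1}^t-\boldsymbol{\omega}_{n,i+1}^t=(\bar{\boldsymbol{\omega}}_i^t-\boldsymbol{\omega}_{n,i}^t)-\eta\big(\sum_m p_m\mathbf{g}_{\xi,m,i}^t-\mathbf{g}_{\xi,n,i}^t\big)$, then apply the Peter--Paul inequality $\|u-v\|^2\leq(1+\eta)\|u\|^2+(1+\tfrac{1}{\eta})\|v\|^2$ to obtain
\begin{align}
\notag\|\bar{\boldsymbol{\omega}}_{i+1}^t\!-\!\boldsymbol{\omega}_{n,i+1}^t\|^2&\leq(1\!+\!\eta)\|\bar{\boldsymbol{\omega}}_i^t\!-\!\boldsymbol{\omega}_{n,i}^t\|^2\\&\quad+(\eta\!+\!\eta^2)\big\|{\sum}_m p_m\mathbf{g}_{\xi,m,i}^t\!-\!\mathbf{g}_{\xi,n,i}^t\big\|^2.\notag
\end{align}
Averaging over $n$ with weights $p_n$ and using the weighted-variance identity $\sum_{\forall n}p_n\|\mathbf{x}_n-\sum_m p_m\mathbf{x}_m\|^2\leq\sum_{\forall n}p_n\|\mathbf{x}_n\|^2$ (a direct consequence of expanding squares and the definition of $\mathbf{p}$), the second term is bounded by $(\eta+\eta^2)\sum_{\forall n}p_n\|\mathbf{g}_{\xi,n,i}^t\|^2$. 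Finally, taking expectation over the fresh mini-batch $\xi_{n,i}^t$ and invoking \textbf{Assumption~\ref{assumption}}-3) yields $\mathbb{E}_{\xi_{n,i}^t}\|\mathbf{g}_{\xi,n,i}^t\|^2\leq G^2$, so that the recursion $a_{i+1}\leq(1+\eta)a_i+\eta(1+\eta)G^2$ holds.

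Closing the recursion is routine: unrolling gives $a_i\leq(1+\eta)^i a_0+\eta(1+\eta)G^2\sum_{k=0}^{i-1}(1+\eta)^k=(1+\eta)^i a_0+(1+\eta)G^2[(1+\eta)^i-1]$, which rearranges to $(1+\eta)^i a_0+[(1+\eta)^{i+1}-(1+\eta)]G^2$, matching \eqref{var} exactly. The main obstacle I anticipate is not the recursion itself but the bookkeeping of the nested expectations over mini-batches from iterations $0$ through $i-1$: one must verify that $\mathbb{E}_{\xi_{n,i}^t}$ can be pulled inside the weighted sum given the tower/independence structure of the stochastic gradients across devices and iterations, and that the Peter--Paul splitting constant $\eta$ (as opposed to an arbitrary $c>0$) is indeed the choice that produces the exact factor $(1+\eta)$ appearing in the lemma; any other choice would inflate the $G^2$ coefficient.
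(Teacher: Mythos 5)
Your proof is correct and follows essentially the same route as the paper's: a one-step recursion with Peter--Paul constant $(1+\eta)$, the gradient second-moment bound $G^{2}$ from \textbf{Assumption \ref{assumption}}-3), unrolling the geometric sum, and the same base case $\bar{\boldsymbol{\omega}}_{0}^{t}-\boldsymbol{\omega}_{n,0}^{t}=\boldsymbol{\varpi}_{n}^{t-1}-\sum_{m}p_{m}\boldsymbol{\varpi}_{m}^{t-1}$. The only (cosmetic) difference is that the paper first discards the mean shift via the weighted variance-vs-second-moment bound on the models and then applies Peter--Paul to the single device gradient, whereas you apply Peter--Paul to the exact difference and then use the same variance bound on the gradient deviations; both yield the identical recursion and constants.
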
%
	\begin{proof}%
	    See \textbf{\textbf{Appendix \ref{proof:lemma5}}}.
	\end{proof}
\begin{lemma}\label{appedix_lemma2}Considering the randomness of the $i$-th iteration, the upper bound of $\mathbb{E}_{\left\{ \xi_{n,i-1}^{t},{\forall n}\right\}}(\Delta\boldsymbol{\omega}^{t}_{i})$ is given by
\begin{subequations}\label{delta_i}%
\small
    \begin{align}
&\!\!\!\!\text{When }i=1,\notag\\
&\!\!\!\! \notag \mathbb{E}_{\{ \xi_{n,0}^{t},{\forall n}\}}\! ( \Delta\boldsymbol{\omega}^{t}_{1})\!\!\!\leq\!(1\!+\!\tau_{\epsilon}\!)\!\left(\!1\!-\!\frac{\mu\eta}{2}\right)\!\Delta\boldsymbol{\omega}_{I}^{t\!-\!1}\!\!+\!\!(1\!+\!\tau_{\epsilon}\!)\eta^{2}{\sum}_{\forall n}\!p_{n}^{2}\!\sigma_{n}^{2}\\&\notag+(1\!+\!\tau_{\epsilon}\!)\left(\!2\eta^{2}L^{2}\!+\!(L\!+\!\mu)\eta\right){\sum}_{\forall n}p_{n}\left(\left\Vert \bar{\boldsymbol{\omega}}_{I}^{t\!-\!1}\!-\!\mathbf{x}_{n,J}^{t\!-\!1}\right\Vert ^{2}\!\right)\\&\!+\!\!(1\!\!+\!\!\frac{1}{\tau_{\epsilon}}\!)(\!1\!\!+\!\!\tau_{\eta}\!)\!\left[\!\big\Vert\!\sum_{\forall n}\!p_{n}\!\boldsymbol{\varpi}_{n}^{t\!-\!1}\!\big\Vert^{2}\!\!+\!\!\frac{\eta^{2}\!L^{2}}{\tau_{\eta}}\!\!\sum_{\forall n}p_{n}\big\Vert\mathbf{w}_{n,J}^{t\!-\!1}\!\!-\!\!\mathbf{x}_{n,J}^{t\!-\!1}\!\big\Vert^{2}\!\right]\!.\!\!\label{delta_1}\\
\notag&\!\!\!\!\text{When } i=2,3,\cdots,I,\\\notag
&\!\!\!\!\mathbb{E}_{\left\{ \xi_{n,i-1}^{t},{\forall n}\right\}}\! ( \Delta\boldsymbol{\omega}^{t}_{i})\!\!\leq\!\left(\!1\!-\!\frac{\mu\eta}{2}\right)\!\Delta\boldsymbol{\omega}_{i\!-\!1}^{t}+\!{\sum}_{\forall n}{\!\eta^{2}}p_{n}^{2}\sigma_{n}^{2}\\&+(\!2\eta^{2}L^{2}\!\!+\!(L\!+\!\mu)\eta){\sum}_{\forall n}\!p_{n}\!\left\Vert \bar{\boldsymbol{\omega}}_{i\!-\!1}^{t}\!\!-\!\!\boldsymbol{\omega}_{n,i\!-\!1}^{t}\right\Vert ^{2}.\label{delta_ib}
\end{align}
\end{subequations}
	\end{lemma}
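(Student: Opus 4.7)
My plan is to prove the two cases \eqref{delta_1} and \eqref{delta_ib} separately by unrolling the single-iterate SGD update, taking expectation over the fresh mini-batches, and then controlling the remaining deterministic term via strong convexity, $L$-smoothness, and Young's inequalities. The $i\geq 2$ case is essentially a standard one-step descent lemma adapted to the federated setting with per-device drift; the $i=1$ case is genuinely new because each device starts the round from a different point $\boldsymbol{\omega}_{n,0}^t=\mathbf{w}_{n,J}^{t-1}$ that deviates from $\bar{\boldsymbol{\omega}}_I^{t-1}$ both through the consensus bias $\boldsymbol{\varpi}_n^{t-1}$ and through the channel-induced deviation $\mathbf{w}_{n,J}^{t-1}-\mathbf{x}_{n,J}^{t-1}$.

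For $i\geq 2$, I would weight the SGD update $\boldsymbol{\omega}_{n,i}^t=\boldsymbol{\omega}_{n,i-1}^t-\eta\mathbf{g}_{\xi,n,i-1}^t$ by $p_n$ to obtain $\bar{\boldsymbol{\omega}}_i^t-\boldsymbol{\omega}^* = \bar{\boldsymbol{\omega}}_{i-1}^t-\boldsymbol{\omega}^* - \eta\sum_n p_n\mathbf{g}_{\xi,n,i-1}^t$, subtract $\boldsymbol{\omega}^*$, square, and take the expectation over $\{\xi_{n,i-1}^t\}_{\forall n}$. Unbiasedness of mini-batch sampling plus independence across devices kills the noise cross-term, and the stochastic-noise term is bounded by \textbf{Lemma~\ref{minibatch_gradient}} to produce $\eta^2\sum_n p_n^2\sigma_n^2$. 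The remaining deterministic piece expands into $\Delta\boldsymbol{\omega}_{i-1}^t - 2\eta\langle\bar{\boldsymbol{\omega}}_{i-1}^t-\boldsymbol{\omega}^*,\sum_n p_n\mathbf{g}_{n,i-1}^t\rangle + \eta^2\|\sum_n p_n\mathbf{g}_{n,i-1}^t\|^2$. I would split $\sum_n p_n\nabla F_n(\boldsymbol{\omega}_{n,i-1}^t)=\nabla F(\bar{\boldsymbol{\omega}}_{i-1}^t)+\sum_n p_n[\nabla F_n(\boldsymbol{\omega}_{n,i-1}^t)-\nabla F_n(\bar{\boldsymbol{\omega}}_{i-1}^t)]$; strong convexity of $F$ gives $\langle\bar{\boldsymbol{\omega}}_{i-1}^t-\boldsymbol{\omega}^*,\nabla F(\bar{\boldsymbol{\omega}}_{i-1}^t)\rangle\geq\mu\Delta\boldsymbol{\omega}_{i-1}^t$ which, after a Young's split that trades half of the $2\mu\eta$ coefficient for the cross term, delivers the $(1-\mu\eta/2)$ contraction. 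The drift portion is then bounded via $L$-smoothness ($\|\nabla F_n(\boldsymbol{\omega}_{n,i-1}^t)-\nabla F_n(\bar{\boldsymbol{\omega}}_{i-1}^t)\|\leq L\|\boldsymbol{\omega}_{n,i-1}^t-\bar{\boldsymbol{\omega}}_{i-1}^t\|$), Cauchy-Schwarz/Jensen, and another Young's inequality of the form $2\langle a,b\rangle\leq\eta^{-1}\|a\|^2+\eta\|b\|^2$, collecting with the $\eta^2\|\sum_n p_n\mathbf{g}_{n,i-1}^t\|^2$ term into the announced coefficient $(2\eta^2L^2+(L+\mu)\eta)$ in front of $\sum_n p_n\|\bar{\boldsymbol{\omega}}_{i-1}^t-\boldsymbol{\omega}_{n,i-1}^t\|^2$, giving \eqref{delta_ib}.

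For $i=1$, I would substitute $\boldsymbol{\omega}_{n,0}^t=\mathbf{w}_{n,J}^{t-1}=\bar{\boldsymbol{\omega}}_I^{t-1}-\boldsymbol{\varpi}_n^{t-1}$ so that $\bar{\boldsymbol{\omega}}_1^t-\boldsymbol{\omega}^*=\bigl[\bar{\boldsymbol{\omega}}_I^{t-1}-\boldsymbol{\omega}^*-\eta\sum_n p_n\mathbf{g}_{\xi,n,0}^t\bigr]-\sum_n p_n\boldsymbol{\varpi}_n^{t-1}$, and apply Young's inequality with parameter $\tau_\epsilon$ to decouple the pseudo-SGD bracket from the consensus-bias tail, harvesting the $(1+\tau_\epsilon)$ and $(1+1/\tau_\epsilon)$ prefactors. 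The pseudo-SGD bracket is then dispatched exactly as in the $i\geq 2$ analysis, producing $(1+\tau_\epsilon)(1-\mu\eta/2)\Delta\boldsymbol{\omega}_I^{t-1}$ and $(1+\tau_\epsilon)\eta^2\sum_n p_n^2\sigma_n^2$. The crucial twist is that the gradients inside the bracket are evaluated at $\mathbf{w}_{n,J}^{t-1}$ rather than $\bar{\boldsymbol{\omega}}_I^{t-1}$, so the $L$-smoothness step must traverse the chain $\mathbf{w}_{n,J}^{t-1}\to\mathbf{x}_{n,J}^{t-1}\to\bar{\boldsymbol{\omega}}_I^{t-1}$: the $\mathbf{x}\to\bar{\boldsymbol{\omega}}_I^{t-1}$ leg is absorbed into the drift slot to yield the $\sum_n p_n\|\bar{\boldsymbol{\omega}}_I^{t-1}-\mathbf{x}_{n,J}^{t-1}\|^2$ term, whereas the $\mathbf{w}\to\mathbf{x}$ leg is peeled off by a second Young's inequality with parameter $\tau_\eta$, which inflates the bias bucket by $(1+\tau_\eta)$ and spawns the $\frac{\eta^2L^2}{\tau_\eta}\sum_n p_n\|\mathbf{w}_{n,J}^{t-1}-\mathbf{x}_{n,J}^{t-1}\|^2$ residual appearing in \eqref{delta_1}.

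The main obstacle will be the bookkeeping in the $i=1$ case: three reference points ($\bar{\boldsymbol{\omega}}_I^{t-1}$, $\mathbf{x}_{n,J}^{t-1}$, $\mathbf{w}_{n,J}^{t-1}$) must be juggled with two Young's parameters ($\tau_\epsilon$, and $\tau_\eta$ ultimately set to $\eta L/\mu$ in \textbf{Lemma~\ref{theo1}}) applied in the correct order so that the $\eta^2\|\sum_n p_n\nabla F_n(\mathbf{w}_{n,J}^{t-1})\|^2$ term is routed to match exactly the coefficient $(1+\tau_\epsilon)(2\eta^2L^2+(L+\mu)\eta)$ claimed on the $\mathbf{x}$-drift inconsistency, with no looser residuals left over. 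By contrast, the $i\geq 2$ case is a more routine one-step descent adaptation, and once both cases are established the round-level \textbf{Lemma~\ref{theo1}} follows by iterating \eqref{delta_ib} from $i=I$ down to $i=2$, chaining with \eqref{delta_1}, and invoking \textbf{Lemma~\ref{appedix_lemma3}} to replace the drift terms $\sum_n p_n\|\bar{\boldsymbol{\omega}}_{i-1}^t-\boldsymbol{\omega}_{n,i-1}^t\|^2$ by their round-start counterparts involving $G^2$ and the bias gap $\|\boldsymbol{\varpi}_n^{t-1}-\sum_m p_m\boldsymbol{\varpi}_m^{t-1}\|^2$.
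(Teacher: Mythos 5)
Your overall architecture matches the paper's: the $i\ge 2$ case is a one-step descent lemma with a drift term, and the $i=1$ case is handled by two nested Young's inequalities (parameters $\tau_\epsilon$ and $\tau_\eta$) separating an error-free pseudo-step from the consensus bias and the channel-induced deviation. In particular, your $i=1$ plan is the paper's argument in disguise: the paper makes the regrouping explicit by introducing the auxiliary aggregate $\boldsymbol{\bar{\varphi}}_1^t=\sum_{\forall n} p_n\big(\mathbf{x}_{n,J}^{t-1}-\eta\nabla F_n(\mathbf{x}_{n,J}^{t-1},\xi_{n,0}^t)\big)$, so that the entire $\mathbf{w}$-versus-$\mathbf{x}$ discrepancy (the model difference \emph{and} the gradient difference) lands in the $(1+1/\tau_\epsilon)$ bucket. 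Your literal initial bracketing keeps the gradients evaluated at $\mathbf{w}_{n,J}^{t-1}$ inside the $(1+\tau_\epsilon)$ bucket; applying the $\tau_\epsilon$-Young first would then attach the prefactor $(1+\tau_\epsilon)$ rather than $(1+1/\tau_\epsilon)(1+1/\tau_\eta)$ to the $\eta^2L^2\sum_{\forall n} p_n\Vert\mathbf{w}_{n,J}^{t-1}-\mathbf{x}_{n,J}^{t-1}\Vert^2$ residual. You flag the ordering issue yourself and describe the correct final destination of each term, so I count this as presentational rather than a gap.

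The genuine gap is in the $i\ge2$ inner-product step. You propose to write $\sum_{\forall n} p_n\mathbf{g}_{n,i-1}^t=\nabla F(\bar{\boldsymbol{\omega}}_{i-1}^t)+(\text{drift})$, use strong convexity of the global $F$ at $\bar{\boldsymbol{\omega}}_{i-1}^t$, and control the drift cross-term by Cauchy--Schwarz plus Young. This does not reproduce the stated coefficient $2\eta^2L^2+(L+\mu)\eta$. With the Young form you quote, $2\langle a,b\rangle\le\eta^{-1}\Vert a\Vert^2+\eta\Vert b\Vert^2$, the cross-term contributes $+\eta\,\Delta\boldsymbol{\omega}_{i-1}^t$, which destroys the $(1-\mu\eta/2)$ contraction unless $\mu\ge2$; with the parameter chosen to preserve the contraction, i.e., $2\langle a,b\rangle\le\tfrac{2}{\mu}\Vert a\Vert^2+\tfrac{\mu}{2}\Vert b\Vert^2$, the drift coefficient becomes $2\eta^2L^2+\tfrac{2\eta L^2}{\mu}$, which is strictly larger than $2\eta^2L^2+(L+\mu)\eta$ whenever $L>\mu$. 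The paper instead splits the inner product per device, $-2\eta\sum_{\forall n} p_n\langle\bar{\boldsymbol{\omega}}_{i-1}^t-\boldsymbol{\omega}_{n,i-1}^t,\mathbf{g}_{n,i-1}^t\rangle-2\eta\sum_{\forall n} p_n\langle\boldsymbol{\omega}_{n,i-1}^t-\boldsymbol{\omega}^*,\mathbf{g}_{n,i-1}^t\rangle$, bounds the first by $L$-smoothness and the second by $\mu$-strong convexity so that the $F_n(\boldsymbol{\omega}_{n,i-1}^t)$ terms cancel; this leaves $-2\eta\big(F(\bar{\boldsymbol{\omega}}_{i-1}^t)-F^*\big)$ to absorb the $4L\eta^2\big(F(\bar{\boldsymbol{\omega}}_{i-1}^t)-F^*\big)$ arising from $\Vert\eta\bar{\mathbf{g}}_{i-1}^t\Vert^2$ (using $\eta<1/(2L)$), and delivers exactly $L\eta+\mu\eta$ on the drift after lower-bounding $\sum_{\forall n} p_n\Vert\boldsymbol{\omega}_{n,i-1}^t-\boldsymbol{\omega}^*\Vert^2\ge\tfrac12\Delta\boldsymbol{\omega}_{i-1}^t-\sum_{\forall n} p_n\Vert\bar{\boldsymbol{\omega}}_{i-1}^t-\boldsymbol{\omega}_{n,i-1}^t\Vert^2$. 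Without this per-device cancellation your route proves a valid but strictly weaker inequality, not the lemma as stated; since the $i=1$ bound is obtained by applying the $i\ge2$ machinery to the $\mathbf{x}$-based pseudo-step, the same mismatch propagates into \eqref{delta_1}.
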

 \begin{proof}
     See \textbf{Appendix \ref{proof:lemma6}}.
 \end{proof}
	By substituting \eqref{var} and \eqref{delta_i} for $i=I-1,I-2, \cdots,0$ into $\mathbb{E}_{\{ \xi_{n,0}^{t},\cdots,\xi_{n,I\!-\!1}^{t},\forall n\}}(\Delta\boldsymbol{\omega}_{I}^{t})$, it follows from the randomness of every iteration of the $t$-th training round, that\begin{small}
	    \begin{align}
	\notag	&\!\!\!{\mathbb{E}_{\{ \xi_{n,0}^{t},\cdots,\xi_{n,I\!-\!1}^{t},\forall n\}}}(\Delta\boldsymbol{\omega}_{I}^{t})\!\leq \!\left(\!1\!-\!\frac{\mu\eta}{2}\right){\mathbb{E}_{\{ \xi_{n,0}^{t},\cdots,\xi_{n,I\!-\!2}^{t},\forall n\}}}(\Delta\boldsymbol{\omega}^{t}_{I\!-\!1})\!\!\\
&\notag+\!(\!2\eta^{2}L^{2}\!\!+\!(L\!\!+\!\!\mu)\eta){\mathbb{E}_{\{ \xi_{n,0}^{t},\cdots,\xi_{n,I\!-\!2}^{t},\forall n\}}}\!\!\left({\sum}_{\forall n}\!p_{n}\!\!\left\Vert \bar{\boldsymbol{\omega}}_{I\!-\!1}^{t}\!\!-\!\!\boldsymbol{\omega}_{n,I\!-\!1}^{t}\right\Vert ^{2}\right)\\
&\quad+\!\!\eta^2
{\sum}_{\forall n}p_{n}^{2}\sigma^2_n\notag
  \\\notag\leq&\!
   (1\!\!+\!\!\tau_{\epsilon})\!\left(1\!\!-\!\!\frac{\mu\eta}{2}\right)^{I}\!\Delta\boldsymbol{\omega}_{I}^{t\!-\!1}\!\!+\!\!\frac{1\!\!+\!\!(\!\frac{(1+\tau_{\epsilon})\mu\eta}{2}\!-\!1)(1\!-\!\frac{\mu\eta}{2})^{I\!-\!1}}{\mu}\sum_{\forall n}\!2\eta p_{n}^{2}\sigma_{n}^{2}
  \\\notag&
    \!\!\!+\!\!(1\!\!+\!\!\frac{1}{\tau_{\epsilon}})\!(\!1\!\!+\!\!\tau_{\eta}\!)\!\!\left(\!1\!\!-\!\!\frac{\mu\eta}{2}\right)^{I\!-\!1}\!\!\Big(\!\big\Vert\!\!\sum_{\forall n}\!p_{n}\!\boldsymbol{\varpi}_{n}^{t\!-\!1}\!\big\Vert^{2}\!\!\!+\!\!\frac{\eta^2 \!L^{2}}{\tau_{\eta}}\!\!\!\!\sum_{\forall n}\!p_{n}\!\big\Vert\mathbf{w}_{n,J}^{t\!-\!1}\!\!-\!\!\mathbf{x}_{n,J}^{t\!-\!1}\!\big\Vert^{2}\!\Big)
    \\\notag
 & \!\!\!\!+\!\! (1\!\!+\!\!\tau_{\epsilon}\!)\!\big(1\!\!-\!\!\frac{\mu\eta}{2}\!\big)^{I-1}\!\big(\!2\eta^{2}L^{2}\!\!+\!\!(\!L\!\!+\!\!\mu\!)\eta\big)\!\!\sum_{\forall n}\!\!p_{n}\!\big(\!\left\Vert \bar{\boldsymbol{\omega}}_{I}^{t\!-\!1}\!-\!\mathbf{x}_{n,J}^{t\!-\!1}\right\Vert ^{2}\!\big)\\\notag
		&\!\!\!+\!\!(2\eta L^{2}\!\!+\!\!(L\!\!+\!\!\mu))(1\!\!+\!\!\eta)^{3}\frac{\!(\!1\!\!+\!\!\eta\!)^{I\!-\!1}\!\!-\!\!(\!1\!\!-\!\!\frac{\mu\eta}{2}\!)^{I\!-\!1}\!}{1\!\!+\!\!\frac{\mu}{2}}\times\\
		& \quad\quad\quad \Big(\!{\sum}_{\forall n}\!p_{n}\!\Vert\boldsymbol{\varpi}_{n}^{t\!-\!1}\!\!\!-\!\!\!{\sum}_{\forall m}\!p_{m}\!\boldsymbol{\varpi}_{m}^{t-\!1}\!\Vert^{2}\!\Big)\!+\!\zeta_{2}G^2, \label{eq:appendix_lemma1}%
		\end{align}
	\end{small}%
           which can be reorganized into
   \eqref{eq:new_expectation}. 

The third term on the RHS of \eqref{eq:new_expectation} can be rewritten as
\begin{subequations}\small\label{eq:new_expectation_third_part}
    \begin{align}
   &\!\!\! \! \mathbb{E}_{\mathbf{e}}\big(\big\Vert{\sum}_{\forall n}p_{n}\boldsymbol{\varpi}_{n}^{t-1}\big\Vert^{2}\big)\notag\\
   &\!\!\! =\!\!\big\Vert\mathbb{E}_{\mathbf{e}}(\sum_{\forall n}\!p_{n}\!\boldsymbol{\varpi}_{n}^{t\!-\!1})\!\big\Vert^{2}\!\!\!+\!\!\mathbb{E}_{\mathbf{e}}\!\big(\big\Vert\!\sum_{\forall n}p_{n}\boldsymbol{\varpi}_{n}^{t\!-\!1}\!\!-\!\!\mathbb{E}_{\mathbf{e}}(\sum_{\forall n}p_{n}\boldsymbol{\varpi}_{n}^{t\!-\!1})\!\big\Vert^{2}\!\big)\label{eq:new_expectation_third_part_a}\\
    &\!\!\! =\!\!\big\Vert\mathbb{E}_{\mathbf{e}}({\sum}_{\forall n}p_{n}\boldsymbol{\varpi}_{n}^{t\!-\!1})\big\Vert^{2}\!\!+\!\!{\sum}_{\forall n}p_{n}^{2}\mathbb{E}_{\mathbf{e}}\big(\Vert\mathbf{w}_{n,J}^{t\!-\!1}\!\!-\!\mathbb{E}_{\mathbf{e}}(\mathbf{w}_{n,J}^{t\!-\!1})\Vert^{2}\big)\label{eq:new_expectation_third_part_b}\\
    &\!\! \!=\!\!\Big\Vert\mathbf{p}\mathbb{E}_{\mathbf{e}}(\boldsymbol{\Pi}^{t-1})\Big\Vert^{2}\!\!+\!\!\mathbf{p}\mathrm{diag}(\mathbf{p})\mathbb{D}_{\mathbf{e}}(\mathbf{W}_{J}^{t-1})\mathbf{1}_M^{\dagger},\label{eq:new_expectation_third_part_c}
\end{align}
\end{subequations}
where \eqref{eq:new_expectation_third_part_a} follows from $\mathbb{E}\big(\left\Vert X\right\Vert ^{2}\big)=\left\Vert \mathbb{E}(X)\right\Vert ^{2}+\mathbb{E}\big(\left\Vert X-\mathbb{E}(X)\right\Vert ^{2}\big)$ with $X=\sum_{\forall n}p_{n}\boldsymbol{\varpi}_{n}^{t-1}$; \eqref{eq:new_expectation_third_part_b} is obtained by substituting $\boldsymbol{\varpi}^{t-1}_{n}=\boldsymbol{\bar{\omega}}_I^{t-1}-\mathbf{w}_{n,J}^{t-1}$
into the second term of \eqref{eq:new_expectation_third_part_a} and 
noting that $\boldsymbol{\bar{\omega}}_I^{t-1}$ is given and has no randomness here;
\eqref{eq:new_expectation_third_part_c} is obtained since $\mathbb{E}_{\mathbf{e}}(\boldsymbol{\Pi}^{t\!-\!1})=\big[\mathbb{E}_{\mathbf{e}}(\boldsymbol{\varpi}^{t-1}_{1}),\cdots,\mathbb{E}_{\mathbf{e}}(\boldsymbol{\varpi}^{t-1}_{N})\big]^{\dagger}$
 and $\mathbb{D}_{\mathbf{e}}(\mathbf{W}_{J}^{t\!-\!1})\mathbf{1}_M^{\dagger}=\big[\mathbb{E}_{\mathbf{e}}\big(\Vert\mathbf{w}_{1,J}^{t\!-\!1}-\mathbb{E}_{\mathbf{e}}(\mathbf{w}_{1,J}^{t-1})\Vert^{2}\big),\cdots,\mathbb{E}_{\mathbf{e}}\big(\Vert\mathbf{w}_{N,J}^{t\!-\!1}-\mathbb{E}_{\mathbf{e}}(\mathbf{w}_{N,J}^{t-1})\Vert^{2}\big)\big]^{\dagger}$. 

The fourth term on the RHS of \eqref{eq:new_expectation} is upper bounded by
\begin{subequations}\small
    \begin{align}
        &\!\!{\sum}_{\forall n}p_{n}\mathbb{E}_{\mathbf{e}}\big(\Vert\boldsymbol{\varpi}_{n}^{t-1}-{\sum}_{\forall m}p_{m}\boldsymbol{\varpi}_{m}^{t-1}\Vert^{2}\big)\nonumber 
        \\&={\sum}_{\forall n}p_{n}\mathbb{E}_{\mathbf{e}}\big(\Vert\boldsymbol{\varpi}_{n}^{t-1}\Vert^{2}\big)-\mathbb{E}_{\mathbf{e}}\big(\big\Vert{\sum}_{\forall n}p_{n}\boldsymbol{\varpi}_{n}^{t-1}\big\Vert^{2}\big)\label{eq:new_var_fourth_part_b}
        \\\notag&\!\!=\!\!\left\Vert \mathrm{diag}(\sqrt{\mathbf{p}})\mathbb{E}_{\mathbf{e}}(\boldsymbol{\Pi}^{t\!-\!1})\right\Vert ^{2}-\Vert\mathbf{p}\mathbb{E}_{\mathbf{e}}(\boldsymbol{\Pi}^{t-1})\Vert^{2}\!\!\\&\quad\quad+(\mathbf{p}-\mathbf{p}\mathrm{diag}(\mathbf{p}))\mathbb{D}_{\mathbf{e}}(\mathbf{W}_{J}^{t\!-\!1})\mathbf{1}_M^{\dagger}\label{eq:new_var_fourth_part_c}\\&\!\!
        \leq\!\!\left\Vert \mathrm{diag}(\sqrt{\mathbf{p}}\!\!-\!\!\mathbf{p})\mathbb{E}_{\mathbf{e}}(\boldsymbol{\Pi}^{t\!-\!1})\right\Vert ^{2}\!\!+\!\!(\mathbf{p}\!\!-\!\!\mathbf{p}\mathrm{diag}(\mathbf{p}))\mathbb{D}_{\mathbf{e}}(\mathbf{W}_{J}^{t\!-\!1})\mathbf{1}_M^{\dagger},\label{eq:new_var_fourth_part_d}
    \end{align}
\end{subequations}
where 
\eqref{eq:new_var_fourth_part_b} follows from \eqref{eq:new_expectation_third_part_a};
\eqref{eq:new_var_fourth_part_c} is based on 
\begin{subequations}\small\label{eq:new_expectation_fourth_part}
    \begin{align}
 &\! {\sum}_{\forall n}p_{n}  \mathbb{E}_{\mathbf{e}}\big(\Vert\boldsymbol{\varpi}_{n}^{t-1}\Vert^{2}\big)\notag\\&=\!{\sum}_{\forall n}\!p_{n}\!\left\Vert \!\mathbb{E}_{\mathbf{e}}\!(\boldsymbol{\varpi}_{n}^{t\!-\!1})\right\Vert ^{2}\!+\!{\sum}_{\forall n}\!p_{n}\!\mathbb{E}_{\mathbf{e}}\Vert\boldsymbol{\varpi}_{n}^{t\!-\!1}\!-\!\mathbb{E}_{\mathbf{e}}(\boldsymbol{\varpi}_{n}^{t\!-\!1})\Vert^{2}\label{eq:new_expectation_fourth_part_a}\\
&={\sum}_{\forall n}\!p_{n}\!\left\Vert\! \mathbb{E}_{\mathbf{e}}(\boldsymbol{\varpi}_{n}^{t\!-\!1})\right\Vert ^{2}\!\!+\!\!{\sum}_{\forall n}\!p_{n}\!\mathbb{E}_{\mathbf{e}}\Vert\mathbf{w}_{n,J}^{t\!-\!1}\!-\!\mathbb{E}_{\mathbf{e}}\!(\mathbf{w}_{n,J}^{t\!-\!1})\!\Vert^{2}\label{eq:new_expectation_fourth_part_b}\\
&=\left\Vert \mathrm{diag}(\sqrt{\mathbf{p}})\mathbb{E}_{\mathbf{e}}(\boldsymbol{\Pi}^{t\!-\!1})\right\Vert ^{2}\!+\!\!\mathbf{p}\mathbb{D}_{\mathbf{e}}(\mathbf{W}_{J}^{t-1})\mathbf{1}_M^{\dagger};\label{eq:new_expectation_fourth_part_c}%
\end{align}%
\end{subequations}%
and \eqref{eq:new_var_fourth_part_d} follows from
\begin{subequations}
    \begin{align}
        \notag&\left\Vert \mathrm{diag}(\sqrt{\mathbf{p}})\mathbb{E}_{\mathbf{e}}(\boldsymbol{\Pi}^{t\!-\!1})\right\Vert ^{2}-\Vert\mathbf{p}\mathbb{E}_{\mathbf{e}}(\boldsymbol{\Pi}^{t-1})\Vert^{2}\\
        \leq&\left\Vert \mathrm{diag}(\sqrt{\mathbf{p}})\mathbb{E}_{\mathbf{e}}(\boldsymbol{\Pi}^{t\!-\!1})\right\Vert ^{2}-\Vert\mathrm{diag}(\mathbf{p})\mathbb{E}_{\mathbf{e}}(\boldsymbol{\Pi}^{t-1})\Vert^{2}\label{eq:new_var_fourth_part_e}\\
        \leq&\left\Vert \mathrm{diag}(\sqrt{\mathbf{p}}-\mathbf{p})\mathbb{E}_{\mathbf{e}}(\boldsymbol{\Pi}^{t\!-\!1})\right\Vert ^{2}\label{eq:new_var_fourth_part_f}
    \end{align}
\end{subequations}
where \eqref{eq:new_var_fourth_part_e} is due to $\left\Vert A\right\Vert ^{2}\leq\left\Vert \mathbf{1}_{N}A\right\Vert ^{2}$ with $A=\mathrm{diag}({\mathbf{p}})\mathbb{E}_{\mathbf{e}}(\boldsymbol{\Pi}^{t\!-\!1})$ and $\mathbf{1}_{N}\mathrm{diag}({\mathbf{p}})\mathbb{E}_{\mathbf{e}}(\boldsymbol{\Pi}^{t\!-\!1})={\mathbf{p}}\mathbb{E}_{\mathbf{e}}(\boldsymbol{\Pi}^{t\!-\!1})$, and \eqref{eq:new_var_fourth_part_f} is due to $\left\Vert A\right\Vert ^{2}-\left\Vert B\right\Vert ^{2}\leq \left\Vert A-B\right\Vert ^{2}$.

The fifth term on the RHS of \eqref{eq:new_expectation} can be rewritten as
\begin{subequations}\small\label{eq:new_expectation_fifth_part}
    \begin{align}
 &\! {\sum}_{\forall n}p_{n}  \mathbb{E}_{\mathbf{e}}\big(\Vert\mathbf{w}_{n,J}^{t\!-\!1}\!\!-\!\!\mathbf{x}_{n,J}^{t\!-\!1}\Vert^{2}\big)\notag\\
&\!\!=\!\!{\sum}_{\forall n}\!\!p_{n}\!\!\left\Vert\! \mathbb{E}_{\mathbf{e}}(\mathbf{w}_{n,J}^{t\!-\!1}\!\!-\!\!\mathbf{x}_{n,J}^{t\!-\!1})\right\Vert ^{2}\!\!+\!\!{\sum}_{\forall n}\!\!p_{n}\!\mathbb{E}_{\mathbf{e}}\Vert\mathbf{w}_{n,J}^{t\!-\!1}\!\!-\!\!\mathbb{E}_{\mathbf{e}}\!(\mathbf{w}_{n,J}^{t\!-\!1})\!\Vert^{2}\label{eq:new_expectation_fifth_part_a}\\
&\!\!=\!\!\left\Vert \mathrm{diag}(\sqrt{\mathbf{p}})\mathbb{E}_{\mathbf{e}}\!\!\left(\mathbf{W}_{J}^{t\!-\!1}-\mathbf{X}_{J}^{t\!-\!1}\right)\!\right\Vert ^{2}\!\!\!+\!\!\mathbf{p}\mathbb{D}_{\mathbf{e}}(\mathbf{W}_{J}^{t\!-\!1})\mathbf{1}_M^{\dagger}.\!\label{eq:new_expectation_fifth_part_b}%
\end{align}%
\end{subequations}%
The seventh term on the RHS of \eqref{eq:new_expectation} is rewritten as
    {\footnotesize\begin{align}
 &\!\!\!\!\!\!\! \sum_{\forall n}\!p_{n}\!  \Vert\bar{\boldsymbol{\omega}}_{I}^{t\!-\!1}\!\!-\!\!\mathbf{x}_{n,J}^{t\!-\!1}\Vert^{2}\!\!=\!\!\Big\Vert \mathrm{diag}(\!\sqrt{\mathbf{p}})\!\Big(\!\frac{\mathbf{1}_N^{\dagger}\mathbf{1}_N}{N}N\mathrm{diag}(\mathbf{p})\boldsymbol{\Omega}_I^{t\!-\!1}\!\!-\!\!{\mathbf{X}}_{J}^{t\!-\!1}\Big)\!\Big\Vert ^{2}\!\!.\!\!\label{eq:new_expectation_six_part_a}
\end{align}}
Substituting \eqref{eq:new_expectation_third_part_c},  \eqref{eq:new_var_fourth_part_d}, \eqref{eq:new_expectation_fifth_part_b} and \eqref{eq:new_expectation_six_part_a} yields \eqref{eq:new_expectationb}.

\subsection{Proof of \textbf{Lemma \ref{lemma1}}}\label{proof of lemma1}
	The expectation of $\mathbf{e}_{n,m,j}^{t}\!\circ\!{\mathbf{w}}_{m,j\!-\!1}^{t}$ in \eqref{imperfect_wic} is rewritten as
\begin{subequations}
	\begin{align}
		\mathbb{E}_{\mathbf{e}}(\mathbf{e}_{n,m,j}^{t}\!\circ\!{\mathbf{w}}_{m,j-1}^{t})&=\!\!\mathbb{E}_{\mathbf{e}}(\mathbf{e}_{n,m,j}^{t})\circ\mathbb{E}_{\mathbf{e}}({\mathbf{w}}_{m,j-1}^{t})\label{E:ca}\\
		&=\!\!\left((1\!\!-\!\!\epsilon_{\mathrm{P},n,m}^{t})\mathbf{1}_{  M}\right)\!\circ\!\mathbb{E}_{\mathbf{e}}({\mathbf{w}}_{m,j-1}^{t})\label{E:cb}\\
		&=\!\!(1-\epsilon_{\mathrm{P},n,m}^{t})\mathbb{E}_{\mathbf{e}}({\mathbf{w}}_{m,j-1}^{t}),\label{E:c}
	\end{align}%
	\end{subequations}
	where \eqref{E:ca} is due to the fact that $\mathbf{e}_{n,m,j}^{t}$ and ${\mathbf{w}}_{m,j-1}^{t}$ are independent; \eqref{E:cb} is because $\Pr(\mathbf{e}^{t}_{n,m,j}[l]=\mathbf{1}_{ L_{\mathrm{p}}})=1-\epsilon_{\mathrm{P},n,m}^{t}$, see \eqref{error1}; \eqref{E:c} is from $\mathbf{1}_ M\circ{\mathbf{w}}_{m,j-1}^{t}={\mathbf{w}}_{m,j-1}^{t}$.	
 Hence, the expectation of the locally aggregated model in \eqref{imperfect_wic} is 
            \begin{align}
		\notag\mathbb{E}_{\mathbf{e}}({\mathbf{w}}_{n,j}^{t})=&{\sum}_{\forall m}c_{n,m}^{t}\mathbb{E}_{\mathbf{e}}(\mathbf{e}_{n,m,j}^{t}\circ{\mathbf{w}}_{m,j-1}^{t})
		\\=&{\sum}_{\forall m}c_{n,m}^{t}(1-\epsilon_{\mathrm{P},n,m}^{t})\mathbb{E}_{\mathbf{e}}({\mathbf{w}}_{m,j-1}^{t}),\label{Expectation_w_nj}
	\end{align}%
which can be rewritten in a matrix form, as given by
	\begin{align}
	\mathbb{E}_{\mathbf{e}}({\mathbf{W}}_{j}^{t})=(\mathbf{C}^{t}\circ\mathbf{T}^{t})\mathbb{E}_{\mathbf{e}}({\mathbf{W}}_{j-1}^{t}).\label{eq:wrc}
	\end{align}
	Let ${\mathbf{W}}^{t}_{J}$ collect the locally aggregated models of all devices (after $J$ local aggregations) at the end of the $t$-th training round,
 with $\mathbf{W}^{t}_{0}$ collecting the initial models for aggregation of the $t$-th round.
	The expectation of ${\mathbf{W}}^{t}_{J}$ is given by
        \begin{align}
            \mathbb{E}_{\mathbf{e}} ({\mathbf{W}}^{t}_{J}) &=(\mathbf{C}^{t}\circ\mathbf{T}^{t})^J\mathbf{W}^{t}_{0},\label{eq:wrC b}
        \end{align}%
where \eqref{eq:wrC b} is obtained by substituting \eqref{eq:wrc} recursively $J$ times. Combining \eqref{noise_consensus:c} and \eqref{eq:wrC b}, we obtain~\eqref{Expectation_consensus}.

\vspace{-3mm }
\subsection{Proof of \textbf{Lemma \ref{lemma_variance}}}\label{proof_of_lemma_variance}
From \eqref{imperfect_wic} and \eqref{Expectation_w_nj}, $\mathbf{e}^{t}_{n,m,j}$ and $\mathbf{w}_{m,j-1}^{t}$ are independent random variables. $\mathbf{e}^{t}_{n,m,j}$ obeys the Bernoulli distribution. $\mathbb{D}_{\mathbf{e}}(\mathbf{e}^{t}_{n,m,j})\!=\!(1\!-\!\epsilon_{\mathrm{P},n,m}^{t})\epsilon_{\mathrm{P},n,m}^{t}$; $\mathbb{E}_{\mathbf{e}}(\mathbf{e}^{t}_{n,m,j})\!=\!1\!-\!\epsilon_{\mathrm{P},n,m}^{t}$.~So,
 \begin{align}
     \notag \mathbb{D}_{\mathbf{e}}(\mathbf{w}_{n,j}^{t})={\sum}_{\forall m}\mathbb{D}_{\mathbf{e}}(c_{n,m}^{t}(\mathbf{e}_{n,m,j}^{t}\circ{\mathbf{w}}_{m,j-1}^{t}))\\
     ={\sum}_{\forall m}(c_{n,m}^{t})^{2}\mathbb{D}_{\mathbf{e}}(\mathbf{e}_{n,m,j}^{t}\circ{\mathbf{w}}_{m,j-1}^{t}).
 \end{align} 
For any two random variables $X$ and $Y$, the variance of $XY$~is  
$\mathbb{D}(XY)=\left(\mathbb{D}(X)+\mathbb{E}^2(X)\right)\mathbb{D}(Y)+\mathbb{D}(X)\mathbb{E}^{2}(Y)$.
By replacing $X$ with the elements of $\mathbf{e}_{n,m,j}^{t}$ and $Y$ with the elements of $\mathbf{w}^{t}_{m,j}$, it readily follows that
     \begin{align}
    \notag\mathbb{D}_{\mathbf{e}}&(\mathbf{w}_{n,j}^{t})
   \!\!=\!\!{\sum}_{\forall m}\!(c_{n,m}^{t})^{2}\!(1\!-\!\epsilon_{\mathrm{P},n,m}^{t})\\&\!\!\!\!\!\!\times\left(\mathbb{D}_{\mathbf{e}}(\mathbf{w}_{m,j\!-\!1}^{t})\!+\!\epsilon_{\mathrm{P},n,m}^{t}\mathbb{E}_{\mathbf{e}}(\mathbf{w}_{m,j\!-\!1}^{t})\!\circ\!\mathbb{E}_{\mathbf{e}}(\mathbf{w}_{m,j\!-\!1}^{t})\!\right).\label{Variance_wnj}
\end{align}

Note that $(c_{n,m}^{t})^{2}\!(1\!-\!\epsilon_{\mathrm{P},n,m}^{t})$ and $(c_{n,m}^{t})^{2}\!(1\!-\!\epsilon_{\mathrm{P},n,m}^{t})\epsilon_{\mathrm{P},n,m}^{t}$ are the $n$-th row and the $m$-th columns of the matrices $(\mathbf{C}^{t}\circ \mathbf{C}^{t}\circ \mathbf{T}^{t})$ and $(\mathbf{C}^{t}\circ \mathbf{C}^{t}\circ \mathbf{T}^{t}\circ ({\mathbf{1}_N^{\dagger}\!\mathbf{1}_N}\!-\!\mathbf{T}^{t})$, respectively. The matrix form of the variance is obtained as
\begin{small}
    \begin{align}
 \notag    &\mathbb{D}_{\mathbf{e}}(\mathbf{W}_j^{t})=(\mathbf{C}^{t}\circ \mathbf{C}^{t}\circ \mathbf{T}^{t})\mathbb{D}_{\mathbf{e}}(\mathbf{W}_{j-1}^{t})\\&+\big(\mathbf{C}^{t}\circ\!\!\mathbf{C}^{t}\!\!\circ\!\mathbf{T}^{t}\!\!\circ\!\!({\mathbf{1}_{N}^{\dagger}\!\mathbf{1}_{N}}\!-\!\mathbf{T}^{t})\big)\mathbb{E}_{\mathbf{e}}\!(\mathbf{W}_{j\!-\!1}^{t})\!\circ\!\mathbb{E}_{\mathbf{e}}\!(\mathbf{W}_{j\!-\!1}^{t})\notag\\&
=\mathbf{A}_{1}^{t}\mathbb{D}_{\mathbf{e}}(\mathbf{W}_{j\!-\!1}^{t})+\!\mathbf{A}_{2}^{t}\mathbf{A}_{3}^{t}(j)(\mathbf{W}_{0}^{t}\!\circ\!\mathbf{W}_{0}^{t}),\label{D_j_1}%
\end{align}%
\end{small}%
       where $\mathbb{E}_{\mathbf{e}}({\mathbf{W}}^{t}_{j-1})=(\mathbf{C}^{t}\circ\mathbf{T}^{t})^{j-1}\mathbf{W}^{t}_{0}$ from \eqref{eq:wrC b}, $\mathbb{E}_{\mathbf{e}}\!(\mathbf{W}_{j-1}^{t})\!\circ\!\mathbb{E}_{\mathbf{e}}\!(\mathbf{W}_{j\!-\!1}^{t})=\left((\mathbf{C}^{t}\circ\mathbf{T}^{t})^{j\!-\!1}\circ (\mathbf{C}^{t}\circ\mathbf{T}^{t})^{j\!-\!1}\right)(\mathbf{W}_{0}^{t}\!\circ\!\mathbf{W}_{0}^{t})$, $\mathbb{D}_{\mathbf{e}}(\mathbf{W}_0^{t})=\mathbf{0}_{N\times M}$, and $\mathbb{E}_{\mathbf{e}}(\mathbf{W}_0^{t})=\mathbf{W}_0^{t}$.        
As a result, 
       \begin{align}
       \!\!  \!\!  \mathbb{D}_{\mathbf{e}}(\!\mathbf{W}_{j}^{t}\!)\!\!=\!\!  {\sum}_{k\!=\!1}^{j}(\!\mathbf{A}_{1}^{t}\!)^{k\!-\!1}\!\mathbf{A}_{2}^{t}\mathbf{A}_{3}^{t}\!(j\!\!-\!\!k)
  {(\mathbf{W}_{0}^{t}\!\circ\!\mathbf{W}_{0}^{t})}.&\label{D_j_3}
       \end{align}
       By taking $j=J$, we obtain \eqref{D_J}.

	\subsection{Bounds of $\left\Vert\mathbf{M}_1\right\Vert$, $\left\Vert\mathbf{M}_3\right\Vert$, and $\left\Vert\mathbf{M}_4\right\Vert$}\label{proof:B}

\subsubsection{Bounds of $\left\Vert\mathbf{M}_1\right\Vert$}

When $\mathbf{C}^t$ is designed using~\eqref{C_r_definition}, the lower bound of $\left\Vert\mathbf{M}_1\right\Vert$ is obtained based on the triangle inequality of $2$-norm:
\begin{align}
\notag\left\Vert\mathbf{M}_1\!\right\Vert&\!\!=\!\!\bigg\Vert\frac{{\mathbf{1}_N^{\dagger}\mathbf{1}_N}}{N}\!\!-\!\!(\mathbf{C}^{t}\!\!\circ \!\!\mathbf{T}^{t})^J\bigg\Vert\!\!
\geq\!\!\bigg\Vert\frac{\mathbf{1}_N^{\dagger}\mathbf{1}_N}{N}\bigg\Vert\!\!-\!\!\left\Vert(\mathbf{C}^{t}\!\!\circ\!\!\mathbf{T}^{t})^{J}\right\Vert\\&=  1-\left\Vert\mathbf{C}^{t}\!\circ\!\mathbf{T}^{t}\right\Vert^J=1-\beta_{1}^J.\label{m1_lower}
\end{align}
The upper bound of $\left\Vert\mathbf{M}_1\right\Vert$ is given by
\begin{subequations}\small
    \begin{align}
\Vert\mathbf{M}_1\Vert&=\bigg\Vert\frac{{\mathbf{1}_N^{\dagger}\mathbf{1}_N}}{N}-(\mathbf{C}^{t}\!\circ \!\mathbf{T}^{t})^J\bigg\Vert\notag\\
\label{bound_of_Ma}
   &\!\!\! \leq\left\Vert (\mathbf{C}^{t})^{J}\!-\!(\mathbf{C}^{t}\!\circ\!\mathbf{T}^{t})^{J}\right\Vert +\bigg\Vert \frac{\mathbf{1}_N^{\dagger}\mathbf{1}_N}{N}-(\mathbf{C}^{t})^{J}\bigg\Vert \\\label{bound_of_Mb}
    &\!\!\!\leq\frac{\beta_{2}^J-\beta_1^J}{\beta_{2}-\beta_1}\beta_{3}+\beta_{4}^{J},
\end{align}%
\end{subequations}
where \eqref{bound_of_Ma} is based on the triangle inequality of norms, and \eqref{bound_of_Mb} is obtained by plugging \eqref{up-low-of-M4a} and \eqref{lemma-for-P3} into \eqref{bound_of_Ma}.

When $\mathbf{C}^t$ is designed using~\eqref{CoT_r_definition}, $\left\Vert\mathbf{M}_1\right\Vert=\beta_5^J$ due to $\Vert A^J\Vert=\Vert A\Vert^J$ for 2-norm.

 \subsubsection{Bounds of $\left\Vert\mathbf{M}_3\right\Vert$}
When $\mathbf{C}^t$ is designed using either~\eqref{C_r_definition} or~\eqref{CoT_r_definition}, by replacing $\frac{{\mathbf{1}_N^{\dagger}\mathbf{1}_N}}{N}$ with $(\mathbf{C}^{t})^J$ in \eqref{m1_lower} and using $\Vert(\mathbf{C}^{t})^J\Vert=\Vert\mathbf{C}^{t}\Vert^J$, the lower bound of $\left\Vert\mathbf{M}_3\right\Vert$ is  
 $$\left\Vert\mathbf{M}_3\right\Vert
		\geq \Vert\mathbf{C}^{t}\Vert^J-\Vert\mathbf{C}^{t}\circ\mathbf{T}^{t}\Vert^J=\beta_{2}^J-\beta_{1}^J.$$ 
  
By mathematical induction, the upper bound of $\left\Vert\mathbf{M}_3\right\Vert$ is 
     \begin{align} 
 	\left\Vert\mathbf{M}_3\right\Vert\!=\!\left\Vert (\mathbf{C}^{t})^{J}\!-\!(\mathbf{C}^{t}\!\circ\!\mathbf{T}^{t})^{J}\right\Vert \leq\frac{\beta_{2}^J-\beta_1^J}{\beta_{2}-\beta_1}\beta_{3}.\label{lemma-for-P3}
		\end{align}
Specifically, when $J=1$, $\left\Vert \mathbf{C}^{t}-(\mathbf{C}^{t}\!\circ\!\mathbf{T}^{t})\right\Vert =\beta_{3}$. 	  
Suppose that when $J=j$,
$\left\Vert \left(\mathbf{C}^{t}\right)^{j}-(\mathbf{C}^{t}\!\circ\!\mathbf{T}^{t})^{j}\right\Vert \leq\frac{\beta_{2}^j-\beta_1^j}{\beta_{2}-\beta_1}\beta_{3}$.
Then, we can prove that when $J=j+1$,
    \begin{subequations}\small%
      \begin{align}%
		&\!\left\Vert \left(\mathbf{C}^{t}\right)^{j+1}\!-\!(\mathbf{C}^{t}\!\circ\!\mathbf{T}^{t})^{j+\!1}\right\Vert\notag\\
		\leq&\!\left\Vert \!(\mathbf{C}^{t})^{j\!+\!1}\!\!\!-\!\!\mathbf{C}^{t}(\mathbf{C}^{t}\!\!\circ\!\!\mathbf{T}^{t}\!)^{j}\right\Vert \!\!+\!\!\left\Vert \mathbf{C}^{t}\!(\mathbf{C}^{t}\!\!\circ\!\!\mathbf{T}^{t}\!)^{j}\!\!-\!\!(\!\mathbf{C}^{t}\!\!\circ\!\!\mathbf{T}^{t}\!)^{j\!+\!1}\!\right\Vert \!\label{cot_a+1:c}\\
		\leq&\!\left\Vert(\mathbf{C}^{t})^j\right\Vert \!\left\Vert\! (\mathbf{C}^{t})^{j}\!\!-\!\!(\mathbf{C}^{t}\!\!\circ\!\!\mathbf{T}^{t})^{j}\right\Vert \!\!+\!\!\left\Vert (\mathbf{C}^{t}\!\!\circ\!\!\mathbf{T}^{t})^{j}\right\Vert \cdot\left\Vert \mathbf{C}^{t}\!\!-\!\!\mathbf{C}^{t}\!\circ\!\mathbf{T}^{t}\right\Vert\label{cot_a+1:d} \\
		\leq&\frac{\beta_{2}^{j+1}-\beta_1^{j+1}}{\beta_{2}-\beta_1}\beta_{3}.\label{cot_a+1:f}
		\end{align}%
  \end{subequations}%

\subsubsection{Bounds of $\left\Vert\mathbf{M}_4\right\Vert$}
When $\mathbf{C}^t$ is designed using~\eqref{C_r_definition}, $\left\Vert\mathbf{M}_4\right\Vert=\beta_4^J$ due to $\Vert A^J\Vert=\Vert A\Vert^J$ for 2-norm.

When $\mathbf{C}^t$ is designed using~\eqref{CoT_r_definition},  by replacing $(\mathbf{C}^{t}\circ\mathbf{T}^{t})^J$ with $(\mathbf{C}^{t})^J$ in \eqref{m1_lower}, 
the lower bound of $\left\Vert\mathbf{M}_4\right\Vert$
is given by $$\left\Vert\mathbf{M}_4\right\Vert
		\geq \left\Vert\mathbf{C}^{t}\right\Vert^J-\left\Vert\mathbf{C}^{t}\circ\mathbf{T}^{t}\right\Vert^J=\beta_{2}^J-1.$$   
The upper bound of $\left\Vert\mathbf{M}_4\right\Vert$ is given by
\begin{subequations}\small
    \begin{align}
\Vert\mathbf{M}_4\Vert&=\bigg\Vert \frac{\mathbf{1}_N^{\dagger}\mathbf{1}_N}{N}-(\mathbf{C}^{t})^{J}\bigg\Vert \notag\\
\label{bound_of_M4a}
   &\!\!\! \leq\left\Vert (\mathbf{C}^{t})^{J}\!-\!(\mathbf{C}^{t}\!\circ\!\mathbf{T}^{t})^{J}\right\Vert +\bigg\Vert\frac{{\mathbf{1}_N^{\dagger}\mathbf{1}_N}}{N}-(\mathbf{C}^{t}\!\circ \!\mathbf{T}^{t})^J\bigg\Vert\\\label{bound_of_M4b}
    &\!\!\!\leq\frac{\beta_{2}^J-\beta_1^J}{\beta_{2}-\beta_1}\beta_{3}+\beta_{5}^{J},
\end{align}%
\end{subequations}
where \eqref{bound_of_M4a} is based on the triangle inequality of norms, and \eqref{bound_of_M4b} is obtained by plugging \eqref{up-low-of-Ma} and \eqref{lemma-for-P3} into \eqref{bound_of_M4a}.

\subsection{Proof of the upper and lower bounds of $\big\Vert\mathbf{1}_N\mathbf{M}_1\big\Vert$}\label{proof:C}

When $\mathbf{C}^t$ is designed using~\eqref{C_r_definition}, the lower bound of $\big\Vert\mathbf{1}_N\mathbf{M}_1\big\Vert$ is obtained as
\begin{align}\notag&\big\Vert\mathbf{1}_N\mathbf{M}_1\big\Vert=\big\Vert\mathbf{1}_N\big(\frac{\mathbf{1}_N^{\dagger}\mathbf{1}_N}{N}-(\mathbf{C}^{t}\circ\mathbf{T}^{t})^{J}\big)\big\Vert\\\notag&\geq\big\Vert\mathbf{1}_N\big\Vert-\big\Vert\mathbf{1}_N(\mathbf{C}^{t}\circ\mathbf{T}^{t})^{J}\big\Vert\\
 &\geq\sqrt{N}(1-\big\Vert(\mathbf{C}^{t}\circ\mathbf{T}^{t})^{J}\big\Vert)
 \geq\sqrt{N}(1-\big\Vert\mathbf{C}^{t}\circ\mathbf{T}^{t}\big\Vert^{J})\notag\\\notag&=\sqrt{N}(1- \beta_{1}^{J}). \end{align}%
The upper bound of $\big\Vert\mathbf{1}_N\mathbf{M}_1\big\Vert$ is given by
\begin{subequations}\label{proof_1M}
    \begin{align}
    \big\Vert\mathbf{1}_N\mathbf{M}_1\big\Vert
    \leq&\big\Vert\mathbf{1}_N-\beta_6\mathbf{1}_{N}(\mathbf{C}^{t}\circ\mathbf{T}^{t})^{J-1}\big\Vert\label{proof_1Ma}
    \\\leq&\big\Vert\mathbf{1}_N-\beta_6^{J}\cdot\mathbf{1}_N\big\Vert\label{proof 1M c}=
    \sqrt{N}\left(1-\beta_6^{J}\right),
\end{align}%
\end{subequations}
where \eqref{proof_1Ma} is due to
 \begin{align}
\mathbf{1}_N(\mathbf{C}^{t}\circ\mathbf{T}^{t})
    \succeq\,\underset{\forall m}{\min}\;\Big\{{\sum}_{\forall n}c_{n,m}^{t}(1\!-\!\epsilon_{\mathrm{P},n,m}^{t})\Big\}\mathbf{1}_{N},\label{proof 1M d}
 \end{align} 
and \eqref{proof 1M c} is obtained by applying \eqref{proof 1M d} for $J$ times.

When $\mathbf{C}^t$ is designed using~\eqref{CoT_r_definition}, $\left\Vert\mathbf{1}_N \mathbf{M}_1\right\Vert=\big\Vert\mathbf{1}_N-\mathbf{1}_N(\mathbf{C}^{t}\circ\mathbf{T}^{t})^{J}\big\Vert=0$ since $\mathbf{1}_N(\mathbf{C}^{t}\circ\mathbf{T}^{t})^{J}=\mathbf{1}_N$.
 
\subsection{Proof of the upper and lower bounds of $\Vert\mathbf{M}_2\Vert$}\label{proof:D}
We first use mathematical induction to prove that 
\begin{align}
\mathbb{D}_{\mathbf{e}}(\mathbf{w}_{n,j}^{t})\preceq\left(\mathbf{y}_{n,j}^{t}-\mathbb{E}_{\mathbf{e}}(\mathbf{w}_{n,j}^{t})\right)\circ\mathbb{E}_{\mathbf{e}}(\mathbf{w}_{n,j}^{t}).\label{D_w_n_j}    
\end{align}
When $J=1$, \eqref{D_w_n_j} holds:
\begin{align}
     \mathbb{D}_{\mathbf{e}}(\mathbf{w}_{n,1}^{t})\preceq\left(\mathbf{y}_{n,0}^{t}-\mathbb{E}_{\mathbf{e}}(\mathbf{w}_{n,0}^{t})\right)\circ\mathbb{E}_{\mathbf{e}}(\mathbf{w}_{n,0}^{t}).
\end{align}
Suppose that when $J=j-1$, \eqref{D_w_n_j} holds:
\begin{align}
    \mathbb{D}_{\mathbf{e}}(\mathbf{w}_{n,j-1}^{t})\preceq\left(\mathbf{y}_{n,j-1}^{t}-\mathbb{E}_{\mathbf{e}}(\mathbf{w}_{n,j-1}^{t})\right)\circ\mathbb{E}_{\mathbf{e}}(\mathbf{w}_{n,j-1}^{t}).\label{Variance_wnj-1}
\end{align}
Next, we prove \eqref{D_w_n_j} when $J=j$. 
By plugging \eqref{Variance_wnj-1} into \eqref{Variance_wnj} and then simplifying \eqref{Variance_wnj}, it readily follows that
\begin{subequations}\footnotesize
    \begin{align}
  \notag &\!\!\mathbb{D}_{\mathbf{e}}(\mathbf{w}_{n,j}^{t}) \!\! \\&\notag\!\!\preceq\!\!\sum_{\forall m}\!\!c_{n,m}^{t}\!(\mathbf{y}_{m,j\!-\!1}^{t}\!\!-\!\!(1\!\!-\!\!\epsilon_{\mathrm{P},n,m}^{t})\mathbb{E}_{\mathbf{e}}\!(\!\mathbf{w}_{m,j\!-\!1}^{t}))\!\!\circ\!c_{n,m}^{t}(1\!\!-\!\!\epsilon_{\mathrm{P},n,m}^{t})\mathbb{E}_{\mathbf{e}}(\!\mathbf{w}_{m,j\!-\!1}^{t}\!)\!\\\notag
   &\!\!\preceq\!\!\sum_{\forall m}\!\!c_{n,m}^{t}\!\!\left(\mathbf{y}_{m,j\!-\!1}^{t}\!\!-\!\!(1\!\!-\!\!\epsilon_{\mathrm{P},n,m}^{t})\mathbb{E}_{\mathbf{e}}(\mathbf{w}_{m,j\!-\!1}^{t})\!\right)\!\!\circ\!\!\sum_{\forall m}c_{n,m}^{t}(1\!\!-\!\!\epsilon_{\mathrm{P},n,m}^{t})\mathbb{E}_{\mathbf{e}}(\mathbf{w}_{m,j\!-\!1}^{t})\\
    \notag&\!\!=\!\!\left(\mathbf{y}_{n,j}^{t}-\mathbb{E}_{\mathbf{e}}(\mathbf{w}_{n,j}^{t})\right)\circ\mathbb{E}_{\mathbf{e}}(\mathbf{w}_{n,j}^{t}),
\end{align}
\end{subequations}
where $\mathbf{y}_{n,j}^{t}={\sum}_{\forall m}(c_{n,m}^{t})\mathbf{y}_{m,j-1}^{t}$ is the locally aggregated, error-free model of device $n$,
which is the $n$-th element of $(\mathbf{C}^{t})^j\mathbf{W}^{t}_{0}$; and the erroneous locally aggregated model of device $n$ is $\mathbb{E}_{\mathbf{e}}(\mathbf{w}_{n,j}^{t})={\sum}_{\forall m}(c_{n,m}^{t})(1-\epsilon_{\mathrm{P},n,m}^{t})\mathbb{E}_{\mathbf{e}}(\mathbf{w}_{m,j-1}^{t})$.
As a result, for any $J\in \mathbb{N}_+$, it follows that $\mathbb{D}_{\mathbf{e}}(\mathbf{w}_{n,j}^{t})\preceq\left(\mathbf{y}_{n,j}^{t}-\mathbb{E}_{\mathbf{e}}(\mathbf{w}_{n,j}^{t})\right)\circ\mathbb{E}_{\mathbf{e}}(\mathbf{w}_{n,j}^{t})$. 
Applying \eqref{D_J}, we can rewrite the upper bound of $\mathbb{D}_{\mathbf{e}}\!(\mathbf{w}_{n,J}^{t}),\forall n$ in a matrix form:
\begin{align}
\notag&\!\!\mathbf{M}_2(\mathbf{W}_{0}^{t}\circ\mathbf{W}_{0}^{t})  =\mathbb{D}_{\mathbf{e}}\!(\mathbf{W}_{J}^{t}) \!\\
&\preceq\!(\mathbf{C}^{t}\!\circ\!\mathbf{T}^{t})^{\!J}\mathbf{W}_{0}^{t}\!\circ\!\big((\mathbf{C}^{t})^{J}-\!(\mathbf{C}^{t}\!\circ\!\mathbf{T}^{t})^{J}\big)\mathbf{W}_{0}^{t}\notag\\
&\!\!  =\left((\mathbf{C}^{t}\!\circ\!\mathbf{T}^{t})^{J}\!\circ\!\big(\!(\mathbf{C}^{t})^{J}\!-\!(\mathbf{C}^{t}\!\circ\!\mathbf{T}^{t})^{J}\big)\right)(\mathbf{W}_{0}^{t}\circ\mathbf{W}_{0}^{t}).\label{matrix_variance_J}
\end{align}
 As a result, $\mathbf{M}_2\preceq\!(\mathbf{C}^{t}\!\!\circ\!\!\mathbf{T}^{t})^{\!J}\circ\!\!\big(\!(\mathbf{C}^{t})^{J}\!-\!\!(\mathbf{C}^{t}\!\!\circ\!\!\mathbf{T}^{t})^{\!J}\!\big)$. The upper bound of $\Vert\mathbf{M}_2\Vert$ is given by
\begin{subequations}
 \begin{align}
\notag\Vert\mathbf{M}_{2}\Vert\leq&\left\Vert \!(\mathbf{C}^{t} \circ\!\!\mathbf{T}^{t})^{\!J}\circ\!\!\big(\!(\mathbf{C}^{t})^{J}\!-\!\!(\mathbf{C}^{t}\!\!\circ\!\!\mathbf{T}^{t})^{\!J}\!\!\right\Vert \\
\leq &\left\Vert \!(\mathbf{C}^{t}\!\!\circ\!\!\mathbf{T}^{t})^{\!J}\!\right\Vert \left\Vert \big(\!(\mathbf{C}^{t})^{J}\!-\!\!(\mathbf{C}^{t}\!\!\circ\!\!\mathbf{T}^{t})^{\!J}\!\right\Vert \label{M2_a}
\\\leq&\frac{\beta_{2}^J-\beta_1^J}{\beta_{2}-\beta_1}\beta_{3}\beta_1^J,\label{M2_b}
\end{align}
\end{subequations}
 where \eqref{M2_a} is based on $\Vert A\circ B\Vert \leq\Vert A\Vert \Vert B\Vert$, and \eqref{M2_b} is based on \eqref{lemma-for-P3}. 

The lower bound of $ \mathbb{D}_{\mathbf{e}}\!(\mathbf{w}_{n,J}^{t}),\forall n$ can also be written in a matrix form, as given by
\begin{align}
 \notag   \notag&\!\!\!\mathbf{M}_2 (\mathbf{W}_{0}^{t}\circ\mathbf{W}_{0}^{t})=  \mathbb{D}_{\mathbf{e}}(\mathbf{W}_J^{t})\notag\succeq(\mathbf{C}^{t}\!\circ\!\mathbf{C}^{t}\!\circ\!\mathbf{T}^{t}\!\circ\!({\mathbf{1}_{N}^{\dagger}\!\mathbf{1}_{N}}\!-\!\mathbf{T}^{t}))\!\\&\quad\times\!\left(\!(\mathbf{C}^{t}\!\circ\!\mathbf{T}^{t})^{J}\!\circ\!(\mathbf{C}^{t}\circ\mathbf{T}^{t})^{J}\right)(\mathbf{W}_{0}^{t}\circ\mathbf{W}_{0}^{t}),\label{M_2_lower bound}
\end{align}
since the first term in the RHS of \eqref{D_j_1} is always positive.

Since all elements on the RHS of \eqref{M_2_lower bound} are positive, we have
\begin{align}
\notag\mathbf{M}_2\succeq&(\mathbf{C}^{t}\!\circ\!\mathbf{C}^{t}\!\circ\!\mathbf{T}^{t}\!\circ\!({\mathbf{1}_{N}^{\dagger}\mathbf{1}_{N}}\!-\!\mathbf{T}^{t}))\!\left(\!(\mathbf{C}^{t}\!\circ\!\mathbf{T}^{t})^{J}\!\circ\!(\mathbf{C}^{t}\!\circ\!\mathbf{T}^{t})^{J}\right)\\\succeq&\beta_{7}^{2j+1}(1-\beta_{8})(\mathbf{C}^{t}\!\circ\!\mathbf{C}^{t}\!)\!\left(\!(\mathbf{C}^{t}\!)^{J}\!\circ\!(\mathbf{C}^{t})^{J}\right),\label{M_2_lowerbound:a}
\end{align}
where \eqref{M_2_lowerbound:a} is obtained because $\beta_{7}$ and $1-\beta_{8}$ are the minimum of $\mathbf{T}^{t}$ and ${\mathbf{1}_{N}^{\dagger}\!\mathbf{1}_{N}}\!\!-\!\!\mathbf{T}^{t}$, respectively.

The lower bound of the 2-norm of $\mathbf{M}_2$ is given by
\begin{subequations}
    \begin{align}
\Vert\mathbf{M}_2\Vert\!&\geq\!\beta_{7}^{2J\!+\!1}(1\!-\!\beta_{8})\left\Vert(\mathbf{C}^{t}\!\circ\!\mathbf{C}^{t}\!)\!\left(\!(\mathbf{C}^{t}\!)^{J}\!\circ\!(\mathbf{C}^{t})^{J}\right)\right\Vert\label{2_norm_M2_lowerbound:b}\\
&\geq \beta_{7}^{2J+1}(1-\beta_{8})\frac{\Vert\mathbf{C}^{t}\!(\mathbf{C}^{t}\!)^{J}\Vert^{2}}{N}\label{2_norm_M2_lowerbound:c}\\
&=\frac{\beta_{7}^{2J+1}(1-\beta_{8})\beta_{2}^{2J+2}}{N},\label{2_norm_M2_lowerbound:d}
\end{align}
\end{subequations}
where \eqref{2_norm_M2_lowerbound:b} is obtained by taking the 2-norm on both sides of \eqref{M_2_lowerbound:a}. Applying Cauchy's inequality to the corresponding elements of $(\mathbf{C}^{t}\!\circ\!\mathbf{C}^{t}\!)\!\left(\!(\mathbf{C}^{t}\!)^{J}\!\circ\!(\mathbf{C}^{t})^{J}\right)$ and $\mathbf{C}^{t}\!(\mathbf{C}^{t}\!)^{J}$ yields \eqref{2_norm_M2_lowerbound:c}.

\subsection{Proof of \textbf{Lemma \ref{appedix_lemma3}}}\label{proof:lemma5}%
An upper bound on ${\mathbb{E}_{\{ \xi_{n,0}^{t},\cdots,\xi_{n,i\!-\!1}^{t},\forall n\}}}{\sum}_{\forall n}p_{n}\!\left\Vert \bar{\boldsymbol{\omega}}_{i}^{t}\!-\!\boldsymbol{\omega}_{n,i}^{t}\!\right\Vert ^{2}$~is 
 \begin{subequations}
	   \small \begin{align}%
&\!\! {\mathbb{E}_{\{ \xi_{n,0}^{t},\cdots,\xi_{n,i\!-\!1}^{t},\forall n\}}}\!\Big({\sum}_{\forall n}\!p_{n}\!\Vert \bar{\boldsymbol{\omega}}_{i}^{t}\!\!-\!\!\boldsymbol{\omega}_{n,i}^{t}\Vert ^{2}\!\Big)\!\!\notag
\\
&\!\!=\!\!{\sum}_{\forall n}\!p_{n}{\mathbb{E}_{\{ \xi_{n,0}^{t},\cdots,\xi_{n,i\!-\!1}^{t}\}}}\!\!\!\left(\Vert \bar{\boldsymbol{\omega}}_{i}^{t}\!\!-\!\!\bar{\boldsymbol{\omega}}_{i\!-\!1}^{t}\!\!+\!\!\bar{\boldsymbol{\omega}}_{i\!-\!1}^{t}\!\!-\!\!\boldsymbol{\omega}_{n,i}^{t}\Vert ^{2}\right)\notag
\\
&\!\!\leq\!\!{\sum}_{\forall n}\!p_{n}{\mathbb{E}_{\{ \xi_{n,0}^{t},\cdots,\xi_{n,i\!-\!1}^{t}\}}}\left(\Vert \bar{\boldsymbol{\omega}}_{i\!-\!1}^{t}\!-\!\boldsymbol{\omega}_{n,i}^{t}\Vert ^{2}\right)\label{lemma3:b}\\
&\!\!\leq\!\!\sum_{\forall n}\!p_{n}\!(\!1\!\!+\!\!\eta\!)\!\Big(\!{\mathbb{E}_{\{ \xi_{n,0}^{t},\cdots,\xi_{n,i\!-\!2}^{t}\}}}\!\Vert\bar{\boldsymbol{\omega}}_{i\!-\!1}^{t}\!\!-\!\!\boldsymbol{\omega}_{n,i\!-\!1}^{t}\!\Vert^{2}\!\!+\!\!\eta{\mathbb{E}_{\xi_{n,i\!-\!1}^{t}}}\!\!\big(\!\Vert\!\mathbf{g}_{\xi,n,i\!-\!1}^{t}\!\Vert^{2}\!\big)\!\!\Big)\label{lemma3:d}\\
&\!\!\leq\!\!(1\!\!+\!\!\eta){\sum}_{\forall n}p_{n}\Big({\mathbb{E}_{\{ \xi_{n,0}^{t},\cdots,\xi_{n,i\!-\!2}^{t}\}}}\left(\Vert\bar{\boldsymbol{\omega}}_{i\!-\!1}^{t}\!\!-\!\!\boldsymbol{\omega}_{n,i\!-\!\!1}^{t}\Vert^{2}\right)\!\!+\!\!\eta G^{2}\Big)\label{lemma3:e}\\
&\!\!\leq\!\!(1\!+\!\eta)\!^{i}{\sum}_{\forall n}\!p_{n}\!\left\Vert \bar{\boldsymbol{\omega}}_{0}^{t}\!\!-\!\!\boldsymbol{\omega}_{n,0}^{t}\right\Vert^{2}\!\!\!+\!\frac{\!(1\!+\!\eta)^{i+1}\!\!-\!\!(1\!+\!\eta)\!}{\eta} \eta G^{2}\!\!\!\label{lemma3:f}\\
&\!\!=\!\!(1\!\!+\!\!\eta)^{i}\!{\sum}_{\forall n}\!p_{n}\!\Vert\boldsymbol{\varpi}_{n}^{t\!-\!1}\!\!\!-\!\!\!{\sum}_{\forall m}\!\!p_{m}\!\boldsymbol{\varpi}_{m}^{t\!-\!1}\!\Vert^{2}\!\!+\!\!\big[{(1\!\!+\!\!\eta)^{i\!+\!1}\!\!-\!\!(1\!\!+\!\!\eta)\!}\big]\!G^{2}\!,\!\!\!\label{lemma3:g}\!%
		\end{align}	    
		\end{subequations}
where \eqref{lemma3:b} is due to the fact that ${\mathbb{E}\big(\big\Vert X-\mathbb{E}(X)\big\Vert ^{2}\big)=\mathbb{E}\big(\big\Vert X\big\Vert ^{2}\big)-\big\Vert \mathbb{E}(X)\big\Vert ^{2}}\leq\mathbb{E}\big(\big\Vert X\big\Vert ^{2}\big)$ and here $X=\bar{\boldsymbol{\omega}}_{i-1}^{t}-\boldsymbol{\omega}_{n,i}^{t}$;  \eqref{lemma3:d} is obtained by substituting \eqref{epoch_imperfect} and $\mathbf{g}^{t}_{\xi,n,i-1}$ into \eqref{lemma3:b} and then exploiting  {$\Vert\mathbf{a}+\mathbf{b}\Vert^{2}\leq(1+\eta)\Vert\mathbf{a}\Vert^{2}+(1+\frac{1}{\eta})\Vert\mathbf{b}\Vert^{2}$}, here, $\mathbf{a}=\bar{\boldsymbol{\omega}}_{i-1}^{t}-\boldsymbol{\omega}_{n,i-1}^{t}$ and $\mathbf{b}=\eta\mathbf{g}_{\xi,n,i-1}^{t}$; and \eqref{lemma3:e} is based on \textbf{Assumption \ref{assumption}}-3).
By repeating \eqref{lemma3:e} for $i$ times, we obtain \eqref{lemma3:f}. 

Finally, \eqref{lemma3:g} is obtained by substituting \eqref{eq:omega_varpi} into \eqref{lemma3:f}:%
 \begin{align}
\!\!\bar{\boldsymbol{\omega}}_{0}^{t}\!-\!\boldsymbol{\omega}_{n,0}^{t}&\!=\!{\sum}_{\forall m}\!p_{m}\mathbf{w}_{m,J}^{t\!-\!1}\!-\!\mathbf{w}_{n,J}^{t\!-\!1}\!\notag\\&=\!{\sum}_{\forall m}\!p_{m}\!\left(\boldsymbol{\bar{\omega}}_{I}^{t\!-\!1}\!-\!\boldsymbol{\varpi}_{m}^{t\!-\!1}\right)\!-\!\left(\boldsymbol{\bar{\omega}}_{I}^{t\!-\!1}\!-\!\boldsymbol{\varpi}_{n}^{t\!-\!1}\right) \!\notag\\&=\!\boldsymbol{\varpi}_{n}^{t\!-\!1}\!-\!{\sum}_{\forall m}\!p_{m}\!\boldsymbol{\varpi}_{m}^{t\!-\!1}.\label{eq:omega_varpi}
  \end{align}
Here, \eqref{eq:omega_varpi} is based on the definition in \eqref{Eq:consensus error noise} and $\boldsymbol{\omega}_{n,0}^{t}=\mathbf{w}_{n,J}^{t-1}$.

\subsection{Proof of \textbf{Lemma \ref{appedix_lemma2}}}\label{proof:lemma6}
 \subsubsection{Proof of \eqref{delta_1}}

We assume an ideal scenario where the transmissions are error-free and full-batch gradient is considered. After $J$ (error-free) local aggregations, each device could obtain ${\mathbf{x}^{t-1}_{n,J}}$ in the $(t-1)$-th training round, see \eqref{eq: consensus aggregation}, to start its local training in the $t$-th training round. We further introduce an auxiliary variable $\boldsymbol{\bar{\varphi}}^{t}_{1}$, which virtually aggregates the local models that all devices obtain after the first iteration of the $t$-th training round, i.e., 
\begin{align}\label{yr1}
\boldsymbol{\bar{\varphi}}^{t}_{1}={\sum}_{\forall n}p_n\big({\mathbf{x}^{t-1}_{n,J}}-\eta{\nabla F_{n}(\mathbf{x}^{t-1}_{n,J},\xi_{n,0}^{t})}\big).
\end{align}

Using $\boldsymbol{\bar{\varphi}}^{t}_{1}$, we bound the gaps between $\bar{\boldsymbol{\omega}}^{t}_{1}$ and $\boldsymbol{\bar{\varphi}}^{t}_{1}$,  between $\boldsymbol{\bar{\varphi}}^{t}_{1}$ and $\boldsymbol{\omega}^{*}$, and hence between $\bar{\boldsymbol{\omega}}^{t}_{1}$ and $\boldsymbol{\omega}^{*}$, i.e., $\Delta\boldsymbol{\omega}^{t}_{1}$. Specifically,
When $i\!\!=\!\!1$, we expand $\Delta\boldsymbol{\omega}^{t}_{1}$, and then apply the inequality of arithmetic and geometric means, i.e.,
{\small\begin{align}
\!\!\!\!\!\mathbb{E}_{\{ \!\xi_{n,0}^{t},{\forall n}\!\}}\!(\!\Delta\boldsymbol{\omega}_{1}^{t}\!)\!\!\leq\!\!\mathbb{E}_{\{ \!\xi_{n,0}^{t},{\forall n} \!\}}\!\!\left(\!\!(1\!\!+\!\!\frac{1}{\tau_{\epsilon}}\!)\!\Vert\bar{\boldsymbol{\omega}}_{1}^{t}\!\!-\!\!\boldsymbol{\bar{\varphi}}_{1}^{t}\!\Vert^{2}\!\!\!+\!\!(\!1\!\!+\!\!\tau_{\epsilon}\!)\!\Vert\!\boldsymbol{\bar{\varphi}}_{1}^{t}\!\!-\!\!\boldsymbol{\omega}^{*}\!\Vert^{2}\!\!\right)\!,\label{delta_r1}
\end{align}}where $\tau_{\epsilon}\in\mathbb{R}_+$ can be any positive values\footnote{
If the channel is perfect, then $\epsilon_{\mathrm{P},n,m}^t \rightarrow 0,\forall m,n,t$ and $\tau_{\epsilon}\rightarrow0$. Meanwhile, $\mathbb{E}_{\{ \xi_{n,0}^{t},{\forall n}\}}\!\big(\big\Vert \bar{\boldsymbol{\omega}}^{t}_{1}\!\!-\!\!\boldsymbol{\bar{\varphi}}^{t}_{1}\big\Vert ^{2}\big)\rightarrow 0$ since  $\mathbf{w}_{n,J}^{t-1}\rightarrow\mathbf{x}_{n,J}^{t-1}$. As a result, the RHS of \eqref{delta_r1} tends to $\mathbb{E}_{\{\xi_{n,0}^{t},{\forall n}\}}(\left\Vert \boldsymbol{\bar{\varphi}}^{t}_{1}\!\!-\!\!\boldsymbol{\omega}^{*}\right\Vert ^{2})$.}. By plugging \eqref{imprecise x} and \eqref{yr1} in $\mathbb{E}_{\{ \xi_{n,0}^{t}\}_{\forall n}}\!\big(\big\Vert \bar{\boldsymbol{\omega}}^{t}_{1}\!\!-\!\!\boldsymbol{\bar{\varphi}}^{t}_{1}\big\Vert ^{2}\big)$, we have
\begin{subequations}\label{a_bound:eq}%
    \small\begin{align}&\!\!\!\mathbb{E}_{\{ \xi_{n,0}^{t},{\forall n}}\}\!\big(\big\Vert \bar{\boldsymbol{\omega}}^{t}_{1}\!\!-\!\!\boldsymbol{\bar{\varphi}}^{t}_{1}\big\Vert ^{2}\big)\!\notag\\
    &\!\!\!\!\!\!\!\!=\!\!\mathbb{E}_{\{ \xi_{n,0}^{t},{\forall n}\}}\!\big\Vert\!
    {\sum}_{\forall n}\!\!p_{n}\!\big(\!\mathbf{w}_{n,J}^{t\!-\!1}\!\!-\!\!\eta\mathbf{g}_{\xi,n,0}^{t}\!\!-\!\!{\mathbf{x}_{n,J}^{t\!-\!1}}\!\!+\!\!\eta\nabla F_{n}\!\big({\mathbf{x}_{n,J}^{t\!-\!1}},\xi_{n,0}^{t}\big)\!\big\Vert^{2}\!\!\label{a_bound:eqa}\\
&\!\!\!\!\!\!\!\!\leq\!\!(1\!\!+\!\!\tau_{\eta})\!\big\Vert\!{\sum}_{\forall n}p_{n}(\mathbf{w}_{n,J}^{t\!-\!1}-\bar{\boldsymbol{\omega}}_{I}^{t-1})\!\big\Vert^{2}\notag\\
&\!\!\!\!\!+\!(1\!\!+\!\!\frac{1}{\tau_{\eta}})\mathbb{E}_{\{ \xi_{n,0}^{t},{\forall n}\}}\!\big\Vert\!{\sum}_{\forall n}\!\!p_{n}\!\left(\!\eta\mathbf{g}_{\xi,n,0}^{t}\!\!-\!\!\eta\!\nabla F_{n}\!\big({\mathbf{x}_{n,J}^{t-1}},\xi_{n,0}^{t}\big)\!\right)\!\!\big\Vert^{2}\label{a_bound:a}
   \\&\!\!\!\!\!\!\!\!\leq\!\! (\!1\!\!+\!\!\tau_{\eta}\!)\!\big\Vert\! {\sum}_{\forall n}\!p_{n}\boldsymbol{\varpi}_{n}^{t\!-\!1}\!\big\Vert ^{2}\!\!\!+\!\!(1\!\!+\!\!\frac{1}{\tau_{\eta}}\!)\eta^2 L^2\!{\sum}_{\forall n}\!\!p_{n}\!\big\Vert\mathbf{x}_{n,J}^{t\!-\!1}\!\!-\!\!\mathbf{w}_{n,J}^{t\!-\!1}\!\big\Vert^2\!\label{a_bound:eqc},
		\end{align}%
\end{subequations}%
where \eqref{a_bound:a} is based on $\Vert x_1+x_2\Vert^2\leq(1+\tau_{\eta})\Vert x_1\Vert^2+(1+\frac{1}{\tau_{\eta}})\Vert x_2\Vert^2$ and  ${\sum_{\forall n}\!p_{n}{\mathbf{x}_{n,J}^{t-1}}}=\bar{\boldsymbol{\omega}}_{I}^{t-1}$. \eqref{a_bound:eqc} follows the definition in \eqref{Eq:consensus error noise} and \eqref{a_bound:eqc} is based on $L$-smoothness and Cauchy's inequality when $\sum_{\forall n}p_n=1$, i.e., $\big\Vert\!\sum_{\forall n}p_{n}\big(\eta\mathbf{g}_{\xi,n,0}^{t}-\eta\nabla F_{n}({\mathbf{x}_{n,J}^{t-1}},\xi_{n,0}^{t})\big)\big\Vert^{2} \leq \eta^{2}L^{2}{\sum}_{\forall n}p_{n}\big\Vert\mathbf{w}_{n,J}^{t-1}-\mathbf{x}_{n,J}^{t-1}\big\Vert^{2}, \forall \xi_{n,0}^{t}$. Here, $\tau_{\eta}$ and $\eta$ are positively and linearly related. When $\eta=0$ and $\tau_{\eta}=0$, the RHS of \eqref{a_bound:eqc} tends to $\big\Vert {\sum}_{\forall n}p_{n}\boldsymbol{\varpi}_{n}^{t-1}\!\big\Vert ^{2}$.
By replacing the local training inputs $\{\boldsymbol{\omega}_{n,i-1}^t\}$ with $\{\mathbf{x}_{n,J}^{t-1}\}$ in \eqref{delta_ib}, and then replacing the output $\bar{\boldsymbol{\omega}}^{t}_{i}$ with  $\boldsymbol{\bar{\varphi}}^{t}_{1}$, we have%
    \begin{align}%
   \notag  &\mathbb{E}_{\{ \xi_{n,0}^{t}\} _{\forall n}}\!\!\!\left(\Vert \boldsymbol{\bar{\varphi}}^{t}_{1}\!\!-\!\!\boldsymbol{\omega}^{*}\Vert ^{2}\right)\leq \!\left(\!1\!-\!\frac{\mu\eta}{2}\right)\Delta\boldsymbol{\omega}_{I}^{t-1}+\!\!{\sum}_{\forall n}\!\eta^2p_{n}^{2}\sigma^2_n\!\!\\
    &\quad+\left(\!2\eta^{2}L^{2}\!+\!(L\!+\!\mu)\eta\right){\sum}_{\forall n}p_{n}\big(\Vert \bar{\boldsymbol{\omega}}_{I}^{t\!-\!1}\!-\!\mathbf{x}_{n,J}^{t\!-\!1}\Vert ^{2}\!\big),\label{b}\end{align}%
where $\bar{\boldsymbol{\omega}}_{I}^{t-1}={\sum}_{\forall n}p_{n}\mathbf{x}_{n,J}^{t-1}$.
Substituting \eqref{a_bound:eqc} and \eqref{b} into \eqref{delta_r1}, we finally obtain \eqref{delta_1}.

\subsubsection{Proof of \eqref{delta_ib}} 
When $i=2,\cdots,I$, we expand $\Delta\boldsymbol{\omega}^{t}_{i}$ by applying \eqref{imprecise x} and \eqref{delta_r}, i.e.,
\begin{small}%
   { \begin{align}
\Delta&\boldsymbol{\omega}^{t}_{i}\!\!=\!\!\big\Vert\!\sum_{\forall n}\!p_{n}\!(\boldsymbol{\omega}_{n,i\!-\!1}^{t}\!\!-\!\!\eta\mathbf{g}_{n,i\!-\!1}^{t})\!\!-\!\!\boldsymbol{\omega}^{*}\!\big\Vert^{2}\!\!+\!\!\big\Vert\!\sum_{\forall n}\!\!p_{n}\!(\eta\mathbf{g}_{n,i\!-\!1}^{t}\!\!-\!\!\eta\mathbf{g}_{\xi,n,i\!-\!1}^{t})\big\Vert^{2}\!\notag\\\notag
&+\!\!2\Big\langle \!{\sum}_{\forall n}\!p_{n}\!\left(\boldsymbol{\omega}_{n,i\!-\!1}^{t}\!\!-\!\!\eta\mathbf{g}_{n,i\!-\!1}^{t}\right)\!\!-\!\!\boldsymbol{\omega}^*,{\sum}_{\forall n}\!p_{n}\!(\eta\mathbf{g}_{n,i\!-\!1}^{t}\!\!-\!\!\eta\mathbf{g}_{\xi,n,i\!-\!1}^{t})\!\Big\rangle.
		\end{align}}%
\end{small}%
 Based on $\mathbb{E}_{\xi_{n,i-1}^{t}}(\mathbf{g}_{\xi,n,i-1}^{t}-\mathbf{g}_{n,i-1}^{t})=0$ and \eqref{a_bound:SGD}, we have
        \begin{align}
\notag&\mathbb{E}_{\left\{ \xi_{n,i-1}^{t},{\forall n}\right\}}\! ( \Delta\boldsymbol{\omega}^{t}_{i})\! \! =\!\big\Vert{\sum}_{\forall n}p_{n}\left(\boldsymbol{\omega}^{t}_{n,i\!-\!1}\!-\!\eta\mathbf{g}^{t}_{n,i\!-\!1}\right)\!\!-\!\!\boldsymbol{\omega}^{*}\big\Vert^{2}\\\notag&\quad\quad+\mathbb{E}_{\left\{ \xi_{n,i\!-\!1}^{t},{\forall n}\right\}}\Big(\big\Vert{\sum}_{\forall n}p_{n}\left(\eta\mathbf{g}_{n,i\!-\!1}^{t}\!\!-\!\!\eta\mathbf{g}_{\xi,n,i\!-\!1}^{t}\right)\big\Vert^{2}\Big)\\
&\quad\leq\!\big\Vert\!{\sum}_{\forall n}\!p_{n}\!\left(\!\boldsymbol{\omega}^{t}_{n,i\!-\!1}\!\!-\!\!\eta\mathbf{g}^{t}_{n,i\!-\!1}\!\right)\!\!-\!\!\boldsymbol{\omega}^{*}\big\Vert^{2}\!\!\!+\!\!{\sum}_{\forall n}\eta^2p_{n}^{2}\sigma^2_n.\label{delta_i_sgd}%
\end{align}%
   Let $\bar{\mathbf{g}}^{t}_{i-1}={\sum}_{\forall n}p_{n}\mathbf{g}^{t}_{n,i-1}$ define the virtually aggregated full-batch gradient in the $(i-1)$-th iteration of the $t$-th round. 
   The first term on the RHS of \eqref{delta_i_sgd} can be expanded, as given by
\begin{align}
\notag&\big\Vert{\sum}_{\forall n}p_{n}\left(\boldsymbol{\omega}^{t}_{n,i\!-\!1}\!-\!\eta\mathbf{g}^{t}_{n,i\!-\!1}\right)\!-\!\boldsymbol{\omega}^{*}\big\Vert^{2}\\
&=\!\!\left\Vert\bar{\boldsymbol{\omega}}^{t}_{i\!-\!1}\!\! -\!\! \boldsymbol{\omega}^{*}\!\right\Vert ^{2}\!+\!\left\Vert \eta\bar{\mathbf{g}}^{t}_{i\!-\!1}\right\Vert ^{2}\!-\!2\eta\!\left\langle \bar{\boldsymbol{\omega}}^{t}_{i\!-\!1}\!\!-\!\boldsymbol{\omega}^{*}\!,\bar{\mathbf{g}}^{t}_{i\!-\!1}\!\right\rangle .\label{triangle i}
		\end{align}
		The upper bound of $\left\Vert \eta\bar{\mathbf{g}}^{t}_{i-1}\right\Vert ^{2}$ is given by
	\begin{subequations}\small\label{g_bound}
	    \begin{align}
		&\!\!
  \left\Vert \! \eta\bar{\mathbf{g}}_{i\!-\!1}^{t}\right\Vert ^{2}\!\!=\!\!\eta^{2}\!\Big\Vert \!{\sum}_{\forall n}\!p_{n}\!\left(\mathbf{g}_{n,i\!-\!1}^{t}\!\!-\!\!\nabla F_{n}(\bar{\boldsymbol{\omega}}_{i\!-\!1}^{t})\!\right)\!\!+\!\!\nabla F(\bar{\boldsymbol{\omega}}_{i\!-\!1}^{t})\Big\Vert ^{2}\notag\\
  &\leq\!2\eta^{2}\Big\Vert \!{\sum}_{\forall n}p_{n}\!\left(\mathbf{g}_{n,i\!-\!1}^{t}\!\!-\!\!\nabla F_{n}(\bar{\boldsymbol{\omega}}_{i\!-\!1}^{t})\!\right)\Big\Vert ^{2}\!\!\!\!+\!\!2\eta^{2}\left\Vert \nabla F(\bar{\boldsymbol{\omega}}_{i\!-\!1}^{t})\right\Vert ^{2}\label{g_bound:b}\\
 & \leq2\eta^{2}L^{2}\!{\sum}_{\forall n}p_{n}\!\left\Vert \boldsymbol{\omega}_{n,i\!-\!1}^{t}\!\!-\!\!\bar{\boldsymbol{\omega}}_{i\!-\!1}^{t}\right\Vert ^{2}\!\!\!+\!\!4L\eta^{2}\left(F(\bar{\boldsymbol{\omega}}_{i\!-\!1}^{t})\!\!-\!\!F^{*}\right)\label{g_bound:c},
		\end{align}
	\end{subequations}where 
\eqref{g_bound:b} is based on $\Vert x_1+x_2\Vert^2\leq 2\Vert x_1\Vert^2+2\Vert x_2\Vert^2$; and \eqref{g_bound:c} is due to the $L$-smoothness of $F_n$.

Next, we expand $-2\eta\left\langle \bar{\boldsymbol{\omega}}^{t}_{i-1}-\boldsymbol{\omega}^{*},\bar{\mathbf{g}}^{t}_{i-1}\right\rangle$ as
\begin{small} \begin{align}
 -\!2\eta\left\langle \bar{\boldsymbol{\omega}}_{i-\!1}^{t}\!\!-\!\boldsymbol{\omega}^{*}\!,\bar{\mathbf{g}}_{i-\!1}^{t}\right\rangle \!=&\!-\!2\eta{\sum}_{\forall n}p_{n}\left\langle \bar{\boldsymbol{\omega}}_{i-\!1}^{t}\!-\!\boldsymbol{\omega}_{n,i-\!1}^{t},\mathbf{g}_{n,i-\!1}^{t}\right\rangle \!\!\notag\\&\!\!\!\!\!\!\!\!\!\!\!\!\!\!\!\!\!\!\!\!-\!2\eta{\sum}_{\forall n}p_{n}\left\langle \boldsymbol{\omega}_{n,i-\!1}^{t}\!-\!\boldsymbol{\omega}^{*},\mathbf{g}_{n,i-\!1}^{t}\right\rangle .\label{A_2}
\end{align}\end{small}%
Since $F_n$ is $L$-smooth, the first term on the RHS of \eqref{A_2} is upper bounded by%
 \begin{small}\begin{align}
		\!\!\!\!\!\!-\!\!\left\langle \!\bar{\boldsymbol{\omega}}_{i\!-\!1}^{t}\!\!-\!\!\boldsymbol{\omega}_{n,i\!-\!1}^{t},\mathbf{g}_{n,i\!-\!1}^{t}\!\right\rangle \!\!\leq\!\!F_{n}\!(\boldsymbol{\omega}_{n,i\!-\!1}^{t})\!\!-\!\!F_{n}\!(\bar{\boldsymbol{\omega}}_{i\!-\!1}^{t}\!)\!\!+\!\!\frac{L}{2}\!\Vert \bar{\boldsymbol{\omega}}_{i\!-\!1}^{t}\!\!-\!\!\boldsymbol{\omega}_{n,i\!-\!1}^{t}\!\Vert ^{2}\!\!.%
  \end{align}%
  \end{small}%
Since $F_n$ is $\mu$-strongly convex, the second term is upper bounded by
       \begin{small}
            \begin{align}	\!\!\!\!\!-\! \!\left\langle\!\boldsymbol{\omega}^{t}_{n,i\!-\!1}\!\!-\!\!\boldsymbol{\omega}^{*}\!,\mathbf{g}^{t}_{n,i\!-\!1}\!\right\rangle \!\!\leq\!\!-\!F_{n}(\boldsymbol{\omega}^{t}_{n,i\!-\!1})\!\!+\!\!F_{n}(\boldsymbol{\omega}^{*})\!\!-\!\!\frac{\mu}{2}\!\Vert \boldsymbol{\omega}^{t}_{n,i\!-\!1}\!\!-\!\!\boldsymbol{\omega}^{*}\Vert ^{2}.\label{A_2-2}
		\end{align}
       \end{small}
  
  By substituting \eqref{g_bound} -- \eqref{A_2-2} into \eqref{triangle i}, it follows that%
  \begin{small}
      \begin{subequations}
    \begin{align}	\notag&\!\!\!\!\!\!\!\!\big\Vert{\sum}_{\forall n}p_{n}\left(\boldsymbol{\omega}^{t}_{n,i\!-\!1}\!-\!\eta\mathbf{g}^{t}_{n,i\!-\!1}\right)\!-\!\boldsymbol{\omega}^{*}\big\Vert^{2}\\\notag
    \leq&\Delta\boldsymbol{\omega}^{t}_{i\!-\!1}\!\!+\!\!2\eta^{2}L^{2}\!{\sum}_{\forall n}p_{n}\!\left\Vert \boldsymbol{\omega}_{n,i\!-\!1}^{t}\!\!-\!\!\bar{\boldsymbol{\omega}}_{i\!-\!1}^{t}\right\Vert ^{2}\!\!\!+\!\!4L\eta^{2}\left(F(\bar{\boldsymbol{\omega}}_{i\!-\!1}^{t})\!\!-\!\!F^{*}\right)\notag\\\notag&\!\!\!\!-\!2\eta{\sum}_{\forall n}p_{n}\!\left(\!F_{n}\!\left(\boldsymbol{\omega}^{t}_{n,i\!-\!1}\right)\!\!-\!\!F_{n}\!(\boldsymbol{\omega}^{*})\!\!+\!\!\frac{\mu}{2}\!\left\Vert \boldsymbol{\omega}^{t}_{n,i\!-\!1}\!\!-\!\!\boldsymbol{\omega}^{*}\right\Vert ^{2}\!\right)\!\\\label{delta rea}&\!\!\!\!-2\!\eta{\sum}_{\forall n}p_{n}\!\big(F_{n}(\bar{\boldsymbol{\omega}}_{i\!-\!1}^{t}\!)\!\!-\!\!F_{n}(\boldsymbol{\omega}_{n,i\!-\!1}^{t})\!\!-\!\!\frac{L}{2}\left\Vert \bar{\boldsymbol{\omega}}_{i\!-\!1}^{t}\!\!-\!\!\boldsymbol{\omega}_{n,i\!-\!1}^{t}\right\Vert ^{2}\big)
			\\
			=&2\eta^{2}L^{2}\!{\sum}_{\forall n}p_{n}\!\Vert \bar{\boldsymbol{\omega}}_{i\!-\!1}^{t}\!\!-\!\!\boldsymbol{\omega}_{n,i\!-\!1}^{t}\Vert ^{2}\!\!\!+\!\!(4L\eta^{2}-2\eta)\left(F(\bar{\boldsymbol{\omega}}_{i\!-\!1}^{t})\!\!-\!\!F^{*}\right)\!\notag\\&+\!(L\!\!+\!\!\mu)\eta{\sum}_{\forall n}\!p_{n}\!\!\left\Vert \bar{\boldsymbol{\omega}}_{i\!-\!1}^{t}\!\!-\!\!\boldsymbol{\omega}_{n,i\!-\!1}^{t}\right\Vert ^{2}\!\!+\!\left(\!1\!-\!\frac{\mu\eta}{2}\right)\Delta\boldsymbol{\omega}_{i-1}^{t}\label{delta reb}\\
  \leq& (\!2\eta^{2}L^{2}\!\!+\!\!(L\!\!+\!\!\mu)\eta)\!{\sum}_{\forall n}\!p_{n}\!\left\Vert \bar{\boldsymbol{\omega}}_{i\!-\!1}^{t}\!\!-\!\!\boldsymbol{\omega}_{n,i\!-\!1}^{t}\right\Vert ^{2}\!\!\!+\!\!\left(\!1\!\!-\!\!\frac{\mu\eta}{2}\!\right)\!\Delta\boldsymbol{\omega}_{i\!-\!1}^{t},\label{delta rec}
\end{align}%
\end{subequations}%
  \end{small}%
where \eqref{delta rec} is based on $4L\eta^{2}\!-\!2\eta<0$ under the assumption that $0<\eta<\frac{1}{2L}$ and ${\sum}_{\forall n}p_{n}\left(F_{n}(\bar{\boldsymbol{\omega}}^{t}_{i-1})-F_{n}^{*}\right)>0$ with $F_n^*=\underset{\boldsymbol{\omega}}{\min} F_n(\boldsymbol{\omega})\leq F_{n}(\bar{\boldsymbol{\omega}}_{i-1}^{t})$.

Substituting \eqref{delta rec} into \eqref{delta_i_sgd} yields \eqref{delta_ib}.

	\bibliographystyle{IEEEtran}
	\bibliography{ciations}

\begin{thebibliography}{10}
\providecommand{\url}[1]{#1}
\csname url@samestyle\endcsname
\providecommand{\newblock}{\relax}
\providecommand{\bibinfo}[2]{#2}
\providecommand{\BIBentrySTDinterwordspacing}{\spaceskip=0pt\relax}
\providecommand{\BIBentryALTinterwordstretchfactor}{4}
\providecommand{\BIBentryALTinterwordspacing}{\spaceskip=\fontdimen2\font plus
\BIBentryALTinterwordstretchfactor\fontdimen3\font minus
  \fontdimen4\font\relax}
\providecommand{\BIBforeignlanguage}[2]{{%
\expandafter\ifx\csname l@#1\endcsname\relax
\typeout{** WARNING: IEEEtran.bst: No hyphenation pattern has been}%
\typeout{** loaded for the language `#1'. Using the pattern for}%
\typeout{** the default language instead.}%
\else
\language=\csname l@#1\endcsname
\fi
#2}}
\providecommand{\BIBdecl}{\relax}
\BIBdecl

\bibitem{9562559}
M.~Chen, D.~Gündüz, K.~Huang \emph{et~al.}, ``Distributed learning in
  wireless networks: Recent progress and future challenges,'' \emph{IEEE J.
  Sel. Areas Commun.}, vol.~39, no.~12, pp. 3579--3605, Dec. 2021.

\bibitem{1638541}
S.~Boyd, A.~Ghosh, B.~Prabhakar \emph{et~al.}, ``Randomized gossip
  algorithms,'' \emph{IEEE Trans. Inf. Theory.}, vol.~52, no.~6, pp.
  2508--2530, June 2006.

\bibitem{10038786}
H.~Gao, M.~T. Thai, and J.~Wu, ``When decentralized optimization meets
  federated learning,'' \emph{IEEE Netw.}, pp. 1--7, 2023.

\bibitem{9838880}
Y.~Gou, R.~Wang, Z.~Li \emph{et~al.}, ``Clustered hierarchical distributed
  federated learning,'' in \emph{Proc. IEEE Int. Conf. Commun. (ICC)}, Seoul,
  South Korea, May 2022, pp. 177--182.

\bibitem{9772065}
M.~N.~H. Nguyen, S.~R. Pandey, T.~N. Dang \emph{et~al.}, ``Self-organizing
  democratized learning: Toward large-scale distributed learning systems,''
  \emph{IEEE Trans. Neural Netw. Learn. Syst.}, pp. 1--13, 2022.

\bibitem{1238221}
D.~Kempe, A.~Dobra, and J.~Gehrke, ``Gossip-based computation of aggregate
  information,'' in \emph{Proc. 44th Annu. IEEE Symp. Found. Comput. Sci.
  (FOCS)}, Cambridge, MA, Oct. 2003, pp. 482--491.

\bibitem{XIAO200465}
L.~Xiao and S.~Boyd, ``Fast linear iterations for distributed averaging,''
  \emph{Syst. Control. Lett.}, vol.~53, no.~1, pp. 65--78, 2004.

\bibitem{9563232}
H.~Xing, O.~Simeone, and S.~Bi, ``Federated learning over wireless
  device-to-device networks: Algorithms and convergence analysis,'' \emph{IEEE
  J. Sel. Areas Commun.}, vol.~39, no.~12, pp. 3723--3741, 2021.

\bibitem{9772390}
Z.~Jiang, G.~Yu, Y.~Cai \emph{et~al.}, ``Decentralized edge learning via
  unreliable device-to-device communications,'' \emph{IEEE Trans. Wireless
  Commun.}, vol.~21, no.~11, pp. 9041--9055, Nov. 2022.

\bibitem{9726793}
Z.~Li, Y.~Zhou, D.~Wu \emph{et~al.}, ``Fairness-aware federated learning with
  unreliable links in resource-constrained internet of things,'' \emph{IEEE
  Internet Things J.}, vol.~9, no.~18, pp. 17\,359--17\,371, Sept. 2022.

\bibitem{9435350}
M.~Salehi and E.~Hossain, ``Federated learning in unreliable and
  resource-constrained cellular wireless networks,'' \emph{IEEE Trans.
  Commun.}, vol.~69, no.~8, pp. 5136--5151, Aug. 2021.

\bibitem{9515709}
M.~M. Amiri, D.~Gündüz, S.~R. Kulkarni \emph{et~al.}, ``Convergence of
  federated learning over a noisy downlink,'' \emph{IEEE Trans. Wireless
  Commun.}, vol.~21, no.~3, pp. 1422--1437, Mar. 2022.

\bibitem{10032555}
X.~Su, Y.~Zhou, L.~Cui \emph{et~al.}, ``On model transmission strategies in
  federated learning with lossy communications,'' \emph{IEEE Trans. Parallel
  Distrib. Syst.}, vol.~34, no.~4, pp. 1173--1185, Apr. 2023.

\bibitem{9709678}
Q.~Liu, B.~Yang, Z.~Wang \emph{et~al.}, ``Asynchronous decentralized federated
  learning for collaborative fault diagnosis of {PV} stations,'' \emph{IEEE
  Trans. Netw. Sci. Eng.}, vol.~9, no.~3, pp. 1680--1696, May-June 2022.

\bibitem{bornstein2022swift}
M.~Bornstein, T.~Rabbani, E.~Z. Wang \emph{et~al.}, ``{SWIFT}: Rapid
  decentralized federated learning via wait-free model communication,'' in
  \emph{Proc. Workshop on Federated Learning (FL-NeurIPS 2022)}, New Orleans,
  LA, USA, Dec. 2022.

\bibitem{8950073}
S.~Savazzi, M.~Nicoli, and V.~Rampa, ``Federated learning with cooperating
  devices: A consensus approach for massive iot networks,'' \emph{IEEE Internet
  Things J.}, vol.~7, no.~5, pp. 4641--4654, May 2020.

\bibitem{9838891}
E.~Jeong, M.~Zecchin, and M.~Kountouris, ``Asynchronous decentralized learning
  over unreliable wireless networks,'' in \emph{Proc. IEEE Int. Conf. Commun.
  (ICC)}, Seoul, South Korea, May 2022, pp. 607--612.

\bibitem{9660377}
S.~M. Shah and V.~K.~N. Lau, ``Model compression for communication efficient
  federated learning,'' \emph{IEEE Trans. Neural Netw. Learn. Syst.}, pp.
  1--15, 2021.

\bibitem{10044204}
G.~Ayache, V.~Dassari, and S.~E. Rouayheb, ``Walk for learning: A random walk
  approach for federated learning from heterogeneous data,'' \emph{IEEE J. Sel.
  Areas Commun.}, vol.~41, no.~4, pp. 929--940, Apr. 2023.

\bibitem{pmlr-v54-mcmahan17a}
B.~McMahan, E.~Moore, D.~Ramage \emph{et~al.}, ``Communication-efficient
  learning of deep networks from decentralized data,'' in \emph{Proc. Int.
  Conf. Artif. Intell. Stat. (AISTATS)}, vol.~54, Apr. 2017, pp. 1273--1282.

\bibitem{DBLP}
T.~Li, A.~K. Sahu, M.~Zaheer \emph{et~al.}, ``Federated optimization in
  heterogeneous networks,'' in \emph{Proc. Mach. Learn. Syst. (MLSys)}, Austin,
  USA, Mar. 2020.

\bibitem{mirsky1963results}
L.~Mirsky, ``Results and problems in the theory of doubly-stochastic
  matrices,'' \emph{Probability Theory and Related Fields}, vol.~1, no.~4, pp.
  319--334, 1963.

\bibitem{9264742}
Z.~Yang, M.~Chen, W.~Saad \emph{et~al.}, ``Energy efficient federated learning
  over wireless communication networks,'' \emph{IEEE Trans. Wireless Commun.},
  vol.~20, no.~3, pp. 1935--1949, Mar. 2021.

\bibitem{boyd2004convex}
S.~Boyd and L.~Vandenberghe, \emph{Convex Optimization}.\hskip 1em plus 0.5em
  minus 0.4em\relax Cambridge University Press, 2004.

\bibitem{9770266}
T.~Gafni, N.~Shlezinger, K.~Cohen, Y.~C. Eldar, and H.~V. Poor, ``Federated
  learning: A signal processing perspective,'' \emph{IEEE Signal Process.
  Mag.}, vol.~39, no.~3, pp. 14--41, May 2022.

\bibitem{pmlr-v130-ruan21a}
Y.~Ruan, X.~Zhang, S.~Liang \emph{et~al.}, ``Towards flexible device
  participation in federated learning,'' in \emph{Proc. Int. Conf. Artif.
  Intell. Stat. (AISTATS)}, Apr. 2021, pp. 3403--3411.

\bibitem{stich2018local}
S.~U. Stich, ``Local {SGD} converges fast and communicates little,'' in
  \emph{International Conference on Learning Representations}, 2019.

\bibitem{pmlr-v119-koloskova20a}
A.~Koloskova, N.~Loizou, S.~Boreiri, M.~Jaggi, and S.~Stich, ``A unified theory
  of decentralized {SGD} with changing topology and local updates,'' in
  \emph{37th PMLR}, vol. 119, 13--18 Jul 2020, pp. 5381--5393.

\bibitem{8664630}
S.~Wang, T.~Tuor, T.~Salonidis \emph{et~al.}, ``Adaptive federated learning in
  resource constrained edge computing systems,'' \emph{IEEE J. Sel. Areas
  Commun.}, vol.~37, no.~6, pp. 1205--1221, June 2019.

\bibitem{penrose2003random}
M.~Penrose, \emph{Random {G}eometric {G}raphs}.\hskip 1em plus 0.5em minus
  0.4em\relax Oxford University Press, Oxford, UK, 2003, vol.~5.

\bibitem{rappaport2010wireless}
T.~S. Rappaport, \emph{Wireless Communications: Principles and Practice,
  2/E}.\hskip 1em plus 0.5em minus 0.4em\relax Pearson Education India, 2010.

\bibitem{goldsmith_2005}
A.~Goldsmith, \emph{Wireless Communications}.\hskip 1em plus 0.5em minus
  0.4em\relax Cambridge University Press, 2005.

\end{thebibliography}
\end{document}